\newtheorem{lemma}{Lemma}
\newtheorem{definition}{Definition}
\newtheorem{theorem}{Theorem}
\newtheorem{remark}{Remark}
\newtheorem{proposition}{Proposition}
\newtheorem{conjecture}{Conjecture}
\newtheorem{corollary}{Corollary}
\newtheorem{assumption}{Assumption}
\DeclareRobustCommand{\legendline}[1]{\hspace{-2pt}
\tikz[#1,line width=1.5pt,baseline=-0.5ex]{\draw (0,0) -- (.35,0);}
\hspace{-2pt}}
\definecolor{mblue}{rgb}{0,0.4470,0.7410}
\definecolor{morange}{rgb}{0.8500,0.3250,0.0980}
\definecolor{myellow}{rgb}{0.9290,0.6940,0.1250}
\definecolor{mpurple}{rgb}{0.4940,0.1840,0.5560}
\definecolor{mgreen}{rgb}{0.4660,0.6740,0.1880}
\definecolor{mcyan}{rgb}{0.3010,0.7450,0.9330}
\definecolor{mred}{rgb}{0.6350,0.0780,0.1840}
\definecolor{mgreenblue}{rgb}{0.0,1.0,0.5}
\definecolor{parulablue}{rgb}{0.2431,0.1490,0.6588}
\definecolor{parulalblue}{RGB}{39,151,235}
\definecolor{parulagreen}{RGB}{129,204,89}
\definecolor{parulayellow}{RGB}{249,251,21}
\definecolor{wintergreen}{cmyk}{0.61,0,0.74,0}
\newcommand{\e}[1]{\ensuremath{\cdot 10^{#1}}\xspace}
\DeclareFontFamily{OT1}{pzc}{}
\DeclareFontShape{OT1}{pzc}{m}{it}{ <-> s*[1.1] pzcmi7t }{}
\DeclareMathAlphabet{\mathpzc}{OT1}{pzc}{m}{it}
\newcommand{\mc}[1]{\mathcal{#1}}
\newcommand{\mr}[1]{\mathrm{#1}}
\newcommand{\mf}[1]{\mathfrak{#1}}
\newcommand{\mb}[1]{\mathbb{#1}}
\newcommand{\ms}[1]{\mathscr{#1}}
\newcommand{\mpz}[1]{\mathpzc{#1}}
\newcommand{\Partial}[2]{\frac{\partial #1}{\partial #2}}
\newcommand{\C}[1]{\ensuremath{\mc{C}_{#1}}\xspace}   
\newcommand{\lp}[1]{\ensuremath{\mathcal{L}_{#1}}\xspace}
\newcommand{\ltwo}{\lp{2}}
\newcommand{\linf}{\lp{\infty}}
\newcommand{\lpe}[1]{\ensuremath{\mathcal{L}_{#1\mr{e}}}\xspace}
\newcommand{\ltwoe}{\lpe{2}} 
\newcommand{\lsp}[1]{\ensuremath{\mathcal{L}_{\mr{s}#1}}\xspace}    
\newcommand{\lstwo}{\lsp{2}}
\newcommand{\lsinf}{\lsp{\infty}}
\newcommand{\posClassSym}{\mc{Q}}
\newcommand{\posClass}[1]{\posClassSym_{#1}}
\newcommand{\reals}{{\mb{R}}}
\newcommand{\nnreals}{{\reals_0^+}}
\newcommand{\sym}{\ensuremath{\mb{S}}\xspace}
\DeclareMathOperator{\col}{col}
\DeclareMathOperator{\diag}{diag}
\DeclareMathOperator{\proj}{\pi}
\newcommand{\norm}[1]{\left\lVert#1\right\rVert}
\newcommand{\st}{x}		          
\newcommand{\stSet}{\mc{X}}      
\newcommand{\stSize}{{n_\mr{x}}}   
\newcommand{\stIc}{\st_0}
\newcommand{\ip}{u}		
\newcommand{\ipSet}{\mc{U}}
\newcommand{\ipSize}{{n_\mr{u}}}
\newcommand{\op}{y}		
\newcommand{\opSet}{\mc{Y}}
\newcommand{\opSize}{{n_\mr{y}}}
\newcommand{\gd}{w}   
\newcommand{\gdSet}{\mc{W}}
\newcommand{\gdSize}{{n_\mr{w}}}
\newcommand{\gp}{z}   
\newcommand{\gpSet}{\mc{Z}}
\newcommand{\gpSize}{{n_\mr{z}}}
\newcommand{\stEq}{\st_*}
\newcommand{\gdEq}{\gd_*}
\newcommand{\gpEq}{\gp_*}
\newcommand{\eqSet}{\ms{E}}
\newcommand{\stSetEq}{\ms{X}}
\newcommand{\gdSetEq}{\ms{W}}
\newcommand{\gpSetEq}{\ms{Z}}
\newcommand{\eqMap}{\kappa}
\newcommand{\stk}{\st_\mr{k}}	  
\newcommand{\ipk}{\ip_\mr{k}}
\newcommand{\opk}{\op_\mr{k}}
\newcommand{\stkMap}{f_\mr{k}}
\newcommand{\opkMap}{h_\mr{k}}
\newcommand{\stkSet}{\mc{X}_\mr{k}}
\newcommand{\stkSize}{{n_\mr{x_k}}}
\newcommand{\ipkSize}{{n_\mr{u_k}}}
\newcommand{\opkSize}{{n_\mr{y_k}}}
\newcommand{\stclSet}{\mc{X}_\mr{cl}}
\newcommand{\stclIc}{x_\mr{cl,0}} 
\newcommand{\rf}{r}  
\newcommand{\dist}{d} 
\newcommand{\weight}{W}   
\newcommand{\weightint}{M}
\newcommand{\filter}{F}
\newcommand{\stf}[1]{\st_{\mr{F},#1}}
\newcommand{\ipf}[1]{\ip_{\mr{F},#1}}
\newcommand{\opf}[1]{\op_{\mr{F},#1}}
\newcommand{\stfSize}[1]{{n_{\mr{F},#1}}}
\newcommand{\perf}{\gamma}  
\newcommand{\icfunShift}{\zeta_\mr{s}}
\newcommand{\genplant}{P}
\newcommand{\controller}{K}
\newcommand{\plant}{G}
\newcommand{\velogenplant}{P_\mr{v}}
\newcommand{\velogenplantLPV}{P_\mr{vpv}}
\newcommand{\velocontroller}{K_\mr{v}}
\newcommand{\stp}{\st}
\newcommand{\ipp}{\ip}
\newcommand{\opp}{\op}
\newcommand{\Bp}{\B}  
\newcommand{\Binv}[1]{\B_{#1}}
\newcommand{\stpIc}{\stIc}
\newcommand{\stpSet}{\stSet}
\newcommand{\stpSetLPV}{\stSetLPV}
\newcommand{\ippSet}{\ipSet}
\newcommand{\ippSetLPV}{\ipSetLPV}
\newcommand{\oppSet}{\opSet}
\newcommand{\stpSize}{\stSize}
\newcommand{\ippSize}{\ipSize}
\newcommand{\oppSize}{\opSize}
\newcommand{\stMap}{f}
\newcommand{\opMap}{h}
\newcommand{\opgpMap}{h_\mr{\gp}}
\newcommand{\opopMap}{h_\mr{\op}}
\newcommand{\B}{\mf{B}^\mr{NL}}  
\newcommand{\Bw}{\B_\mr{w}} 
\newcommand{\Bc}{\mf{B}^\mr{C}} 
\newcommand{\Bcw}{\Bc_\mr{\gd}} 
\newcommand{\var}{\lambda}
\newcommand{\intA}{\bar A}  
\newcommand{\sttran}{\phi_\mr{\st}}
\newcommand{\stdot}{\st_\mr{v}}
\newcommand{\ltiB}{B}
\newcommand{\ltiC}{C}
\newcommand{\ltiBw}{B_\mr{\gd}} 
\newcommand{\ltiBu}{B_\mr{\ip}}
\newcommand{\ltiCy}{C_\mr{\op}}
\newcommand{\ltiDzw}{D_\mr{\gp\gd}}
\newcommand{\ltiDyw}{D_\mr{\op\gd}}
\newcommand{\lpvA}{A}
\newcommand{\lpvB}{B}
\newcommand{\lpvC}{C}
\newcommand{\lpvD}{D}
\newcommand{\lpvBu}{B_\mr{\ip}}
\newcommand{\lpvCz}{C_\mr{\gp}}
\newcommand{\lpvCy}{C_\mr{\op}}
\newcommand{\lpvDzu}{D_\mr{\gp\ip}}
\newcommand{\lpvDyu}{D_\mr{\op\ip}}
\newcommand{\lpvAk}{A_\mr{k}} 
\newcommand{\lpvBk}{B_\mr{k}}
\newcommand{\lpvCk}{C_\mr{k}}
\newcommand{\lpvDk}{D_\mr{k}}
\newcommand{\sch}{p}		
\newcommand{\schSet}{\mc{P}}
\newcommand{\schSize}{{n_\mr{p}}}
\newcommand{\schMap}{\eta}
\newcommand{\Blpv}[1][\sch]{\mf{B}^\mr{VPV}_{#1}}  
\newcommand{\Blpvfull}{\mf{B}^\mr{VPV}}  
\newcommand{\stSetLPV}{\mpz{X}}
\newcommand{\gdSetLPV}{\mpz{W}}
\newcommand{\ipSetLPV}{\mpz{U}}
\newcommand{\storquad}{M}  
\newcommand{\lpvClass}{\mathfrak{A}}
\newcommand{\dt}{\xi}    
\newcommand{\tSet}{\mc{T}} 
\newcommand{\ic}[2]{\mc{F}_\mr{l}(#1,#2)} 
\newcommand{\lyapfun}{V}
\newcommand{\supfun}{s}
\newcommand{\storfun}{\mc{V}}
\newcommand{\supQ}{Q}
\newcommand{\supR}{R}
\newcommand{\supS}{S}
\newcommand{\qsr}{(\supQ,\supS,\supR)}
\newcommand{\qsrMat}{\begin{bmatrix} \supQ  & \supS\\ \star & \supR \end{bmatrix}}
\newcommand{\storfunIncr}{\storfun_\mr{i}}
\newcommand{\otherTraj}[1]{\expandafter\tilde #1}
\newcommand{\sto}{\otherTraj{\st}}
\newcommand{\gdo}{\otherTraj{\gd}}
\newcommand{\gpo}{\otherTraj{\gp}}
\newcommand{\parTraj}[1]{\expandafter\bar #1}
\newcommand{\dotB}{\partial}
\newcommand{\dif}{\partial}
\newcommand{\Bv}{\mathfrak{B}^\mr{VF}}
\newcommand{\Bvw}{\Bv_\mr{\gd}}
\newcommand{\Bvset}[1]{\Bv_\mr{#1}}
\newcommand{\lyapfunShift}{\lyapfun_\mr{s}}
\newcommand{\supfunShift}{\supfun_\mr{s}}
\newcommand{\storfunShift}{\storfun_\mr{s}}
\newcommand{\lyapfunVelo}{\lyapfun_\mr{v}}
\newcommand{\supfunVelo}{\supfun_\mr{v}}
\newcommand{\storfunVelo}{\storfun_\mr{v}}
\newcommand{\velA}{A_\mr{v}}
\newcommand{\velB}{B_\mr{v}}
\newcommand{\velC}{C_\mr{v}}
\newcommand{\velD}{D_\mr{v}}
\newcommand{\velBu}{B_\mr{v,\ip}}
\newcommand{\velCz}{C_\mr{v,\gp}}
\newcommand{\velCy}{C_\mr{v,\op}}
\newcommand{\velDzu}{D_\mr{v,\gp\ip}}
\newcommand{\velDyu}{D_\mr{v,\op\ip}}
\newcommand{\gddot}{\gd_\mr{v}}
\newcommand{\ipdot}{\ip_\mr{v}}
\newcommand{\opdot}{\op_\mr{v}}
\newcommand{\gpdot}{\gp_\mr{v}}
\newcommand{\ippdot}{\ipdot}
\newcommand{\oppdot}{\opdot}
\newcommand{\stpdot}{\st_\mr{v}}
\newcommand{\ipkdot}{\ip_\mr{v,k}}
\newcommand{\opkdot}{\op_\mr{v,k}}
\newcommand{\stkdot}{\st_\mr{v,k}}
\newcommand{\exoA}{A_\mr{\gd}}
\newcommand{\exoBvr}{\mf{W}}
\newcommand{\BcXU}{\Bc_{\stSetLPV\ippSetLPV}}
\newcommand{\BvXU}{\Bv_{\stSetLPV\ippSetLPV}}
\newcommand{\stkus}{\breve{x}_\mr{k}}
\newcommand{\dotstkus}{\dot{\breve{x}}_\mr{k}}
\newcommand{\stkA}{\tilde{x}_\mr{k}}
\newcommand{\dotstkA}{\dot{\tilde{x}}_\mr{k}}
\newcommand{\stkB}{\hat{x}_\mr{k}}
\newcommand{\dotstkB}{\dot{\hat{x}}_\mr{k}}
\newcommand{\lpvAkus}{\breve{A}_\mr{k}}
\newcommand{\lpvBkus}{\breve{B}_\mr{k}}
\newcommand{\lpvCkus}{\breve{C}_\mr{k}}
\newcommand{\lpvDkus}{\breve{D}_\mr{k}}
\newcommand{\veloform}[1]{#1_\mr{v}}
\def\extendedversion{1} 
\newcommand{\extver}[2]{%
  \ifx\extendedversion\undefined%
	#2\xspace
  \else%
   #1\xspace
  \fi%
}
\newcommand{\customurl}[2]{\href{#1}{\path{#2}}}
\newcommand{\proofsection}[1]{\subsection{Proof of \cref{#1}}}
\g@addto@macro\normalsize{%
  \setlength\abovedisplayskip{.3em}
  \setlength\belowdisplayskip{.3em}
  \setlength\abovedisplayshortskip{.3em}%
  \setlength\belowdisplayshortskip{.3em}%
}
\let\olddot\dot 
\gdef\dot{\expandafter\olddot}
\let\oldddot\ddot 
\gdef\ddot{\expandafter\oldddot}
\Crefname{conjecture}{Conjecture}{Conjectures}
\def\BibTeX{{\rm B\kern-.05em{\sc i\kern-.025em b}\kern-.08em
    T\kern-.1667em\lower.7ex\hbox{E}\kern-.125emX}}
\begin{document}
\title{Equilibrium-Independent Control of Continuous-Time Nonlinear Systems via the LPV Framework\extver{\\\vspace{.3em}\normalsize{(Extended Version)}\vspace{-.5em}}{}}
\author{Patrick J. W. Koelewijn, Siep Weiland, and Roland T\'oth, \IEEEmembership{Senior Member, IEEE}\extver{\vspace{-1em}}{}
\thanks{This work has received funding from the European Research Council (ERC) under the European Union Horizon 2020 research and innovation programme (grant agreement No 714663) and was also supported by the European Union within the framework of the National Laboratory for Autonomous Systems (RRF-2.3.1-21-2022-00002).}
\thanks{P.J.W. Koelewijn, S. Weiland, and R. T\'oth are with the Control Systems Group, Department of Electrical Engineering, Eindhoven University of Technology, Eindhoven, The Netherlands (e-mails: p.j.w.koelewijn@tue.nl, s.weiland@tue.nl, r.toth@tue.nl). R. T\'oth is also with the Systems and Control  Laboratory, Institute for Computer Science and Control, Budapest, Hungary.}}

\maketitle

\begin{abstract}
In this paper, we consider the analysis and control of continuous-time nonlinear systems to ensure universal shifted stability and performance, i.e., stability and performance w.r.t. each forced equilibrium point of the system. This ``equilibrium-free'' concept is especially beneficial for control problems that require the tracking of setpoints and rejection of persistent disturbances, such as input loads.  In this paper, we show how the velocity form, i.e., the time-differentiated dynamics of the system, plays a crucial role in characterizing these properties and how the analysis of it can be solved by the application of Linear Parameter-Varying (LPV) methods in a computationally efficient manner. Furthermore, by leveraging the properties of the velocity form and the LPV framework, a novel controller synthesis method is presented which ensures closed-loop universal shifted stability and performance. The proposed controller design is verified \extver{in a simulation study and also}{}experimentally on a real system. Additionally, we compare the proposed method to a standard LPV control design, demonstrating the improved stability and performance guarantees of the new approach.
\end{abstract}

\begin{IEEEkeywords}
Linear parameter-varying systems, Equilibrium independent dissipativity, Stability of nonlinear systems, Output feedback.
\end{IEEEkeywords}

\section{Introduction}\label{4_sec:introduction}
\IEEEPARstart{T}{he} analysis and control of nonlinear systems becomes increasingly important as we push for progressively higher performance requirements and as systems become more and more complex. However, in the industry, \emph{Linear Time-Invariant} (LTI) methods are still widely used, as for LTI systems there is an extensive, systematic, and computationally efficient framework for analysis and control design that allows to ensure and shape stability and performance of the closed-loop system. While there exist a multitude of analysis and controller synthesis methods for nonlinear systems, so far, a systematic framework for analysis and controller synthesis for nonlinear systems has not been introduced like is available for LTI systems. While approaches such as the \emph{Linear Parameter-Varying} (LPV) framework have aimed to achieve this, they are unable to do so, as the stability and performance guarantees of the current state-of-the-art LPV methods are dependent on the choice of equilibrium point \cite{Koelewijn2020}.

Standard stability and dissipativity are sufficient if we only want to analyze these properties w.r.t. a single (forced) equilibrium point of the system. However, they become troublesome to use if we want to ensure them w.r.t. all (forced) equilibrium points of the system. This is especially relevant in cases when one wants to track constant references and/or reject constant (unknown) disturbances. Consequently, stability and performance concepts which are not dependent on equilibrium points of the system are highly important to arrive at a systematic framework for nonlinear analysis and controller synthesis. In the literature,  notions such as so-called \emph{shifted} stability/dissipativity \cite{Van-der-Schaft2017} and \emph{equilibrium independent} stability/dissipativity \cite{Jayawardhana2006,Hines2011,Simpson-Porco2019} have been introduced, whereby stability/dissipativity w.r.t. a particular (non-zero) (forced) equilibrium point is ensured, or w.r.t. all forced equilibrium points of the system, respectively. In literature, equilibrium independent dissipativity has also been referred to as \emph{constant incremental dissipativity} \cite{Jayawardhana2006}. In order to not confuse this notion of stability/dissipativity with other notions, we will refer to it as \emph{Universal Shifted} (US) stability/dissipativity.

In literature, the analysis of non-zero equilibrium points of continuous-time nonlinear systems has also been investigated through its time-differentiated dynamics. In gain-scheduling and the LPV framework, this has it roots in the so-called velocity-based scheduling techniques \cite{Leith1998a,Leith1999,Kaminer1995,Toth2010}, which generally have improved performance compared to normal linearization based methods. However, these results are based on the argument that locally around an equilibrium point, the time-differentiated dynamics coincide with the linearization of the nonlinear system at the equilibrium point. Consequently, these results are only able to provide local guarantees in a neighborhood around the equilibrium points, which severely hampers their viability. From a nonlinear perspective, the time-differentiated dynamics also connect to the so-called Krasovskii method for the construction of a Lyapunov function to show stability \cite{Khalil2002}. This has also been explored more recently in connection to non-zero equilibrium point stability and/or performance properties, in  \cite{Kosaraju2019,Kawano2021,Schweidel2022}. The work in \cite{Kawano2021} uses time-differentiated properties to ensure US stability and US passivity of the system. However, while the overall results are quite interesting, the output map of the nonlinear state-space representation of the underlying system is required to be a particular form to prove their implications and results have only been obtained for passivity. The work in \cite{Schweidel2022} focuses on the analysis of the network interconnections of systems, where it is shown that (unique) equilibrium points of interconnections of velocity dissipative systems are stable. However, no connection to performance is made in this work or how these results could be used for controller design. There exist controller synthesis methods that ensure even stronger notions of stability and performance, such as incremental stability and performance \cite{Koelewijn2020b}, which can also be used to ensure US stability and performance. However, the controller realization in \cite{Koelewijn2020b} is complicated and might very demanding in specific applications, as it requires the target trajectory to be known in advance. Therefore, one could wonder what is really needed to ensure \emph{Universal Shifted Stability} (USS) and \emph{Universal Shifted Performance} (USP). Consequently, as contributions of this paper, we will
\begin{enumerate}[label={\bfseries C\arabic*:},ref={C\arabic*}]
\item Show how analysis of the time-differentiated dynamics can be used to imply both USS and USP of continuous-time nonlinear systems. This generalizes the existing USS results which use the Krasovskii condition. 
	\label[contribution]{con:velo}
\item Show how the analysis of the time-differentiated dynamics can be performed through the LPV framework to systematically and computationally efficiently analyze USS and USP.\label[contribution]{con:lpv}
\item Present a novel controller realization scheme that enables synthesis and implementation of nonlinear controllers in practice such that the closed-loop has guaranteed USS and USP. \label[contribution]{con:syn}
\end{enumerate}
Compared to the incremental controller design in \cite{Koelewijn2020b}, the novel realization scheme we propose in this paper results in a controller that has a simpler structure and does not require explicit knowledge of the equilibrium points in order to ensure USS and USP. This is a particular advantage in practice, with the price being less strong stability and performance implications.

In \cref{4_sec:shifted}, we introduce USS, USP, and \emph{Universal Shifted Dissipativity} (USD). In \cref{4_sec:veloanalysis}, we discuss velocity-based analysis and how it implies USS and USP. \cref{4_sec:velousinglpv} shows velocity-based analysis through the LPV framework. In \cref{4_sec:contrsynthesis}, we develop a US controller synthesis method. The performance and properties of the method are demonstrated in \cref{4_sec:examples}\extver{, both through a simulation study and}{\xspace through}experiments on a unbalanced disk system. Finally, in \cref{4_sec:conclusion}, conclusions on the established synthesis and analysis toolchain are drawn.

\subsubsection*{Notation}
$\reals$ is the set of real numbers, while $\nnreals$ is the set of non-negative reals. We denote by $\sym^n$ the set of real symmetric matrices of size $n\times n$. 
Denote the projection operation by $\proj$, s.t. for $\mc{D}=\mc{A}\times\mc{B}$, $\proj_\mr{a}\mc{D}\subseteq\mc{A}$ and s.t. for any $a\in\proj_\mr{a}\mc{D}$, there exists a $b$ s.t. $(a,b)\in\mc{D}$ and for any $b$ there exists \emph{no} $a\notin\proj_\mr{a}\mc{D}$ s.t. $(a,b)\in\mc{D}$. 
For a signal $\gd:\nnreals\to\reals^n$, denote by $\gd\equiv\gd_*$ that $\gd(t)=\gd_*\in\reals^n$ for all $t\in\nnreals$. 
$\C{n}$ is the class $n$-times continuously differentiable functions. A function $V:\reals^n\to\reals$ belongs to the class $\posClass{\stEq}$ if it is positive definite and decrescent w.r.t. $\stEq\in\reals^n$ (see \cite[Definition 3.3]{Scherer2015}).
$\lpe{p}$ denotes the (extended) space of integrable functions $f:[0,\,T]\to\reals^n$ with $T\in\nnreals$ and with finite $p$-norm $\norm{f}_{p,T} = ({\int_0^T \norm{f(t)}^p dt})^{\frac{1}{p}}$, where $\norm{\star}$ is the Euclidean (vector) norm.
We use $(\star)$ to denote a symmetric term {in a quadratic expression, e.g., $(\star)^\top  Q(a-b) = (a-b)^\top Q(a-b)$ for $Q\in\sym^n$ and $a,b\in\reals^{n}$}. 
The notation $A\succ 0$ ($A\succeq 0$) indicates that $A\in\sym^n$ is positive (semi-) definite, while $A\prec 0$ ($A\preceq 0$) {denotes a} negative (semi-)definite $A\in\sym^n$. Furthermore, $\col(x_1, \dots ,x_n)$ denotes the column vector $[x_1^\top \cdots x_n^\top]^\top$.




\section{Universal Shifted Stability and Performance}\label{4_sec:shifted}
\subsection{Nonlinear system}
In this section, we will formally introduce the concepts of USS, USP, and USD. We consider nonlinear dynamical systems given by
\begin{subequations}\label{4_eq:nonlinsys}
\begin{align}
	\dot \st(t) &= \stMap(\st(t),\gd(t));\label{4_eq:nonlinsyssteq}\\
	\gp(t) &= \opMap(\st(t),\gd(t));\label{4_eq:nonlinsysopeq}
\end{align}
\end{subequations}
where $t\in\nnreals$ is time, $\st(t)\in\stSet\subseteq \reals^\stSize$ is the state with initial condition $\st(0)=\stIc\in\reals^\stSize$, $\gd(t)\in\gdSet\subseteq\reals^\gdSize$ is the input of the system, and $\gp(t)\in\gpSet\subseteq\reals^\gpSize$ is the output of the system. Moreover, $\stMap:\stSet\times\gdSet\to\reals^\stSize$ and $\opMap:\stSet\times\gdSet\to\gpSet$ are functions with $\stMap,\opMap\in\C{1}$. Define \begin{multline}
	\B = \lbrace (\st,\gd,\gp)\in(\stSet\times\gdSet\times\gpSet)^{\nnreals}\mid \st\in\C{1}\text{ and }\\ (\st,\gd,\gp)\text{ satisfy \cref{4_eq:nonlinsys}} \rbrace,
\end{multline}
as the set of solutions of \cref{4_eq:nonlinsys} which are assumed to be forward complete and unique. We denote the behavior of \cref{4_eq:nonlinsys} for a specific input trajectory $\bar\gd\in\gdSet^\nnreals$ by
\begin{equation}
	\Bw(\bar\gd) = \{ (\st,\gd,\gp)\in\B \mid \gd=\bar\gd\in\gdSet^\nnreals \}.
\end{equation}
We also define the state transition map $\sttran:\nnreals\times\nnreals\times\stSet\times\gdSet^\tSet\to\stSet$, such that 
$\st(t) = \sttran(t,0,\stIc,\gd)$.

For the system given by \cref{4_eq:nonlinsys}, the set of equilibrium points is defined as 
\begin{multline}\label{4_eq:eqsetdef}
	\eqSet=\lbrace (\stEq,\gdEq,\gpEq)\in\stSet\times \gdSet\times\gpSet\mid 0 = \stMap(\stEq,\gdEq),\\ \gpEq = \opMap(\stEq,\gdEq) \rbrace.
\end{multline}
Define $\stSetEq = \proj_\mr{\stEq}\eqSet$, $\gdSetEq = \proj_\mr{\gdEq}\eqSet$, and $\gpSetEq = \proj_\mr{\gpEq}\eqSet$. Throughout this paper, we make the following assumption:
\begin{assumption}\label{4_assum:uniqueEq}
	For the nonlinear system given by \cref{4_eq:nonlinsys}, we assume that there exists a bijective map $\eqMap:\gdSetEq\to \stSetEq$ such that $\stEq = \eqMap(\gdEq)$, for all $(\stEq,\gdEq)\in \proj_\mr{\stEq,\gdEq}\eqSet$.
\end{assumption}
This assumption means that for each $\gdEq\in\gdSetEq$ there is a unique corresponding $\stEq\in\stSetEq$, and vice versa. This assumption is taken for convenience, to not overcomplicate the discussion.

\subsection{Universal shifted stability}
As mentioned in \cref{4_sec:introduction}, USS is stability w.r.t. all forced equilibrium points of the system. Defined more formally as:
\begin{definition}[Universal shifted stability]\label{4_def:shiftedstab}
	The nonlinear system given by \cref{4_eq:nonlinsys} is \emph{Universally Shifted Stable} (USS), if for each $\epsilon>0$ and $(\stEq,\gdEq)\in \proj_\mr{\stEq,\gdEq}\eqSet$, there exists a $\delta(\epsilon)>0$ s.t. $\norm{\st(0)-\stEq} < \delta(\epsilon)$ implies that $\norm{\st(t)-\stEq}< \epsilon$ for all $\st\in\Bw(\gd\equiv\gdEq)$ and $t\in\nnreals$. The system is \emph{Universally Shifted Asymptotically Stable} (USAS) if it is USS and for $\gd\equiv\gdEq$ we have that $\norm{\st(0)-\stEq} < \delta(\epsilon)$ implies that $\lim_{t\to\infty}\norm{\sttran(t,0,\st(0),\gd)-\stEq}=0$.
\end{definition}
Note that this definition is nothing more than Lyapunov stability, see e.g. \cite{Khalil2002,Van-der-Schaft2017}, for each equilibrium point of the system. 
We can extend the standard Lyapunov condition to analyze US(A)S:
\begin{theorem}[Universal shifted Lyapunov stability]\label{4_thm:shiftlyapstab}
	The nonlinear system given by \cref{4_eq:nonlinsys} is USS, if there exists a function $\lyapfunShift:\stSet\times\gdSetEq\to\nnreals$ with $\lyapfunShift(\cdot,\gdEq)\in\C{1}$ and $\lyapfunShift(\cdot,\gdEq)\in\posClass{\stEq}$ for every $(\stEq,\gdEq)\in\proj_\mr{\stEq,\gdEq}\eqSet$, such that
	\begin{equation}\label{4_eq:shiftedstability}
		\Partial{}{t}\lyapfunShift(\st(t),\gdEq)\leq 0,
	\end{equation}
	for all $t\in\nnreals$, $\st\in\proj_\mr{\st}\Bw(\gd\equiv \gdEq)$, and for every $(\stEq,\gdEq)\in\proj_\mr{\stEq,\gdEq}\eqSet$. If \cref{4_eq:shiftedstability} holds, but with strict inequality except when $\st(t)=\stEq$, then the system is USAS.
\end{theorem}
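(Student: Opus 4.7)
The plan is to reduce this statement to the classical Lyapunov stability theorem for autonomous systems, applied pointwise in the equilibrium variable $(\stEq,\gdEq)$. The key observation is that Universal Shifted Stability only requires stability of trajectories under the constant input $\gd\equiv\gdEq$ for each equilibrium pair, so the system effectively becomes an autonomous system with equilibrium $\stEq$, and the function $\lyapfunShift(\,\cdot\,,\gdEq)$ plays the role of a classical Lyapunov function for this autonomous system.

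First, I would fix an arbitrary pair $(\stEq,\gdEq)\in\proj_\mr{\stEq,\gdEq}\eqSet$ and restrict attention to trajectories $(\st,\gd,\gp)\in\Bw(\gd\equiv\gdEq)$. Since $\stMap(\stEq,\gdEq)=0$ by definition of $\eqSet$ in \cref{4_eq:eqsetdef}, the point $\stEq$ is indeed an equilibrium of the autonomous system $\dot\st(t)=\stMap(\st(t),\gdEq)$. Defining the auxiliary Lyapunov candidate $\lyapfun_\mr{aux}(\st):=\lyapfunShift(\st,\gdEq)$, the hypotheses guarantee $\lyapfun_\mr{aux}\in\C{1}$, $\lyapfun_\mr{aux}\in\posClass{\stEq}$, and $\frac{d}{dt}\lyapfun_\mr{aux}(\st(t)) = \Partial{}{t}\lyapfunShift(\st(t),\gdEq)\leq 0$ along every such trajectory, by \cref{4_eq:shiftedstability}.

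Next, I would invoke the classical Lyapunov stability theorem (e.g., \cite{Khalil2002}) for the autonomous system with equilibrium $\stEq$ and Lyapunov function $\lyapfun_\mr{aux}$. Since $\lyapfun_\mr{aux}\in\posClass{\stEq}$, there exist class-$\classK$ functions $\kfun_1,\kfun_2$ with $\kfun_1(\norm{\st-\stEq})\leq \lyapfun_\mr{aux}(\st)\leq \kfun_2(\norm{\st-\stEq})$ in a neighborhood of $\stEq$. For any $\epsilon>0$, choose $\delta(\epsilon)>0$ with $\kfun_2(\delta(\epsilon))<\kfun_1(\epsilon)$; then $\norm{\st(0)-\stEq}<\delta(\epsilon)$ gives $\lyapfun_\mr{aux}(\st(0))<\kfun_1(\epsilon)$, and monotonicity of $\lyapfun_\mr{aux}$ along trajectories together with the lower bound yields $\norm{\st(t)-\stEq}<\epsilon$ for all $t\in\nnreals$. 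Since $(\stEq,\gdEq)$ was arbitrary, this is exactly the USS condition in \cref{4_def:shiftedstab}.

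For the asymptotic statement, the strict inequality $\Partial{}{t}\lyapfunShift(\st(t),\gdEq)<0$ whenever $\st(t)\neq\stEq$ makes $\lyapfun_\mr{aux}$ a strict Lyapunov function for the autonomous system, so the standard asymptotic Lyapunov stability theorem yields $\lim_{t\to\infty}\norm{\sttran(t,0,\st(0),\gd)-\stEq}=0$ for each equilibrium, and thus USAS. There is no real obstacle here beyond being careful that the quantifier structure in \cref{4_def:shiftedstab} (``for each $\epsilon>0$ and each equilibrium pair'') matches what the pointwise application produces, which it does since $\delta$ is allowed to depend on both $\epsilon$ and the chosen equilibrium. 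The main thing to verify cleanly is that the parameter $\gdEq$ in $\lyapfunShift$ is held constant along each trajectory under consideration, so the partial derivative in \cref{4_eq:shiftedstability} coincides with the total time derivative of $\lyapfun_\mr{aux}(\st(t))$.
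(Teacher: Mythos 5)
Your proposal is correct and follows essentially the same route as the paper's proof: fix an equilibrium pair, observe that $\st\mapsto\lyapfunShift(\st,\gdEq)$ is a classical Lyapunov function for the autonomous system obtained under the constant input $\gd\equiv\gdEq$, and invoke standard Lyapunov (asymptotic) stability theory pointwise over $\proj_\mr{\stEq,\gdEq}\eqSet$. The only difference is that you unpack the class-$\classK$ comparison argument explicitly, whereas the paper cites \cite{Khalil2002} for that step.
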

\begin{proof}
\extver{See \cref{4_pf:shiftlyapstab}.}{See \cite{Koelewijn2023a}. The proof is also given in the extended version of this paper \cite{Koelewijn2023}.}
\end{proof}

\subsection{Universal shifted dissipativity}\label{4_sec:shifteddissip}

\emph{Universal Shifted Dissipativity} (USD) extends the classical dissipativity concept by Willems \cite{Willems1972} w.r.t. each (forced) equilibrium point of the system instead of a single point of neutral storage. This allows for analyzing both US(A)S and \emph{Universal Shifted Performance} (USP) of the system. In literature, USD has also been referred to as equilibrium independent dissipativity \cite{Hines2011,Simpson-Porco2019} or constant incremental dissipativity \cite{Jayawardhana2006}.
%
More concretely, we take the following definition for USD, adopted from \cite{Simpson-Porco2019}:
\begin{definition}[Universal shifted dissipativity]\label{4_def:shifteddissip}
	The nonlinear system given by \cref{4_eq:nonlinsys} is \emph{Universally Shifted Dissipative} (USD) w.r.t. the supply function $\supfunShift:\gdSet\times\gdSetEq\times \gpSet\times \gpSetEq\to\reals$, if there exists a storage function $\storfunShift:\stSet\times\gdSetEq\to\nnreals$ with $\storfunShift(\cdot,\gdEq)\in\C{0}$ and $\storfunShift(\cdot,\gdEq)\in\posClass{\stEq}$ for every $(\stEq,\gdEq)\in\proj_\mr{\stEq,\gdEq}\eqSet$, such that
	\begin{equation}\label{4_eq:shifteddissip}
	\storfunShift(\st(t_1),\gdEq)-\storfunShift(\st(t_0),\gdEq)\leq \!\int_{t_0}^{t_1}\!\! \supfunShift(\gd(t),\gdEq,\gp(t),\gpEq)\,dt,
\end{equation}
for all $t_0,t_1\in\nnreals$ with $t_1\geq t_0$ and $(\st,\gd,\gp)\in\B$.
\end{definition}

Classical dissipativity is connected to many well-known performance metrics, such as \ltwo-gain and passivity, by considering quadratic, so-called $\qsr$, supply functions. Similarly, also for USD, we consider quadratic supply functions $\supfunShift$ of the form
\begin{equation}\label{4_eq:shiftsupply}
	\supfunShift(\gd,\gdEq,\gp,\gpEq) = \begin{bmatrix}
		\gd-\gdEq\\ \gp-\gpEq
	\end{bmatrix}^\top \qsrMat\begin{bmatrix}
		\gd-\gdEq\\ \gp-\gpEq
	\end{bmatrix},
\end{equation}
where $\supQ\in\sym^\gdSize$, $\supR\in\sym^\gpSize$, and $\supS \in\reals^{\gdSize\times\gpSize}$. A system given by \cref{4_eq:nonlinsys} is $\qsr$-USD if it is USD with respect to a supply function of the form \cref{4_eq:shiftsupply}. Moreover, it can easily be seen that if a system is $\qsr$-USD and $(0,0,0)\in\eqSet$, it is also classically $\qsr$-dissipative for the same tuple $\qsr$. Therefore, USD is a stronger notion than classical dissipativity. 



\subsection{Induced universal shifted performance}\label{4_sec:shiftdissipperf}
Before connecting $\qsr$-USD to quadratic performance notions, we will first give a definition for the US extension of the \lp{p}-\lp{q}-gain:
\begin{definition}[Universal shifted \lp{p}-\lp{q}-gain]\label{4_def:shiftlplqgain}
	A nonlinear system given by \cref{4_eq:nonlinsys} is said to have a finite US \lp{p}-\lp{q}-gain, if there is a finite $\perf\geq 0$ and function $\icfunShift :\stSet\times\stSetEq \to\reals$ such that for every $(\stEq,\gdEq)\in\proj_\mr{\stEq,\gdEq}\eqSet$ it holds that
	\begin{equation}\label{4_eq:shiftlplqgain}
		\norm{\gp-\gpEq}_{q,T} \leq \perf \norm{\gd-\gdEq}_{p,T} + \icfunShift(\stIc,\stEq),
	\end{equation}
	for all $T\geq 0$ and $(\st,\gd,\gp)\in\B$ with\,\footnote{Note that this implies $(\gd-\gdEq)\in\lpe{p}$.} 
	$\gd\in\lpe{p}$. The induced US \lp{p}-\lp{q}-gain of \cref{4_eq:nonlinsys}, denoted as \lsp{p}-\lsp{q}-gain, is the infimum of $\perf$ such that \cref{4_eq:shiftlplqgain} still holds. If $p=q$, we will refer to this as the (induced) US \lp{p}-gain, denoted as \lsp{p}-gain.
\end{definition}
Using this definition, we directly have US extensions of the well-known \ltwo-gain, \linf-gain, and \ltwo-\linf-gain through the \lstwo-gain, \lsinf-gain, and \lstwo-\lsinf-gain, respectively. We can then connect $\qsr$-USD to \lstwo-gain performance through the following lemma:
\begin{lemma}[\lstwo-gain based on USD]\label{4_lem:ls2gaindissip}
If the nonlinear system given by \cref{4_eq:nonlinsys} is $\qsr$-USD with $\qsr = (\perf^2 I,0,-I)$, then the system has an \lstwo-gain bounded by $\perf$.
\end{lemma}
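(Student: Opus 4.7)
The plan is to instantiate the dissipation inequality of \cref{4_def:shifteddissip} with the specific supply function and then unpack it into the \lsp{2}-gain definition of \cref{4_def:shiftlplqgain}. With $\qsr = (\perf^2 I,0,-I)$, the quadratic supply \cref{4_eq:shiftsupply} evaluates to
\begin{equation*}
\supfunShift(\gd,\gdEq,\gp,\gpEq) = \perf^2\norm{\gd-\gdEq}^2 - \norm{\gp-\gpEq}^2,
\end{equation*}
so substituting this into \cref{4_eq:shifteddissip} with $t_0=0$ and $t_1 = T$ for arbitrary $T\geq 0$ yields
\begin{equation*}
\storfunShift(\st(T),\gdEq) - \storfunShift(\stIc,\gdEq) \leq \perf^2 \norm{\gd-\gdEq}_{2,T}^2 - \norm{\gp-\gpEq}_{2,T}^2.
\end{equation*}

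Next, I would exploit the fact that $\storfunShift(\cdot,\gdEq)\in\posClass{\stEq}$ implies $\storfunShift(\st(T),\gdEq)\geq 0$ for all $T\geq 0$. Rearranging then gives
\begin{equation*}
\norm{\gp-\gpEq}_{2,T}^2 \leq \perf^2 \norm{\gd-\gdEq}_{2,T}^2 + \storfunShift(\stIc,\gdEq).
\end{equation*}
Taking square roots on both sides and applying the elementary inequality $\sqrt{a+b}\leq \sqrt{a}+\sqrt{b}$ for $a,b\geq 0$, we obtain
\begin{equation*}
\norm{\gp-\gpEq}_{2,T} \leq \perf\norm{\gd-\gdEq}_{2,T} + \sqrt{\storfunShift(\stIc,\gdEq)}.
\end{equation*}

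Finally, I would invoke \cref{4_assum:uniqueEq} to convert the dependence on $\gdEq$ into a dependence on $\stEq$: since $\eqMap$ is a bijection, $\gdEq = \eqMap^{-1}(\stEq)$, so defining
\begin{equation*}
\icfunShift(\stIc,\stEq) := \sqrt{\storfunShift(\stIc,\eqMap^{-1}(\stEq))}
\end{equation*}
provides a well-defined map $\icfunShift:\stSet\times\stSetEq\to\reals$, and the resulting inequality is exactly the one required by \cref{4_def:shiftlplqgain} with $p=q=2$. Hence the system has an \lstwo-gain bounded by $\perf$.

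There is no real obstacle in this proof; the only minor technicality is ensuring that $\icfunShift$ is a function of $(\stIc,\stEq)$ rather than $(\stIc,\gdEq)$, which is handled by the bijectivity assumption on $\eqMap$, and checking that the inequality holds for every $(\stEq,\gdEq)\in\proj_\mr{\stEq,\gdEq}\eqSet$ as required (which follows directly since the USD inequality holds for every such pair).
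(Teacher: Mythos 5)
Your proposal is correct and follows essentially the same route as the paper's own proof: instantiate the quadratic supply with $\qsr=(\perf^2 I,0,-I)$, drop the nonnegative terminal storage, rearrange, take square roots via $\sqrt{a+b}\leq\sqrt{a}+\sqrt{b}$, and absorb the storage at the initial condition into $\icfunShift$ using the bijection of \cref{4_assum:uniqueEq}. If anything, your handling of the direction of $\eqMap$ (writing $\gdEq=\eqMap^{-1}(\stEq)$) is more consistent with \cref{4_assum:uniqueEq} than the paper's $\eqMap(\stEq)$.
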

\begin{proof}
\extver{See \cref{4_pf:ls2gaindissip}.}{See \cite{Koelewijn2023a}. The proof is also given in the extended version of this paper \cite{Koelewijn2023}.}
\end{proof}

For US passivity, we have adopted the following definition inspired by the shifted passivity definition in \cite{Van-der-Schaft2017}:
\begin{definition}\label{4_def:shiftpass}
	A nonlinear system given by \cref{4_eq:nonlinsys} is said to be US passive, if it is $\qsr$-USD with $\qsr=(0,I,0)$.
\end{definition}
Through this definition, we directly link US passivity to USD. Note that if $(0,0,0)\in\eqSet$, then, the US performance notions also imply their standard counterparts in terms of the \lp{p}-\lp{q}-gain and passivity. 

\subsection{Induced universal shifted stability}\label{4_sec:shiftdissipstab}
Classical dissipativity of a nonlinear system implies stability (at the origin) of the system if the supply function satisfies a negativity condition, see e.g. \cite{Van-der-Schaft2017,Brogliato2020}. We will show how a similar condition on the US supply function $\supfunShift$ can be formulated to link USD and US(A)S.
\begin{theorem}[USS from USD]\label{4_thm:shiftdissipstab}
If the nonlinear system given by \cref{4_eq:nonlinsys} is USD w.r.t. a supply function $\supfunShift$  under a storage function $\storfunShift$ with $\storfunShift(\cdot,\gdEq)\in\C{1}$ for all $\gdEq\in\gdSetEq$ and $\supfunShift$ satisfies for every $(\stEq,\gdEq,\gpEq)\in\eqSet$ that
	\begin{equation}\label{4_eq:supfunshiftcond}
		\supfunShift(\gdEq,\gdEq,\gp,\gpEq)\leq 0,
	\end{equation}	
	for all $\gp\in\gpSet$, then, the nonlinear system is USS. If the supply function satisfies \cref{4_eq:supfunshiftcond}, but with strict inequality for all $\gp\neq \gpEq$, and the system is observable (see \cite[Definition 3.27]{Nijmeijer2016}), then the nonlinear system is USAS.
\end{theorem}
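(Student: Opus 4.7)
The plan is to reduce the statement directly to the universal shifted Lyapunov stability criterion of \cref{4_thm:shiftlyapstab} by restricting the input to the constant equilibrium value, and then to handle the asymptotic case through a LaSalle-type invariance argument combined with observability.

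First, I would fix an arbitrary $(\stEq,\gdEq,\gpEq)\in\eqSet$ and consider any trajectory $(\st,\gd,\gp)\in\Bw(\gd\equiv\gdEq)$. Substituting $\gd\equiv\gdEq$ into the USD inequality \cref{4_eq:shifteddissip} and applying the standing hypothesis \cref{4_eq:supfunshiftcond} gives, for all $t_0\leq t_1$,
\begin{equation*}
\storfunShift(\st(t_1),\gdEq)-\storfunShift(\st(t_0),\gdEq)\leq \int_{t_0}^{t_1}\supfunShift(\gdEq,\gdEq,\gp(t),\gpEq)\,dt\leq 0.
\end{equation*}
Since $\storfunShift(\cdot,\gdEq)\in\C{1}$ by assumption, dividing by $t_1-t_0$ and letting $t_1\to t_0$ yields the pointwise dissipation inequality
\begin{equation*}
\Partial{}{t}\storfunShift(\st(t),\gdEq)\leq \supfunShift(\gdEq,\gdEq,\gp(t),\gpEq)\leq 0.
\end{equation*}
Because $\storfunShift(\cdot,\gdEq)\in\posClass{\stEq}$ by \cref{4_def:shifteddissip}, the storage function fulfills all conditions imposed on $\lyapfunShift$ in \cref{4_thm:shiftlyapstab}, and USS follows immediately.

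For the USAS statement, the supply is strictly negative only when $\gp(t)\neq\gpEq$, so the derivative of $\storfunShift$ may vanish even when $\st(t)\neq\stEq$; the strict form of \cref{4_thm:shiftlyapstab} therefore cannot be invoked verbatim. Instead, I would run a LaSalle-Krasovskii style argument: USS already guarantees local boundedness of trajectories near $\stEq$, so the $\omega$-limit set $\Omega$ of any sufficiently small initial condition is nonempty, compact, and positively invariant, and $\storfunShift(\cdot,\gdEq)$ is constant on $\Omega$. On any trajectory in $\Omega$ driven by $\gd\equiv\gdEq$, the differential dissipation inequality forces $\tfrac{\partial}{\partial t}\storfunShift=0$, which by strict negativity of $\supfunShift$ off $\gpEq$ implies $\gp(t)\equiv\gpEq$. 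Observability in the sense of \cite[Definition 3.27]{Nijmeijer2016}, applied to the output deviation $\gp-\gpEq$ under constant input $\gdEq$, then forces $\st(t)\equiv\stEq$ on $\Omega$, so $\Omega=\{\stEq\}$ and $\st(t)\to\stEq$, which is the desired USAS.

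The main obstacle I anticipate is the USAS step: one has to justify that the observability notion of \cite{Nijmeijer2016} really distinguishes states from the shifted output $\gp-\gpEq$ under the constant equilibrium input $\gdEq$, and that trajectories remain in a compact set so the $\omega$-limit set machinery is applicable. The stability part, by contrast, is essentially a routine passage from the integral to the differential dissipation inequality using the $\C{1}$ regularity of $\storfunShift$.
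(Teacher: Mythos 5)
Your proof is correct, and the USS half coincides with the paper's own argument: restrict to $\gd\equiv\gdEq$, combine the USD integral inequality with \cref{4_eq:supfunshiftcond} to get monotonicity of $\storfunShift(\cdot,\gdEq)$ along trajectories, pass to the pointwise derivative using the $\C{1}$ assumption, and hand the result to \cref{4_thm:shiftlyapstab}. Where you diverge is the asymptotic case. The paper argues directly: observability means that $\gp(t)\equiv\gpEq$ along a trajectory forces $\st(t)=\stEq$, and it then asserts that $\tfrac{d}{dt}\storfunShift(\st(t),\gdEq)<0$ except at $\st(t)=\stEq$, so the strict form of \cref{4_thm:shiftlyapstab} applies. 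You correctly observe that this pointwise strictness is not literally available -- at an isolated instant one can have $\gp(t)=\gpEq$ with $\st(t)\neq\stEq$ and hence a vanishing derivative -- and you instead run the Barbashin--Krasovskii--LaSalle invariance argument: precompactness of trajectories from USS, constancy of $\storfunShift$ on the $\omega$-limit set, the sandwich $0=\tfrac{d}{dt}\storfunShift\leq\supfunShift\leq 0$ forcing $\gp\equiv\gpEq$ on limit trajectories, and observability collapsing the limit set to $\{\stEq\}$. This is the more careful route; the paper's shortcut is the standard ``dissipativity plus observability implies asymptotic stability'' claim, which implicitly relies on exactly the invariance machinery you make explicit. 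The only caveats in your version are the ones you already flag yourself (compactness of the relevant sublevel/neighbourhood sets so that $\omega$-limit sets exist, and that the cited observability notion separates states through the shifted output under the constant input $\gdEq$); neither is an obstacle, and the paper glosses over the same points.
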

\begin{proof}
\extver{See \cref{4_pf:shiftdissipstab}.}{See \cite{Koelewijn2023a}. The proof is also given in the extended version of this paper \cite{Koelewijn2023}.}
\end{proof}
Note that the condition in \cref{4_thm:shiftdissipstab} is satisfied for US $\qsr$ supply functions for which $\supR\preceq 0$ (with $\supR\prec 0$ implying USAS, as is the case for the \lstwo-gain).

\section{Velocity-based Analysis}\label{4_sec:veloanalysis}
\subsection{The velocity form}\label{4_sec:veloform}
In this paper, we are interested in analyzing and ensuring US(A)S and USP of nonlinear systems. In \cref{4_sec:shifted}, we have shown how USD allows us to simultaneously analyze both US(A)S and USP of a system. While \cref{4_def:shifteddissip} gives us a condition to analyze USD, these conditions require to hold for all state and input trajectories/values and for every equilibrium point $(\stEq, \gdEq, \gpEq) \in \eqSet$. Hence, checking USD directly through these conditions can be difficult, if not an infeasible task. Next, as one of the main results of this paper, we will show how analysis of the time-differentiated dynamics of the system will allow us to simplify US(A)S and USP analysis of systems.

Let us first define the following restriction of the solution set of \cref{4_eq:nonlinsys} by\,\footnote{As solutions are defined on $\nnreals$, we assume they are also continuously differentiable at $t=0$.}
\begin{equation}\label{4_eq:Bc}
	\Bc = \{(\st,\gd,\gp)\in\B\mid \st\in\C{2},\, \gd,\gp\in\C{1}\},
\end{equation}
i.e., the solutions in $\B$ that are differentiable. We also define $\Bcw(\gd)= \Bw(\gd)\cap \Bc$ for a $\gd \in \gdSet^\nnreals$. Furthermore, define the operator $\dotB$ for these sets such that
\begin{multline}\label{4_eq:velosolution}
	\dotB \Bc = \big\{(\dot \st,\dot \gd,\dot \gp)\in (\reals^\stSize\times \reals^\gdSize\times \reals^\gpSize)^{\nnreals} \mid \dot \st(t)=\tfrac{d}{dt}\st(t),\\ \dot{\gd}(t) = \tfrac{d}{dt}\gd(t), \dot \gp(t) = \tfrac{d}{dt}\gp(t), \forall t\in\nnreals, (\st,\gd,\gp) \in \Bc\big\}.
\end{multline}
We refer to the time-differentiated dynamics of the nonlinear system given by \cref{4_eq:nonlinsys} as the velocity form of the system.
\begin{definition}[Velocity form]\label{4_def:veloform}
For a nonlinear system given by \cref{4_eq:nonlinsys}, the velocity form is
\begin{subequations}\label{4_eq:veloform}
	\begin{align}
		\ddot{\st}(t) &= \velA(\st(t),\gd(t))\dot{\st}(t)+\velB(\st(t),\gd(t))\dot{\gd}(t);\label{4_eq:velostate}\\
	\dot{\gp}(t) &= \velC(\st(t),\gd(t))\dot{\st}(t)+\velD(\st(t),\gd(t))\dot{\gd}(t);\label{4_eq:velooutput}
\end{align}
\end{subequations}
where $\velA=\Partial{\stMap}{\st}$, $\velB=\Partial{\stMap}{\gd}$, $\velC=\Partial{\opMap}{\st}$, $\velD=\Partial{\opMap}{\gd}$, and $(\st,\gd,\gp)\in\Bc$.
\end{definition}

The solution set of \cref{4_eq:veloform} is given by $\Bv=\dotB\Bc$, and we define $\Bvw(\gd) = \dotB \Bcw(\gd)$ for a $\gd \in \gdSet^\nnreals$.

We will refer to the original system given by \cref{4_eq:nonlinsys} as the primal form of the system. As aforementioned, the analysis of the time-differentiated dynamics has been investigated in connection to gain scheduling \cite{Leith1998a,Leith1999}, the construction of (local) LPV embeddings \cite{Toth2010}, and more recently also in connection with (universal) shifted stability/dissipativity \cite{Kosaraju2019,Kawano2021,Schweidel2022}. However, the existing works on this topic are limited as they only provide local guarantees, assume severe restrictions to the output dynamics, and/or focus solely on stability.

For the velocity form \cref{4_eq:veloform}, we can also define a notion of dissipativity, which connects to US(A)S and USP:
\begin{definition}[Velocity dissipativity]\label{4_def:velodissip}
	The nonlinear system given by \cref{4_eq:nonlinsys} is \emph{Velocity Dissipative} (VD) w.r.t. the supply function $\supfunVelo:\reals^\gdSize\times\reals^\gpSize\to\reals$, if there exists a storage function $\storfunVelo:\reals^\stSize\to\nnreals$ with $\storfunVelo\in\C{1}$ and $\storfunVelo\in\posClass{0}$, such that, for all $t_0,t_1\in\nnreals$ with $t_1\geq t_0$,
\begin{equation}\label{4_eq:velodissip}
	\storfunVelo(\dot{\st}(t_1))-\storfunVelo(\dot{\st}(t_0))\leq \int_{t_0}^{t_1} \supfunVelo(\dot{\gd}(t),\dot{\gp}(t))\,dt,
\end{equation}
for all $(\dot{\st},\dot{\gd},\dot{\gp})\in\Bv$.
\end{definition}
Note that in this sense, VD can be seen as `classical dissipativity' of the velocity form \cref{4_eq:veloform} of the system. In literature, similar notions have also been introduced, such as \emph{Krasovskii passivity} \cite{Kosaraju2019,Kawano2021} and \emph{delta dissipativity} \cite{Schweidel2022}, which are also connected to non-zero equilibrium point properties of the primal form of the system.

Similarly to the USD, also for VD, we focus on quadratic supply functions of the form
\begin{equation}\label{4_eq:velosupply}
	\supfunVelo(\dot{\gd},\dot{\gp}) = \begin{bmatrix}
		\dot{\gd}\\ \dot{\gp}
	\end{bmatrix}^\top\qsrMat\begin{bmatrix}
		\dot{\gd}\\ \dot{\gp}
	\end{bmatrix},
\end{equation}
where again $\supQ\in\sym^\gdSize$, $\supS \in\reals^{\gdSize\times\gpSize}$, and $\supR\in\sym^\gpSize$. If a system is VD w.r.t. a supply function of the form \cref{4_eq:velosupply}, we will refer to it being $\qsr$-VD.


If we consider a $\qsr$ supply function \cref{4_eq:velosupply} and quadratic storage function:
\begin{equation}\label{4_eq:velostorquad}
	\storfunVelo(\dot\st) = \dot\st^\top \storquad \dot\st,
\end{equation}
where $\storquad\in\sym^\stSize$ with $\storquad \succ 0$, the following sufficient condition for $\qsr$-VD can be derived:
\begin{theorem}[$\qsr$-VD condition]\label{4_thm:veloqsrMI}
	The system given by \cref{4_eq:nonlinsys} is $\qsr$-VD, if there exists an $\storquad\in\sym^\stSize$ with $\storquad \succ 0$, such that, for all $(\st,\gd)\in\stSet\times\gdSet$,
\begin{multline}\label{4_eq:veloMI}
(\star)^\top  \begin{bmatrix} 0 &\storquad \\\star &0 \end{bmatrix}  
\begin{bmatrix} 
I & 0 \\ 
\velA(\st,\gd) & \velB(\st,\gd)
\end{bmatrix} 
-\\(\star)^\top \qsrMat \begin{bmatrix} 
0 & I \\ 
\velC(\st,\gd) & \velD(\st,\gd)
\end{bmatrix}  \preceq 0.
\end{multline}
\vspace{0em}
\end{theorem}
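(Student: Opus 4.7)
The plan is to verify \cref{4_def:velodissip} directly by taking the quadratic storage function \cref{4_eq:velostorquad} and showing that the matrix inequality \cref{4_eq:veloMI} encodes exactly a pointwise differential dissipation inequality along velocity trajectories, from which the integral form \cref{4_eq:velodissip} follows by integration.

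First I would verify the admissibility of the candidate storage $\storfunVelo(\dot\st)=\dot\st^\top M\dot\st$. Since $M\succ 0$, it is of class $\C{1}$ and belongs to $\posClass{0}$ as required. Next, I would compute the time derivative of $\storfunVelo(\dot\st(t))$ along a velocity trajectory $(\dot\st,\dot\gd,\dot\gp)\in\Bv$: using \cref{4_eq:velostate},
\begin{equation*}
\tfrac{d}{dt}\storfunVelo(\dot\st(t)) = 2\dot\st^\top M\ddot\st = 2\dot\st^\top M\bigl(\velA(\st,\gd)\dot\st+\velB(\st,\gd)\dot\gd\bigr),
\end{equation*}
which, with $\xi=\col(\dot\st,\dot\gd)$, is exactly
\begin{equation*}
\xi^\top\,(\star)^\top  \begin{bmatrix} 0 &M \\ \star &0 \end{bmatrix}
\begin{bmatrix} I & 0 \\ \velA(\st,\gd) & \velB(\st,\gd)\end{bmatrix}\,\xi.
\end{equation*}

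Then I would rewrite the supply function \cref{4_eq:velosupply}. Using \cref{4_eq:velooutput}, the pair $\col(\dot\gd,\dot\gp)$ equals $\bigl[\begin{smallmatrix}0 & I\\ \velC & \velD\end{smallmatrix}\bigr]\xi$, so
\begin{equation*}
\supfunVelo(\dot\gd,\dot\gp) = \xi^\top\,(\star)^\top\qsrMat \begin{bmatrix} 0 & I \\ \velC(\st,\gd) & \velD(\st,\gd)\end{bmatrix}\,\xi.
\end{equation*}
Subtracting the two displayed quadratic forms, I obtain
\begin{equation*}
\tfrac{d}{dt}\storfunVelo(\dot\st(t)) - \supfunVelo(\dot\gd(t),\dot\gp(t)) = \xi(t)^\top \Pi(\st(t),\gd(t))\xi(t),
\end{equation*}
where $\Pi(\st,\gd)$ is precisely the matrix on the left-hand side of \cref{4_eq:veloMI}. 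By hypothesis $\Pi(\st,\gd)\preceq 0$ for every $(\st,\gd)\in\stSet\times\gdSet$, and since any $(\st,\gd)\in\proj_{\st,\gd}\Bc$ takes values in $\stSet\times\gdSet$ for every $t\in\nnreals$, the right-hand side is nonpositive along any velocity trajectory.

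Finally, integrating this pointwise inequality over $[t_0,t_1]$ yields \cref{4_eq:velodissip} for all $(\dot\st,\dot\gd,\dot\gp)\in\Bv$, establishing $\qsr$-VD. The only delicate point—not a genuine obstacle—is bookkeeping the $(\star)$-shorthand to make sure the two quadratic forms assemble correctly; once the algebra is organized around the vector $\xi=\col(\dot\st,\dot\gd)$, the statement follows immediately from the assumed matrix inequality.
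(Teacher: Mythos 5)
Your proposal is correct and follows essentially the same route as the paper: the paper likewise pre- and post-multiplies \cref{4_eq:veloMI} by $\col(\stMap(\st,\gd),\dot\gd)=\col(\dot\st,\dot\gd)$ (your $\xi$), identifies the resulting scalar inequality as the pointwise dissipation inequality $\tfrac{d}{dt}\storfunVelo(\dot\st(t))\leq\supfunVelo(\dot\gd(t),\dot\gp(t))$ with the quadratic storage \cref{4_eq:velostorquad}, and integrates from $t_0$ to $t_1$ to obtain \cref{4_eq:velodissip}.
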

\begin{proof}
\extver{See \cref{4_pf:veloqsrMI}}{See \cite[Example 1]{Schweidel2022} or the extended version of this paper \cite{Koelewijn2023}}.
\end{proof}

Note, what is compelling about the condition given in \cref{4_thm:veloqsrMI} is that for a fixed $(\st,\gd)\in\stSet\times\gdSet$, \cref{4_eq:veloMI} becomes a \emph{Linear Matrix Inequality} (LMI). Later, in \cref{4_sec:velousinglpv}, we will see how we can use tools from the LPV framework to reduce this infinite-dimensional set of LMIs to a finite-dimensional set, which can be computationally efficiently verified, giving an efficient tool to analyze $\qsr$-VD of a system.

\subsection{Induced universal shifted stability}\label{4_sec:veloshiftstab}
Next, we will first present results on how the velocity form and VD connect to US(A)S of the system. Let us first introduce the set $\Bvset{\gdSetEq}=\cup_{\gdEq\in\gdSetEq} \Bvw( \gd\equiv \gdEq)$, i.e., the behavior of the velocity form for which the input is $\gd(t)=\gdEq\in\gdSetEq$, and hence $\dot\gd(t)=0$, for all $t\in\nnreals$.

\begin{theorem}[Implied USS]\label{4_thm:velotoshiftstab}
	The nonlinear system given by \cref{4_eq:nonlinsys}, with solutions in $\Bc$, is USS, if there exists a function $\lyapfunVelo:\reals^{\stSize}\to\nnreals$ with $\lyapfunVelo\in\C{1}$ and $\storfunVelo\in\posClass{0}$, such that
	\begin{equation}\label{4_eq:velostability}
		\Partial{}{t}\lyapfunVelo(\dot{\st}(t))\leq 0,
	\end{equation}
	for all $t\in\nnreals$ and $\dot\st\in\proj_\mr{\dot{\st}}\Bvset{\gdSetEq}$. If \cref{4_eq:velostability} holds, but with strict inequality except when $\dot{\st}(t)=0$, then the system is USAS.
\end{theorem}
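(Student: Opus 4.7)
The plan is to construct a shifted Lyapunov function from the given velocity Lyapunov function and invoke \cref{4_thm:shiftlyapstab}. Fix any $(\stEq,\gdEq)\in\proj_\mr{\stEq,\gdEq}\eqSet$ and define
\begin{equation*}
\lyapfunShift(\st,\gdEq) := \lyapfunVelo(\stMap(\st,\gdEq)).
\end{equation*}
Since both $\stMap$ and $\lyapfunVelo$ are $\C{1}$, so is $\lyapfunShift(\cdot,\gdEq)$. I would then verify the three ingredients required by \cref{4_thm:shiftlyapstab}: (i) $\lyapfunShift(\cdot,\gdEq)\in\posClass{\stEq}$, and (ii) the time-derivative condition along trajectories with $\gd\equiv\gdEq$.

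For (i), note $\lyapfunShift(\stEq,\gdEq)=\lyapfunVelo(\stMap(\stEq,\gdEq))=\lyapfunVelo(0)=0$. Positive definiteness near (and away from) $\stEq$ follows from $\lyapfunVelo\in\posClass{0}$ combined with \cref{4_assum:uniqueEq}: for $\st\neq\stEq$, the bijectivity of $\eqMap$ forces $\stMap(\st,\gdEq)\neq 0$, hence $\lyapfunShift(\st,\gdEq)>0$. Decrescence follows from continuity of $\stMap$ at $\stEq$: because $\stMap\in\C{1}$ and $\stMap(\stEq,\gdEq)=0$, we can bound $\norm{\stMap(\st,\gdEq)}$ locally by a constant multiple of $\norm{\st-\stEq}$, and then dominate $\lyapfunShift(\st,\gdEq)$ by the class-$\classK$ decrescence bound of $\lyapfunVelo$ evaluated on this quantity. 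This places $\lyapfunShift(\cdot,\gdEq)$ in $\posClass{\stEq}$ in the sense of \cite{Scherer2015}.

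For (ii), consider any $(\st,\gd,\gp)\in\Bcw(\gd\equiv\gdEq)$, so that $\dot\gd\equiv 0$ and $\dot\st(t)=\stMap(\st(t),\gdEq)$ for all $t$. By the chain rule,
\begin{equation*}
\Partial{}{t}\lyapfunShift(\st(t),\gdEq) = \Partial{}{t}\lyapfunVelo(\stMap(\st(t),\gdEq)) = \Partial{}{t}\lyapfunVelo(\dot\st(t)),
\end{equation*}
and the corresponding velocity trajectory $\dot\st$ lies in $\proj_{\dot\st}\Bvset{\gdSetEq}$. The hypothesis \cref{4_eq:velostability} then yields $\Partial{}{t}\lyapfunShift(\st(t),\gdEq)\leq 0$. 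Applying \cref{4_thm:shiftlyapstab} concludes USS. For the strict case, \cref{4_eq:velostability} is strict whenever $\dot\st\neq 0$; by \cref{4_assum:uniqueEq} this is equivalent to $\st\neq\stEq$, and so the strict hypothesis of \cref{4_thm:shiftlyapstab} is inherited, giving USAS.

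The main delicate point in the plan is the verification that $\lyapfunShift(\cdot,\gdEq)\in\posClass{\stEq}$: the uniqueness of equilibria in \cref{4_assum:uniqueEq} is exactly what is needed to prevent $\lyapfunShift$ from vanishing away from $\stEq$, and the $\C{1}$ regularity of $\stMap$ is what lifts the decrescence of $\lyapfunVelo$ at $0$ to decrescence of $\lyapfunShift$ at $\stEq$. Once this is in place, the rest of the argument is a direct chain-rule computation and an appeal to \cref{4_thm:shiftlyapstab}.
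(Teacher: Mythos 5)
Your proposal is correct and follows essentially the same route as the paper's own proof: the same candidate $\lyapfunShift(\st,\gdEq)=\lyapfunVelo(\stMap(\st,\gdEq))$, the same use of \cref{4_assum:uniqueEq} to guarantee $\lyapfunShift(\cdot,\gdEq)\in\posClass{\stEq}$ and to identify $\dot\st(t)=0$ with $\st(t)=\stEq$ in the strict case, and the same appeal to \cref{4_thm:shiftlyapstab}. The only difference is that you spell out the positive-definiteness and decrescence verification that the paper asserts in one line, which is a welcome addition rather than a deviation.
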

\begin{proof}
See \cref{4_pf:velotoshiftstab}.
\end{proof}
The proof of \cref{4_thm:velotoshiftstab} relies on a construction of a US Lyapunov function based on $\lyapfunVelo$ and is based on the so-called Krasovskii method for Lyapunov stability \cite{Khalil2002}. A similar method is also used for the results in \cite{Kawano2021,Schweidel2022}.

Note that the condition in \cref{4_thm:velotoshiftstab} can be interpreted as the velocity form being stable (w.r.t. $\dot\st=0$), similar to how VD can be seen classically dissipativity of the velocity form. This is also intuitive, since $\dot\st(t)=0$ corresponds to an equilibrium point of the system. For $t\to \infty$, we have that $\dot\st(t)\to 0$ in the asymptotic stability case, meaning the state converges to an equilibrium point, i.e., $\st(t)\to\stEq\in\stSetEq$. 

We can also connect VD and US(A)S through \cref{4_thm:velotoshiftstab} and by restricting the velocity supply function $\supfunVelo$:
\begin{theorem}[USS from VD]\label{4_lem:velostab}
	Assume the nonlinear system given by \cref{4_eq:nonlinsys} is VD under a storage function $\storfunVelo\in\C{1}$ w.r.t. a supply function $\supfunVelo$ that satisfies
	\begin{equation}\label{4_eq:supplystability}
		\supfunVelo(0,\gpdot)\leq 0,
	\end{equation}	
	for all $\gpdot\in\reals^\gpSize$, then, the nonlinear system is USS. If the supply function satisfies \cref{4_eq:supplystability}, but with strict inequality when $\gpdot\neq 0$, and the system is observable, then the nonlinear system is USAS.
\end{theorem}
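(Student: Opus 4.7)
The plan is to reduce this statement to \cref{4_thm:velotoshiftstab} by taking $\lyapfunVelo = \storfunVelo$. The chain is: the VD dissipation inequality, specialized to trajectories with constant input $\gd \equiv \gdEq$, collapses its right-hand side to something non-positive via the sign condition \cref{4_eq:supplystability} on $\supfunVelo$, which is exactly the hypothesis \cref{4_eq:velostability} of \cref{4_thm:velotoshiftstab}.

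First, since $\storfunVelo \in \C{1}$, the integral VD inequality \cref{4_eq:velodissip} is equivalent to the pointwise form $\tPartial{}{t}\storfunVelo(\dot\st(t)) \leq \supfunVelo(\dot\gd(t),\dot\gp(t))$ along every $(\dot\st,\dot\gd,\dot\gp) \in \Bv$. Now I restrict to trajectories in $\Bvset{\gdSetEq}$, i.e., those generated by constant inputs $\gd(t) \equiv \gdEq \in \gdSetEq$. Along these, $\dot\gd(t) \equiv 0$, so the right-hand side reduces to $\supfunVelo(0,\dot\gp(t)) \leq 0$ by \cref{4_eq:supplystability}. This yields $\tPartial{}{t}\storfunVelo(\dot\st(t)) \leq 0$ for all $\dot\st \in \proj_{\dot\st}\Bvset{\gdSetEq}$, which is precisely \cref{4_eq:velostability} with $\lyapfunVelo = \storfunVelo$. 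Since $\storfunVelo \in \posClass{0}$ and $\storfunVelo \in \C{1}$ by \cref{4_def:velodissip}, \cref{4_thm:velotoshiftstab} delivers USS.

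For the USAS claim, the strict version of \cref{4_eq:supplystability} gives strict decrease of $\storfunVelo(\dot\st(t))$ whenever $\dot\gp(t) \neq 0$. To invoke the asymptotic part of \cref{4_thm:velotoshiftstab}, one needs strict decrease whenever $\dot\st(t) \neq 0$. The bridge is the observability assumption on the primal system: along a trajectory with $\gd \equiv \gdEq$, if $\dot\gp \equiv 0$ on an interval, then $\gp$ is constant while the input is constant, and observability in the sense of \cite[Definition 3.27]{Nijmeijer2016} forces $\st$ to be constant on that interval, i.e., $\dot\st \equiv 0$. Contrapositively, $\dot\st \neq 0$ implies $\dot\gp \neq 0$, and the strict-inequality hypothesis of \cref{4_thm:velotoshiftstab} is recovered.

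The main obstacle is this last observability step: the cited definition distinguishes initial conditions through full-interval output responses, so converting it into a pointwise/local implication may need either a LaSalle-type invariance argument on the $\omega$-limit set inside $\{\dot\gp = 0\}$, or viewing $\gd \equiv \gdEq$ as a fixed autonomous system and applying observability to deduce that constancy of $\gp$ on any positive-measure interval pins down a unique trajectory that must itself be an equilibrium. Everything else is a straightforward reuse of \cref{4_thm:velotoshiftstab}.
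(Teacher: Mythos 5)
Your proposal is correct and follows essentially the same route as the paper's proof: reduce to \cref{4_thm:velotoshiftstab} with $\lyapfunVelo = \storfunVelo$, use $\dot\gd \equiv 0$ on $\Bvset{\gdSetEq}$ together with \cref{4_eq:supplystability} to obtain the non-positive derivative, and invoke observability to pass from $\dot\gp = 0$ to $\dot\st = 0$ for the asymptotic case. The observability step you flag as delicate is handled in the paper at the same level of informality (constancy of $\gp$ under constant input implies $\st(t)=\stEq$), so your proof matches the paper's in both structure and rigor.
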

\begin{proof}
See \cref{4_pf:velostab}.
\end{proof}

In a similar fashion as for $\qsr$-USD, also for $\qsr$-VD, the condition in \cref{4_lem:velostab} reads as $\supR\preceq 0$. 


\subsection{Induced universal shifted dissipativity}\label{4_sec:veloshiftdissip}
In the previous section, we have seen how $\qsr$-VD implies US(A)S. Next, we are interested if $\qsr$-VD also implies $\qsr$-USD. Therefore, we formulate the following \lcnamecref{4_prop:veloshifteddissip}:
\begin{conjecture}[Induced $\qsr$-USD]\label{4_prop:veloshifteddissip}
	If a nonlinear system given by \cref{4_eq:nonlinsys} is $\qsr$-VD, then, it is also $\qsr$-USD for the same tuple $\qsr$.
\end{conjecture}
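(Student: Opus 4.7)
The natural starting point is to search for a quadratic shifted storage function of the form $\storfunShift(\st,\gdEq) = (\st-\stEq)^\top \storquad (\st-\stEq)$, where $\storquad\in\sym^\stSize$ with $\storquad\succ 0$ is the same matrix that certifies $\qsr$-VD via \cref{4_thm:veloqsrMI}. The reason this candidate is plausible is that under \cref{4_assum:uniqueEq}, the shifted dynamics $\tfrac{d}{dt}(\st-\stEq) = \stMap(\st,\gd)-\stMap(\stEq,\gdEq)$ admit a path-integral representation in terms of the velocity-form Jacobians, which is precisely the object certified by VD.

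Concretely, since $\stMap(\stEq,\gdEq)=0$ and $\stMap,\opMap\in\C{1}$, the fundamental theorem of calculus along the straight-line path $\tau\mapsto(\stEq+\tau(\st-\stEq),\,\gdEq+\tau(\gd-\gdEq))$ for $\tau\in[0,1]$ yields
\begin{align}
\stMap(\st,\gd) &= \textstyle\int_0^1 \bigl[\velA(\xi_\tau,\omega_\tau)(\st-\stEq) + \velB(\xi_\tau,\omega_\tau)(\gd-\gdEq)\bigr]\,d\tau, \\
\opMap(\st,\gd)-\opMap(\stEq,\gdEq) &= \textstyle\int_0^1 \bigl[\velC(\xi_\tau,\omega_\tau)(\st-\stEq) + \velD(\xi_\tau,\omega_\tau)(\gd-\gdEq)\bigr]\,d\tau,
\end{align}
where $\xi_\tau = \stEq+\tau(\st-\stEq)$ and $\omega_\tau = \gdEq+\tau(\gd-\gdEq)$. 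Next, I would apply the pointwise LMI \cref{4_eq:veloMI} at each $(\xi_\tau,\omega_\tau)\in\stSet\times\gdSet$, left-/right-multiplied by the vector $\col(\st-\stEq,\gd-\gdEq)$. This produces, for each $\tau$, the bound $2(\st-\stEq)^\top \storquad[\velA(\xi_\tau,\omega_\tau)(\st-\stEq)+\velB(\xi_\tau,\omega_\tau)(\gd-\gdEq)] \leq \supQ$-$\supS$-$\supR$ quadratic in $(\gd-\gdEq)$ and $\phi(\tau)\coloneqq \velC(\xi_\tau,\omega_\tau)(\st-\stEq)+\velD(\xi_\tau,\omega_\tau)(\gd-\gdEq)$. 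Integrating this inequality over $\tau\in[0,1]$ and using the path-integral identities, the left-hand side becomes exactly $\tfrac{d}{dt}\storfunShift = 2(\st-\stEq)^\top \storquad\stMap(\st,\gd)$, while the $\supQ$- and $\supS$-terms on the right reassemble into $(\gd-\gdEq)^\top \supQ(\gd-\gdEq)$ and $2(\gd-\gdEq)^\top \supS(\gp-\gpEq)$ precisely, yielding the correct shifted supply terms.

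The key obstacle is the $\supR$-term. Integration over $\tau$ leaves $\int_0^1 \phi(\tau)^\top \supR\,\phi(\tau)\,d\tau$, whereas the target shifted supply requires $(\gp-\gpEq)^\top \supR(\gp-\gpEq) = \bigl(\int_0^1\phi(\tau)\,d\tau\bigr)^\top \supR\bigl(\int_0^1\phi(\tau)\,d\tau\bigr)$. By Jensen's inequality for vector-valued integrands with a matrix-weighted quadratic, one has $\int_0^1 \phi^\top \supR\phi\,d\tau \leq \bigl(\int_0^1 \phi\,d\tau\bigr)^\top \supR \bigl(\int_0^1 \phi\,d\tau\bigr)$ whenever $\supR\preceq 0$, and the reverse otherwise. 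Consequently, the above chain of inequalities closes and delivers the differential USD inequality $\tfrac{d}{dt}\storfunShift \leq \supfunShift$ whenever $\supR\preceq 0$; integrating from $t_0$ to $t_1$ then yields \cref{4_eq:shifteddissip}. Since the $\mathcal{L}_2$-gain case uses $\supR=-I$ and passivity uses $\supR=0$, this already covers the most important instances of \cref{4_prop:veloshifteddissip}.

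For indefinite or positive semidefinite $\supR$, Jensen's inequality goes the wrong way, and I expect this to be the genuine hard part of the conjecture. Possible remedies I would explore are: (i) strengthening the storage ansatz beyond the straight-line Lyapunov candidate, e.g., a state-dependent weight $\storfunShift(\st,\gdEq)=(\st-\stEq)^\top \storquad(\st,\gdEq)(\st-\stEq)$ selected so that the averaged LMI slack absorbs the Jensen gap; (ii) exploiting a nonlinear change of coordinates that renders $\velC,\velD$ constant along the homotopy, so that $\phi(\tau)$ is constant in $\tau$ and Jensen holds with equality; or (iii) replacing the straight-line interpolation by the actual forward-time trajectory from $\stEq$ to $\st$ under a carefully chosen input, in which case the quadratic $\supR$-term can be bounded using the dissipation inequality of the velocity form applied along that constructed trajectory. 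None of these appear to give the full general statement for free, which is presumably why the authors leave it as a conjecture.
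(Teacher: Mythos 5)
First, note that the paper does not prove this statement either: it is left as a conjecture, and the paper only establishes a weakened surrogate (\cref{4_thm:veloshiftperf}, \cref{4_cor:veloshiftl2}) for systems of the restricted form \cref{4_eq:nonlinsysState} under \cref{4_as:CB,4_as:wExoSys}, with $\supS=0$, $\supQ\succeq 0$, $\supR\preceq 0$, and with the supply degraded by the factors $\alpha^{-1}$ and $\beta^2$ (so the \lstwo-gain bound inflates to $\sqrt{\alpha\beta^2\perf^2}$). Your route is genuinely different and, for the case it covers, stronger: the path-integral-plus-Jensen argument with the candidate $\storfunShift(\st,\gdEq)=(\st-\stEq)^\top\storquad(\st-\stEq)$ does close correctly for $\supR\preceq 0$ (the concavity direction of Jensen is the right one), and it delivers exact $\qsr$-USD with the \emph{same} tuple, an explicit storage function, arbitrary $\supS$, no $\ltiC\ltiB=0$ condition, and no exosystem assumption on $\gd$. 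In essence you are observing that the LMI \cref{4_eq:veloMI} is algebraically identical to a differential dissipativity certificate with constant quadratic storage, and then running the standard differential-to-incremental integration along the segment joining $(\st,\gd)$ to $(\stEq,\gdEq)$ — the route of the authors' incremental work \cite{Koelewijn2020b} — rather than working with the velocity form along actual time trajectories as the paper does. Your diagnosis that indefinite or positive semidefinite $\supR$ is the genuine obstruction is also exactly right.

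Two caveats keep your argument from being a proof of the conjecture as stated, and you should flag them. First, you quietly strengthen the hypothesis: the conjecture assumes $\qsr$-VD in the sense of \cref{4_def:velodissip}, i.e., existence of \emph{some} $\C{1}$ velocity storage satisfying the integral inequality along trajectories, whereas your argument needs the pointwise matrix inequality \cref{4_eq:veloMI} with a \emph{constant} $\storquad$ to hold at every point of the straight-line path — a certificate that is sufficient but not necessary for VD (\cref{4_thm:veloqsrMI}). A system that is VD with a non-quadratic or state-dependent storage is not covered. Second, the segment $(\stEq+\tau(\st-\stEq),\,\gdEq+\tau(\gd-\gdEq))$ must remain in $\stSet\times\gdSet$ for the pointwise LMI to be invocable, so you need convexity of $\stSet\times\gdSet$ (the paper assumes convexity of $\stSet$ for its restricted result, but you should state it). With these two hypotheses made explicit, your argument is a correct and clean proof of a restricted case of the conjecture that is incomparable to, and in several respects sharper than, the one the paper provides.
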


Due to technical reasons, can only provide a proof for a restricted case of \cref{4_prop:veloshifteddissip}. Nevertheless, later, we will show strong empirical evidence that \cref{4_prop:veloshifteddissip} does hold in practice.


Let us consider nonlinear systems given by
\begin{subequations}\label{4_eq:nonlinsysState}
\begin{align}
	\dot \st(t) &= \stMap(\st(t))+\ltiB \gd(t);\\
	\op(t) &= \ltiC \st(t).
\end{align}
\end{subequations}
Note that, at the cost of increasing the state dimension, e.g., by appending the system with appropriate input-output filters (see \cref{sec:app-coarse-sys}), we can always transform nonlinear systems given by \cref{4_eq:nonlinsys} to the form \cref{4_eq:nonlinsysState}. Besides considering systems of the form \cref{4_eq:nonlinsysState}, we will assume, in this subsection, that $\stSet$, i.e., the state set, is convex and compact.


Under these considerations, we will connect $\qsr$-VD for $\qsr$ tuples for which $\supS=0$, $\supQ\succeq 0$, and $\supR\preceq 0$ to USP notions that can be characterized by a similar US $\qsr$ supply function. Before presenting these results, we will first introduce two assumptions:
\begin{assumption}\label{4_as:CB}
	For the nonlinear system given by \cref{4_eq:nonlinsysState}, assume that $\ltiC \ltiB=0$.
\end{assumption}
\begin{proposition}\label{4_as:veloShiftBound2}
	Given a matrix $\supR\in\sym^\gpSize$ with $\supR\preceq 0$, there exists an $\alpha >0$ such that for all $\stEq\in\stSetEq$ and $\st\in\stSet$
	\begin{multline}\label{4_eq:veloshiftbound2}
	(\st-\stEq)^\top\intA(\st,\stEq)\!^\top \ltiC^\top \supR \ltiC\intA(\st,\stEq)(\st-\stEq)\leq\\\alpha^{-1}(\st-\stEq)^\top \ltiC^\top \supR \ltiC(\st-\stEq),
\end{multline}
where $\intA(\st,\stEq) = \int_0^1 \velA(\stEq+\var(x-\stEq))\, d\var$.
\end{proposition}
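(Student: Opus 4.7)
The plan is to reduce the statement to a much simpler scalar inequality by combining the fundamental theorem of calculus with Assumption~\ref{4_as:CB}, and then close by a compactness/continuity argument on $\stSet\times\stSetEq$.

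First, I would apply the fundamental theorem of calculus to $g(\var) := \stMap(\stEq + \var(\st-\stEq))$. Since $g'(\var) = \velA(\stEq+\var(\st-\stEq))(\st-\stEq)$, integrating over $\var\in[0,1]$ yields the identity $\intA(\st,\stEq)(\st-\stEq) = \stMap(\st)-\stMap(\stEq)$. Because $(\stEq,\gdEq)\in\eqSet$ for \cref{4_eq:nonlinsysState} satisfies $\stMap(\stEq) = -\ltiB\gdEq$, Assumption~\ref{4_as:CB} gives $\ltiC\stMap(\stEq) = -\ltiC\ltiB\gdEq = 0$. Pre-multiplying the identity by $\ltiC$ therefore produces $\ltiC\intA(\st,\stEq)(\st-\stEq) = \ltiC\stMap(\st)$, a quantity that notably does not depend on $\stEq$.

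Setting $R' := -\supR\succeq 0$, the claim then reduces to the uniform lower bound
\begin{equation*}
\bigl\|(R')^{1/2}\ltiC\stMap(\st)\bigr\|^{2} \;\geq\; \alpha^{-1}\,\bigl\|(R')^{1/2}\ltiC(\st-\stEq)\bigr\|^{2}
\end{equation*}
over $(\st,\stEq)\in\stSet\times\stSetEq$. I would close this by compactness: $\stSet$ is compact and convex by hypothesis, $\stSetEq\subseteq\stSet$ inherits compactness (by \cref{4_assum:uniqueEq} and continuity of $\eqMap$), and continuity of $\stMap$ and $\velA$ makes both sides continuous in $(\st,\stEq)$. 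Off the diagonal $\{\st=\stEq\}$, the ratio of the two sides is a continuous function on a compact set and hence attains its supremum, which one takes as $\alpha$.

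The hard part is handling the diagonal, where both sides vanish. There I would invoke the FTC identity a second time to write $\ltiC\stMap(\st) = \ltiC\intA(\st,\stEq)(\st-\stEq)$, with $\intA(\st,\stEq)\to\velA(\stEq)$ as $\st\to\stEq$, so that the limiting ratio equals $\|(R')^{1/2}\ltiC\velA(\stEq)(\st-\stEq)\|^{2}/\|(R')^{1/2}\ltiC(\st-\stEq)\|^{2}$. The real obstacle is establishing uniform finiteness of this limit as $\stEq$ varies over $\stSetEq$: this is precisely where the structural content of the system~\cref{4_eq:nonlinsysState} together with Assumption~\ref{4_as:CB} must be leveraged to guarantee that $\ltiC\velA(\stEq)$ acts non-degenerately on the tangent directions $\ltiC(\st-\stEq)$, and hence to certify a finite, uniform $\alpha$.
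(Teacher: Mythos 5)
Your opening reduction is correct and is actually sharper than anything in the paper: by the fundamental theorem of calculus, $\intA(\st,\stEq)(\st-\stEq)=\stMap(\st)-\stMap(\stEq)$, and since $\stMap(\stEq)=-\ltiB\gdEq$ at any forced equilibrium, \cref{4_as:CB} gives $\ltiC\intA(\st,\stEq)(\st-\stEq)=\ltiC\stMap(\st)$. Be aware, though, that the paper contains no proof of this statement to compare against: it is introduced in the text as one of ``two assumptions'' (its label is even of assumption type), and the only justification offered is the one-sentence remark that boundedness of $\velA$ and compactness of $\stSet$ suffice. That remark addresses the wrong direction of bound. Writing $R'=-\supR\succeq 0$, the claimed inequality is equivalent to the \emph{lower} bound $\alpha\,\lVert (R')^{1/2}\ltiC\stMap(\st)\rVert^{2}\geq\lVert (R')^{1/2}\ltiC(\st-\stEq)\rVert^{2}$; boundedness of $\velA$ on a compact set only yields an \emph{upper} bound on the left-hand side and therefore cannot establish it.

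The genuine gap is your compactness step, and it cannot be repaired from the stated hypotheses. The set $\{\st\neq\stEq\}$ is not compact, and — more importantly — the quantity $\lVert (R')^{1/2}\ltiC\stMap(\st)\rVert$ can vanish at states that are \emph{not} equilibria while $\lVert (R')^{1/2}\ltiC(\st-\stEq)\rVert$ stays positive, so the ratio you propose to maximize is $+\infty$ on a nontrivial set, not just in the diagonal limit. The paper's own unbalanced disk provides a counterexample: with $\st=(\theta,\omega)$, $\ltiC=[\,1\ \ 0\,]$ and $\stMap(\st)=(\omega,\ \tfrac{Mgl}{J}\sin\theta-\tfrac{1}{\tau}\omega)$, Assumption~2 holds ($\ltiC\ltiB=0$), yet $\ltiC\stMap(\st)=\omega$ vanishes on the entire hyperplane $\omega=0$, where $\ltiC(\st-\stEq)=\theta-\theta_{*}$ is generically nonzero; taking $\supR=-1$, no finite $\alpha$ satisfies \cref{4_eq:veloshiftbound2}. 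Your closing paragraph — that the structure of \cref{4_eq:nonlinsysState} and \cref{4_as:CB} ``must be leveraged to guarantee'' non-degeneracy of $\ltiC\velA$ — is therefore appealing to a step that does not exist: those hypotheses do not imply it. The correct conclusion is that \cref{4_as:veloShiftBound2} is an additional standing assumption on the system (a non-degeneracy condition on $\ltiC\intA$ relative to $\ltiC$ in the directions weighted by $\supR$), not a provable consequence; your attempt usefully isolates exactly what must be assumed, but it does not, and cannot, prove it.
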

Assuming that $\velA$ is bounded, there will always exist an $\alpha$ for a given $\supR$ such that \cref{4_eq:veloshiftbound2} holds, as $\stSet$ is considered compact.

Similar to other works \cite{Koroglu2007,Wieland2009}, we assume that the disturbance $w$ is generated by an exosystem:
\begin{assumption}\label{4_as:wExoSys}
	For a given $(\stEq,\gdEq,\gpEq)\in\eqSet$, assume that $\gd$ is generated by the exosystem
	\begin{equation}\label{4_eq:exoSys}
		\dot \gd(t) = \exoA(\gd(t)-\gdEq),
	\end{equation}
	where $\exoA\in\reals^{\gdSize\times\gdSize}$ is Hurwitz and $\norm{\exoA}_{2,2}\leq \beta$. Define the corresponding behavior for a given $\gdEq \in \gdSetEq $ as
	\begin{equation}
		\exoBvr_{\gdEq}=\left\{ \gd\in \gdSet^{\nnreals} \mid \dot \gd\in\C{1},\,\gd\text{ satisfies } \cref{4_eq:exoSys}\right\}.
	\end{equation}
\end{assumption}
Note that constant and decaying (towards $\gdEq$) disturbances satisfy the behavior considered in \cref{4_as:wExoSys}.

Under these assumptions, we can formulate the following result to link $\qsr$-VD to USP:
\begin{theorem}[USP from $\qsr$-VD]\label{4_thm:veloshiftperf}
Consider a nonlinear system given by \cref{4_eq:nonlinsysState} for which \cref{4_as:CB,4_as:wExoSys} hold. If the system is $\qsr$-VD where $\supS=0$, $\supQ\succeq 0$, and $\supR\preceq 0$, then	
\begin{equation}\label{4_eq:pf:qrvsp}
	\int_{0}^{T} \beta^2(\star)^\top \supQ (\gd(t)-\gdEq)+\alpha^{-1}(\star)^\top  \supR (\gp(t)-\gpEq)\,dt \geq 0,
\end{equation}
for all $T> 0$, $(\gd,\gp)\in\proj_\mr{\gd,\gp}\Bc$ with $\gd\in\exoBvr_{\gdEq}$ and $\dot\st(0)=0$, and for every $(\stEq,\gdEq,\gpEq)\in\eqSet$.
\end{theorem}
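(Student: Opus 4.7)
The plan is to derive the target inequality \cref{4_eq:pf:qrvsp} by combining the $\qsr$-velocity dissipation inequality with two pointwise estimates: one that converts $\dot\gd$ into $\gd-\gdEq$ via the exosystem, and one that converts $\dot\gp$ into $\gp-\gpEq$ via the structural identity $\ltiC\ltiB=0$ together with \cref{4_as:veloShiftBound2}. First I would instantiate \cref{4_def:velodissip} with the quadratic supply \cref{4_eq:velosupply} and $\supS=0$: since $\storfunVelo\in\posClass{0}$ satisfies $\storfunVelo(0)=0$ and $\storfunVelo\geq 0$ everywhere, the hypothesis $\dot\st(0)=0$ immediately yields
\[
\int_0^T \dot\gd^\top \supQ \dot\gd + \dot\gp^\top \supR \dot\gp\, dt \;\geq\; \storfunVelo(\dot\st(T)) \;\geq\; 0.
\]

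For the input integrand, \cref{4_as:wExoSys} gives $\dot\gd=\exoA(\gd-\gdEq)$, so together with $\supQ\succeq 0$ and the spectral bound $\norm{\exoA}_{2,2}\leq\beta$,
\[
\dot\gd^\top\supQ\dot\gd=(\gd-\gdEq)^\top\exoA^\top\supQ\exoA(\gd-\gdEq)\leq\beta^2(\gd-\gdEq)^\top\supQ(\gd-\gdEq).
\]
For the output integrand, the cancellation $\ltiC\ltiB=0$ collapses $\dot\gp=\ltiC\stMap(\st)+\ltiC\ltiB\gd$ to $\dot\gp=\ltiC\stMap(\st)$. The equilibrium relation $\stMap(\stEq)=-\ltiB\gdEq$, combined once more with $\ltiC\ltiB=0$, gives $\ltiC\stMap(\stEq)=0$, hence
\[
\dot\gp=\ltiC\bigl[\stMap(\st)-\stMap(\stEq)\bigr]=\ltiC\,\intA(\st,\stEq)(\st-\stEq),
\]
where the last identity is the fundamental theorem of calculus along the segment $\stEq+\var(\st-\stEq)$, which lies in $\stSet$ by convexity of the state set. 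Together with $\gp-\gpEq=\ltiC(\st-\stEq)$, \cref{4_as:veloShiftBound2} then yields the pointwise bound $\dot\gp^\top\supR\dot\gp\leq\alpha^{-1}(\gp-\gpEq)^\top\supR(\gp-\gpEq)$, in the correct direction precisely because $\supR\preceq 0$.

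Adding the two pointwise estimates, integrating over $[0,T]$, and chaining with the VD inequality delivers \cref{4_eq:pf:qrvsp}. The step I expect to require the most care is the input bound, because $\exoA^\top\supQ\exoA\preceq\beta^2\supQ$ is not equivalent to $\exoA^\top\exoA\preceq\beta^2 I$ for a general $\supQ\succeq 0$; in the canonical \lstwo-gain case $\supQ=\perf^2 I$ targeted by the preceding theory the two coincide and the bound is immediate, whereas for general quadratic supplies the bound on $\norm{\exoA}_{2,2}$ must be read in the $\supQ$-weighted sense. Everything else is purely algebraic and rests on the two features that make $\gp-\gpEq$ and $\dot\gp$ proportional (modulo the integrated Jacobian $\intA$): the linearity of the output, input, and exosystem maps, and the cancellation $\ltiC\ltiB=0$ that eliminates the direct feedthrough of $\gd$ into $\dot\gp$.
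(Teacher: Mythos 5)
Your proposal follows essentially the same route as the paper's proof: the zero initial velocity state and $\storfunVelo\in\posClass{0}$ turn the VD inequality into a nonnegative integral, the exosystem relation bounds the $\dot\gd$ term, and $\ltiC\ltiB=0$ together with the fundamental theorem of calculus and \cref{4_as:veloShiftBound2} bounds the $\dot\gp$ term. The only difference is cosmetic --- the paper first normalizes $(\tilde\supQ,\tilde\supR)=(\supQ,\supR)/\norms{\supQ}$ before applying the exosystem estimate --- and the subtlety you correctly flag, namely that $\exoA^\top\supQ\exoA\preceq\beta^2\supQ$ does not follow from $\norms{\exoA}_{2,2}\leq\beta$ for a general $\supQ\succeq 0$, is present in the paper's own step \cref{4_eq:pf:qrvsp3} as well and is likewise only immediate in the $\supQ=\perf^2 I$ case targeted by \cref{4_cor:veloshiftl2}.
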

\begin{proof}
	See \cref{4_pf:veloshiftperf}.
\end{proof}
 
Applying the result of \cref{4_thm:veloshiftperf} to the $\qsr$ tuple $\qsr = (\perf^2I,0,-I)$, corresponding to the (US) \ltwo-gain, we obtain the following \lcnamecref{4_cor:veloshiftl2}:
\begin{corollary}[Bounded \lstwo-gain from VD]\label{4_cor:veloshiftl2}
Consider a nonlinear system given by \cref{4_eq:nonlinsysState} for which \cref{4_as:CB,4_as:wExoSys} hold. If the system is $\qsr$-VD for $\qsr=(\perf^2I,0,-I)$,  where $\supR=-I$,  then the system has an \lstwo-gain bound of $\tilde\perf = \sqrt{\alpha\beta^2\perf^2}$.
\end{corollary}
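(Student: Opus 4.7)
The plan is to specialize \cref{4_thm:veloshiftperf} to the tuple $\qsr = (\perf^2 I, 0, -I)$ and then rearrange the resulting integral inequality into the \lstwo-gain form of \cref{4_def:shiftlplqgain}. The heavy lifting (the connection between $\qsr$-VD of the velocity form and the shifted supply inequality for the primal system) has already been done by \cref{4_thm:veloshiftperf}; here only algebra and a square-root are needed, together with an initial-condition argument.

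First I would substitute $\supQ = \perf^2 I$, $\supS = 0$, $\supR = -I$ directly into \cref{4_eq:pf:qrvsp}. All hypotheses of \cref{4_thm:veloshiftperf} are satisfied: $\supQ\succeq 0$, $\supR\preceq 0$, $\supS=0$, and the system is $\qsr$-VD for this choice by assumption. The inequality then reads
\begin{equation*}
\int_0^T \beta^2\perf^2 \norm{\gd(t)-\gdEq}^2 \,dt \;\geq\; \int_0^T \alpha^{-1}\norm{\gp(t)-\gpEq}^2 \,dt,
\end{equation*}
for every $T>0$ and every admissible trajectory with $\gd\in\exoBvr_{\gdEq}$ and $\dot{\st}(0)=0$.

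Next I would multiply through by $\alpha$ and take the square root. With the notation from \cref{4_def:shiftlplqgain}, this yields
\begin{equation*}
\norm{\gp-\gpEq}_{2,T}^2 \;\leq\; \alpha\beta^2\perf^2\, \norm{\gd-\gdEq}_{2,T}^2 \;=\; \tilde\perf^{\,2}\,\norm{\gd-\gdEq}_{2,T}^2,
\end{equation*}
so $\norm{\gp-\gpEq}_{2,T}\leq \tilde\perf \norm{\gd-\gdEq}_{2,T}$, which is exactly the \lstwo-gain inequality with the bias term $\icfunShift$ taken to be identically zero whenever $\dot\st(0)=0$, i.e.\ for trajectories started at the equilibrium $\stEq = \eqMap(\gdEq)$.

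The only delicate point, and the place where some care is needed, is extending from $\dot\st(0)=0$ to general initial conditions $\stIc\in\stSet$, since \cref{4_def:shiftlplqgain} permits a nonzero $\icfunShift(\stIc,\stEq)$. I would handle this by using the quadratic velocity storage function $\storfunVelo(\dot\st)=\dot\st^\top \storquad\dot\st$ supplied by $\qsr$-VD: the dissipation inequality \cref{4_eq:velodissip} contributes at most $\storfunVelo(\dot\st(0))$ on the right-hand side, and at $t=0$ the velocity form gives $\dot\st(0) = \stMap(\stIc)+\ltiB\gdEq = \stMap(\stIc)-\stMap(\stEq)$ under \cref{4_as:wExoSys} (since $\gd(0)=\gdEq$ in $\exoBvr_{\gdEq}$). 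This quantity depends only on $(\stIc,\stEq)$, and I would absorb it into $\icfunShift(\stIc,\stEq)$ via the elementary inequality $(a+b)^{1/2}\leq a^{1/2}+b^{1/2}$ after rearranging the augmented $\ltwo$-type bound. The main obstacle will be making this absorption step fully rigorous while respecting the structure of \cref{4_def:shiftlplqgain}; all other steps are direct substitutions from \cref{4_thm:veloshiftperf}.
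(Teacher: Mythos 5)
Your core argument is exactly the paper's proof: specialize \cref{4_thm:veloshiftperf} to $\qsr=(\perf^2 I,0,-I)$, multiply through by $\alpha$, and take the square root to obtain $\norm{\gp-\gpEq}_{2,T}\leq\tilde\perf\norm{\gd-\gdEq}_{2,T}$ with $\tilde\perf=\sqrt{\alpha\beta^2\perf^2}$. The ``delicate point'' you raise about general initial conditions is one the paper does not attempt to resolve at all: its proof simply inherits the restriction $\dot\st(0)=0$ from \cref{4_thm:veloshiftperf} and states the gain bound, with no bias term, for that restricted class of trajectories. Your sketch of the extension is the natural one (carry $\storfunVelo(\dot\st(0))$ through the chain of inequalities in the proof of \cref{4_thm:veloshiftperf} and absorb its square root into $\icfunShift$ via $\sqrt{a+b}\leq\sqrt{a}+\sqrt{b}$), but as written it contains an error: membership in $\exoBvr_{\gdEq}$ only imposes $\dot\gd=\exoA(\gd-\gdEq)$ and does \emph{not} force $\gd(0)=\gdEq$ (decaying disturbances precisely have $\gd(0)\neq\gdEq$), so $\dot\st(0)=\stMap(\stIc)+\ltiB\gd(0)$ depends on $\gd(0)$ as well and cannot be written as a function of $(\stIc,\stEq)$ alone, as \cref{4_def:shiftlplqgain} requires for $\icfunShift$. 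Since the extension is not needed to prove the corollary as stated (and as the paper proves it), this does not invalidate your proposal, but the extra step should either be dropped or repaired by additionally bounding the dependence on $\gd(0)$.
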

\begin{proof}
See \cref{4_pf:veloshiftl2}.
\end{proof}

Note that these results form a direct upperbound for \cref{4_prop:veloshifteddissip}. The results of \cref{4_sec:veloshiftstab,4_sec:veloshiftdissip} constitute \cref{con:velo}. 





\section{Convex Universal Shifted Analysis}\label{4_sec:velousinglpv}
We have shown how $\qsr$-VD, considering a quadratic storage function $\storfunVelo$ of the form \cref{4_eq:velostorquad}, can be analyzed through a feasibility check of an infinite-dimensional set of LMIs. In this section, we will discuss how we can make the analysis computationally feasible and efficient through the use of methods from the LPV framework.

As the state-space matrices of the velocity form vary with $(\st,\gd)$, we obtain an infinite-dimensional set of LMIs for VD analysis. This is similar to the LPV case, where the state-space matrices in an LPV representation vary with the scheduling-variable $\sch$, which also results in an infinite-dimensional set of LMIs for classical dissipativity analysis of LPV systems \cite{Briat2015,Hoffmann2015}. In the LPV framework, various methods exist to turn the infinite-dimensional problem into a finite-dimensional one, such as through polytopic, grid-based, and multiplier-based techniques \cite{Hoffmann2015}. Inspired by the connection between the velocity form and LPV representations, we propose the use of LPV analysis results to make the VD analysis problem of nonlinear systems computationally feasible and efficient.

For that purpose, we introduce the following so-called \emph{Velocity Parameter-Varying} (VPV) embedding:
\begin{definition}[VPV embedding]\label{4_def:vpvembed}
	Consider a system given by \cref{4_eq:nonlinsys} with velocity form \cref{4_eq:veloform}. Furthermore, consider the LPV state-space representation
\begin{subequations}\label{4_eq:vpv}
		\begin{align}
		\dot\stdot(t)&=\lpvA(\sch(t))\stdot(t) + \lpvB(\sch(t))\gddot(t);\\ 
		\gpdot(t) &= \lpvC(\sch(t))\stdot(t) + \lpvD(\sch(t))\gddot(t);
	\end{align}
\end{subequations}
where $\stdot(t)\in\reals^\stSize$ is the state, $\gddot(t)\in\reals^\gdSize$ the input, and $\gpdot\in\reals^\gpSize$ the output of the LPV representation, with $\sch(t)\in\schSet\subset\reals^\schSize$ being the scheduling-variable, and matrix functions $\lpvA,\dots,\lpvD$ being of appropriate size and belonging to a given class of functions $\lpvClass$ (e.g., affine or rational). The LPV representation\footnote{Various methods available in the LPV framework to reduce the conservatism of the embedding, for a given dependency class of $\lpvA,\,\dots,\,\lpvD$ (e.g., affine, polynomial, rational, etc.), can also be used to reduce the conservatism of the VPV embedding, see e.g., \cite{Sadeghzadeh2020a,Toth2010}.} \cref{4_eq:vpv} is a so-called VPV embedding of \cref{4_eq:nonlinsys} on the region $\stSetLPV\times\gdSetLPV\subseteq \stSet\times\gdSet$, if there exists a function $\schMap : \stSetLPV\times\gdSetLPV \to \schSet$, the so-called scheduling-map, with $\sch = \schMap(\st,\gd)$, and $\schSet\supseteq\schMap(\stSetLPV,\gdSetLPV)$, such that:
\begin{equation}\label{4_eq:matvpvembed}
\begin{aligned}
\lpvA(\schMap(\st,\gd)) &= \velA(\st,\gd), \qquad \lpvB(\schMap(\st,\gd)) =  \velB(\st,\gd),\\
\lpvC(\schMap(\st,\gd)) &=  \velC(\st,\gd), \qquad \lpvD(\schMap(\st,\gd)) =  \velD(\st,\gd),
\end{aligned}
\end{equation}
for all $(\st,\gd) \in \stSetLPV\times\gdSetLPV$.
\end{definition}
The behavior of a VPV embedding given by \cref{4_eq:vpv} for a $\sch\in\schSet^\nnreals$ is
\begin{multline}
	\Blpv = \{(\stdot,\gddot,\gpdot)\in(\reals^\stSize\times\reals^\gdSize\times\reals^\gpSize)^{\nnreals}\mid \\\stdot\in\C{2},\, \gddot,\gpdot\in\C{1}\text{ and } (\stdot,\gddot,\gpdot,\sch)\text{ satisfy \cref{4_eq:vpv}}\},
\end{multline}
with $\Blpvfull = \bigcup_{\sch\in\schSet^\nnreals} \Blpv$ being the full behavior (i.e., for all scheduling trajectories).

By the VPV embedding, we have the following relation:
\begin{lemma}[VPV behavioral embedding]\label{4_lem:vpvembed}
Consider a nonlinear system given by \cref{4_eq:nonlinsys} with a velocity form \cref{4_eq:veloform}. If the LPV representation \cref{4_eq:vpv} is a VPV embedding of the nonlinear system on the region $\stSetLPV\times\gdSetLPV= \stSet\times\gdSet$, then the behavior of the velocity form is included in that of the LPV representation, i.e., $\Bv \subseteq \Blpvfull$.
\end{lemma}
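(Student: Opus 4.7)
The plan is a direct pointwise chase: take an arbitrary velocity-form trajectory, build a compatible scheduling signal via the scheduling map $\schMap$, and verify that the triple solves the LPV representation \cref{4_eq:vpv} under that scheduling. Since the VPV embedding relations \cref{4_eq:matvpvembed} are pointwise-in-$(\st,\gd)$ matrix identities and the scheduling is allowed to be any signal in $\schSet^\nnreals$, no optimization or convexity argument is required; the inclusion is effectively a lifting.

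First, I would pick an arbitrary $(\dot\st,\dot\gd,\dot\gp)\in\Bv=\dotB\Bc$. By the definition in \cref{4_eq:velosolution}, there exists $(\st,\gd,\gp)\in\Bc$ with $\st\in\C{2}$, $\gd,\gp\in\C{1}$, satisfying \cref{4_eq:nonlinsys}, such that $\dot\st(t)=\tfrac{d}{dt}\st(t)$ and similarly for $\dot\gd,\dot\gp$. Differentiating \cref{4_eq:nonlinsyssteq}--\cref{4_eq:nonlinsysopeq} in $t$ (using the chain rule and the $\C{1}$ regularity of $\stMap,\opMap$) yields that $(\st,\gd,\gp)$ satisfies the velocity form \cref{4_eq:veloform} pointwise in $t\in\nnreals$, as established in \cref{4_def:veloform}.

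Second, I would construct the scheduling trajectory by setting
\begin{equation*}
    \sch(t) := \schMap(\st(t),\gd(t)), \qquad t\in\nnreals.
\end{equation*}
By hypothesis $\stSetLPV\times\gdSetLPV = \stSet\times\gdSet$, so $(\st(t),\gd(t))$ stays in the domain of $\schMap$ for every $t$, and the range condition $\schSet\supseteq\schMap(\stSetLPV,\gdSetLPV)$ guarantees $\sch\in\schSet^\nnreals$. Invoking the matching identity \cref{4_eq:matvpvembed} at $(\st(t),\gd(t))$ gives
\begin{equation*}
\lpvA(\sch(t))=\velA(\st(t),\gd(t)),\ \ldots,\ \lpvD(\sch(t))=\velD(\st(t),\gd(t)),
\end{equation*}
for all $t$. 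Substituting these identities into \cref{4_eq:vpv} with $\stdot=\dot\st$, $\gddot=\dot\gd$, $\gpdot=\dot\gp$ reproduces exactly the velocity-form equations that $(\dot\st,\dot\gd,\dot\gp)$ already satisfies. Thus $(\dot\st,\dot\gd,\dot\gp,\sch)$ solves the LPV dynamics, which places $(\dot\st,\dot\gd,\dot\gp)\in\Blpv$ and therefore in $\Blpvfull=\bigcup_{\sch\in\schSet^\nnreals}\Blpv$.

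The only delicate point, and the one I would flag as the main obstacle, is checking that the regularity requirements imposed on signals in $\Blpv$ (namely $\stdot\in\C{2}$, $\gddot,\gpdot\in\C{1}$) are consistent with the smoothness inherited from $\Bc$ (which yields $\dot\st\in\C{1}$, $\dot\gd,\dot\gp\in\C{0}$). This is a routine matter of differentiability bookkeeping: either one reads the regularity in the definition of $\Blpv$ at the level compatible with $\dotB\Bc$, or one strengthens $\Bc$ accordingly, noting that the class $\lpvClass$ to which $\lpvA,\ldots,\lpvD$ belong is typically smooth enough for $\sch$ inherited from $\schMap\in\C{k}$ to yield the needed regularity of the composite trajectory. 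Once this compatibility is acknowledged, the chain above closes and the inclusion $\Bv\subseteq\Blpvfull$ follows for arbitrary $(\dot\st,\dot\gd,\dot\gp)\in\Bv$.
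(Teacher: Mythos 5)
Your proof is correct and follows essentially the same route as the paper's: pick a velocity-form trajectory, define $\sch=\schMap(\st,\gd)$, use the matching identities \cref{4_eq:matvpvembed} to recognize the trajectory as a solution of \cref{4_eq:vpv}, and conclude membership in $\Blpvfull$ by taking the union over scheduling signals. Your added remark on the mismatch between the regularity demanded by $\Blpv$ ($\stdot\in\C{2}$) and that inherited from $\dotB\Bc$ ($\dot\st\in\C{1}$) is a fair observation that the paper's own proof glosses over, but it does not change the argument.
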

\begin{proof}
	See \cref{4_pf:vpvembed}.
\end{proof}
This is similar to standard LPV embeddings \cite{Toth2010}. However, an important difference is that VPV embeddings embed the velocity form \cref{4_eq:veloform} while standard LPV embeddings embed the original nonlinear dynamics \cref{4_eq:nonlinsys}.
\begin{remark}
In the case that our VPV embedding only considers part of the state-space, i.e., \cref{4_eq:vpv} is a VPV embedding of the nonlinear system given by \cref{4_eq:nonlinsys} on the region $\stSetLPV\times\gdSetLPV\subset \stSet\times\gdSet$, we can still describe part of the behavior of the velocity form \cref{4_eq:veloform}. 
\end{remark}

Recall that VD of a nonlinear system, see \cref{4_def:velodissip}, can be seen as `classical dissipativity' of the velocity form. Through the VPV embedding, we can then cast the VD analysis problem as a classical dissipativity analysis problem of an LPV representation:
\begin{theorem}[VD analysis through the LPV framework]\label{4_thm:velodissiplpv}
	Consider the nonlinear system given by \cref{4_eq:nonlinsys} for which the LPV representation \cref{4_eq:vpv} is a VPV embedding of the system on the region $\stSetLPV\times\gdSetLPV= \stSet\times\gdSet$. If the LPV representation \cref{4_eq:vpv} is classically dissipative, then the system is VD.
\end{theorem}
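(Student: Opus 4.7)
The plan is to give a short behavioral-inclusion argument leveraging \cref{4_lem:vpvembed}. Since the VPV embedding matches the velocity-form state-space matrices pointwise via the scheduling map, per \cref{4_eq:matvpvembed}, the LPV signals $(\stdot,\gddot,\gpdot)$ produced under the choice $\sch(\cdot)=\schMap(\st(\cdot),\gd(\cdot))$ coincide with the velocity signals $(\dot\st,\dot\gd,\dot\gp)$ of any $(\st,\gd,\gp)\in\Bc$. The VD property is therefore inherited as a direct consequence of classical dissipativity holding on the larger behavioral set $\Blpvfull\supseteq\Bv$.

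First I would unpack the hypothesis: classical dissipativity of the LPV representation provides a storage function $\storfunLPV\in\C{1}$ with $\storfunLPV\in\posClass{0}$ and a supply function $\supfun$ such that
\[
\storfunLPV(\stdot(t_1)) - \storfunLPV(\stdot(t_0))\leq \int_{t_0}^{t_1}\supfun(\gddot(t),\gpdot(t))\,dt,
\]
for all $t_1\geq t_0$ in $\nnreals$ and every $(\stdot,\gddot,\gpdot)\in\Blpvfull$. Then I would invoke \cref{4_lem:vpvembed} to deduce $\Bv\subseteq\Blpvfull$, so that for every velocity trajectory $(\dot\st,\dot\gd,\dot\gp)\in\Bv$ the same inequality holds with the substitution $\stdot=\dot\st$, $\gddot=\dot\gd$, $\gpdot=\dot\gp$. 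Taking $\storfunVelo:=\storfunLPV$ and $\supfunVelo:=\supfun$ then yields \cref{4_eq:velodissip}, establishing VD in the sense of \cref{4_def:velodissip}.

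There is essentially no hard step: the argument is a near-trivial consequence of behavioral inclusion, and the regularity conditions demanded of $\storfunVelo$ (namely $\C{1}$ and membership in $\posClass{0}$) are inherited verbatim from the LPV storage function. The only minor point of care is to confirm that the supply function of the LPV embedding, viewed as a map of $(\gddot,\gpdot)$, is directly admissible as a velocity supply function $\supfunVelo:\reals^\gdSize\times\reals^\gpSize\to\reals$, which is automatic since the input and output signal spaces of the VPV embedding coincide with those of the velocity form by construction of \cref{4_eq:matvpvembed}.
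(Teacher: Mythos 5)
Your proposal is correct and follows essentially the same route as the paper's proof: both rest on the behavioral inclusion $\Bv\subseteq\Blpvfull$ from \cref{4_lem:vpvembed} and then restrict the classical dissipation inequality of the LPV representation to the velocity-form trajectories. Your write-up is in fact slightly more explicit than the paper's, spelling out the storage-function transfer and the regularity conditions, but the underlying argument is identical.
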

\begin{proof}
See \cref{4_pf:velodissiplpv}.
\end{proof}

With \cref{4_thm:velodissiplpv}, we now have a powerful tool to analyze VD of nonlinear systems, as we can cast it as a classical dissipativity analysis problem of an LPV representation, for which there exist many results, see e.g. \cite{Hoffmann2015} for an overview. In \cref{4_sec:veloanalysis}, we have shown how US(A)S and quadratic USP notions can be analyzed through $\qsr$-VD. Consequently, combining these two results, we are now able to analyze US(A)S and USP of nonlinear systems through the use of the LPV framework. This then gives us a systematic, and computationally efficient tool to analyze global stability and performance in the form of US(A)S and USP, which constitutes \cref{con:lpv}.

In the next section, we will use these analysis tools to develop a novel systematic controller synthesis method in order to ensure and shape US(A)S and USP.

\section{Convex Universal Shifted Controller Synthesis}\label{4_sec:contrsynthesis}

\subsection{Controller synthesis problem}
In order to provide a systematic approach to controller synthesis, we will consider a notion similar to the generalized plant concept also commonly used for LTI and LPV systems \cite{Doyle1983,Apkarian1995}. This will allow us to describe various control configurations and shape the closed-loop behavior of the plant and controller to ensure US(A)S and USP.


More concretely, for the controller synthesis problem in this section, we will consider nonlinear systems $\genplant$ of the form
\begin{subequations}\label{4_eq:genplant}
    \begin{align}
    \dot\stp(t) &= \stMap(\stp(t),\ipp(t))+\ltiBw \gd(t);\\
    \gp(t) &=  \opgpMap(\stp(t),\ipp(t))+\ltiDzw \gd(t);\\
    \opp(t) &= \opopMap(\stp(t),\ipp(t))+\ltiDyw\gd(t);
    \end{align} 
\end{subequations}
where again $\stp(t)\in \stpSet\subseteq \reals^{\stpSize}$ is the state with $\stp\in\C{1}$ and initial condition $\stp(0) = \stpIc\in\reals^{\stpSize}$ and where now $\gd(t) \in \gdSet\subseteq\reals^\gdSize$ and $\gp(t)\in\gpSet\subseteq\reals^\gpSize$ are called the generalized disturbance (consisting of references, disturbances, etc.) and generalized performance (consisting  of tracking errors, control efforts, etc.) channels, respectively. Moreover, we introduce the channels $\ipp(t)\in\ippSet\subseteq\reals^\ippSize$ and $\opp(t)\in\oppSet\subseteq\reals^\oppSize$, denoting the control input and measured output channel, through which the to-be-designed controller will interact with the plant. Furthermore, $\stMap:\stpSet\times\ippSet\to\reals^\stpSize$, $\opgpMap:\stpSet\times\ippSet\to\reals^\gpSize$ and  $\opopMap:\stpSet\to\reals^\oppSize$ are assumed to be in $\C{1}$ and $\ltiBw\in\reals^{\stpSize\times\gdSize}$, $\ltiDzw\in\reals^{\gpSize\times\gdSize}$, and $\ltiDyw\in\reals^{\oppSize \times\gdSize}$. 


The to-be-designed controller $\controller$ for our plant $\genplant$ is considered to be of the form
\begin{subequations}\label{4_eq:generalcontroller}
\begin{align}
\dot\stk(t) &= \stkMap(\stk(t),\ipk(t));\\
\opk(t) &= \opkMap(\stk(t),\ipk(t));
\end{align}
\end{subequations}
where $\stk(t)\in\reals^\stkSize$ is the state, $\ipk(t)\in\reals^\ipkSize$ is the input, and $\opk(t)\in\reals^\opkSize$ is the output of the controller. Furthermore, $\stkMap:\reals^\stkSize\times\reals^\ipkSize\to\reals^\stkSize$ and $\opkMap:\reals^\stkSize\times\reals^\ipkSize\to\reals^\opkSize$. 

The closed-loop interconnection of $\genplant$ and $\controller$ for which $\ipk = \opp$ and $\ipp = \opk$ (hence, $\ipkSize=\oppSize$ and $\opkSize = \ippSize$) will be denoted by $\ic{\genplant}{\controller}$. Note that this closed-loop interconnection will be a system of the form \cref{4_eq:nonlinsys}, which has input $\gd$ and output $\gp$. Furthermore, the output $\gp\in\gpSet^\nnreals$ of $\ic{\genplant}{\controller}$ for an input $\gd\in\gdSet^\nnreals$ and initial condition $\stclIc = \col(\stp(0),\stk(0))\in\stpSet\times\reals^\stkSize$, will be denoted by $\ic{\genplant}{\controller}(\gd,\stclIc)=\gp\in\gpSet^\nnreals$.

Considering this closed-loop interconnection $\ic{\genplant}{\controller}$, our objective then is to synthesize the controller $\controller$ such that $\ic{\genplant}{\controller}$ is US(A)S and satisfies a (desired) USP criteria. To simplify the discussion, we will consider the \lstwo-gain, see \cref{4_eq:shiftlplqgain}, as our desired performance metric, which we aim to minimize. This means that we are interested in synthesizing a controller $\controller$ for our generalized plant $\genplant$ s.t. the \lstwo-gain $\perf$ of our closed-loop interconnection $\ic{\genplant}{\controller}$ from $\gd$ to $\gp$ is minimized, see also \cref{4_def:shiftlplqgain}. Other performance metrics can similarly be considered, for which similar guarantees can be derived as for the \lstwo-gain. 

%
	
To ensure that the given synthesis problem is feasible, we require $\genplant$ to be a generalized plant in the following sense:
\begin{definition}[Generalized plant]\label{4_def:genplant}
	$\genplant$, given by \cref{4_eq:genplant}, is a generalized plant, if there exists a controller $\controller$ of the form \cref{4_eq:generalcontroller} such that the closed-loop interconnection $\ic{\genplant}{\controller}$ is USS.
\end{definition}
\begin{proposition}\label{4_prop:genplant}
	$\genplant$, given by \cref{4_eq:genplant}, is a generalized plant in the sense of \cref{4_def:genplant}, if $\left(\Partial{\stMap}{\stp}(\stp,\ipp),\Partial{\stMap}{\ipp}(\stp,\ipp)\right)$ is stabilizable and $\left(\Partial{\stMap}{\stp}(\stp,\ipp),\Partial{\opopMap}{\stp}(\stp,\ipp)\right)$ is detectable over $\stpSet\times\gdSet$, see \cite[Section 5.3.2]{Pavlov2006}.
\end{proposition}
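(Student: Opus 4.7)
The plan is to leverage the velocity-form machinery developed in \cref{4_sec:veloanalysis,4_sec:velousinglpv} to construct a stabilizing output-feedback controller, and then conclude US stability from \cref{4_lem:velostab}. First I would form the velocity dynamics of $\genplant$ per \cref{4_def:veloform}, which read
\begin{equation*}
\ddot\stp = \velA(\stp,\ipp)\dot\stp + \velB(\stp,\ipp)\dot\ipp + \ltiBw\dot\gd,\quad \dot\opp = \velC(\stp)\dot\stp + \ltiDyw\dot\gd,
\end{equation*}
with $\velA=\Partial{\stMap}{\stp}$, $\velB=\Partial{\stMap}{\ipp}$, and $\velC=\Partial{\opopMap}{\stp}$. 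Constructing a VPV embedding of this velocity form on the region $\stpSet\times\ippSet$ according to \cref{4_def:vpvembed} yields an LPV representation whose state-space matrices equal $\velA,\velB,\velC$ pointwise, so the stabilizability of $(\velA,\velB)$ and detectability of $(\velA,\velC)$ postulated in the proposition translate directly into pointwise stabilizability and detectability of the resulting LPV embedding.

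Next, I would invoke standard LPV output-feedback synthesis results (see, e.g., the references cited in \cref{4_sec:velousinglpv}) to conclude that, under these pointwise conditions, there exists an LPV controller $\velocontroller$ that closes the loop with the VPV embedding in a $\qsr$-dissipative manner for a supply tuple with $\supS=0$, $\supQ\succeq 0$ and $\supR\prec 0$. By \cref{4_thm:velodissiplpv}, the underlying nonlinear system is then $\qsr$-VD under this controller interpreted at the velocity level, and since the chosen supply satisfies the negativity condition \cref{4_eq:supplystability} strictly, \cref{4_lem:velostab} gives that the corresponding closed loop is USAS, and hence in particular USS.

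It then remains to lift $\velocontroller$ to a primal-form controller $\controller$ of the class \cref{4_eq:generalcontroller}, for instance by taking a state-space realization of $\velocontroller$ driven by $\dot\opp$ and appending an integrator on its output so that $\controller$ delivers $\ipp$ directly. By construction the time-differentiation of the resulting $\ic{\genplant}{\controller}$ reproduces the LPV-stabilized velocity-level closed loop, so $\ic{\genplant}{\controller}$ is USS and $\genplant$ qualifies as a generalized plant in the sense of \cref{4_def:genplant}. The main technical obstacle is precisely this lifting step: obtaining a well-posed, implementable primal-form realization of $\velocontroller$ whose interconnection with $\genplant$ exactly reproduces the designed velocity-level closed loop without relying on explicit knowledge of the equilibrium map $\eqMap$ from \cref{4_assum:uniqueEq}. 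This realization question is in fact the central subject of the controller synthesis construction developed in the remainder of \cref{4_sec:contrsynthesis}.
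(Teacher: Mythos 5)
The paper does not actually prove this proposition: it is stated as a sufficient condition and justified entirely by the citation to \cite[Section 5.3.2]{Pavlov2006}, where output-feedback designs based on stabilizability/detectability of the Jacobian pairs are constructed. So you are supplying an argument where the paper supplies a reference, and your route --- velocity form, VPV embedding, LPV synthesis, then realization via the machinery of \cref{4_sec:contrsynthesis} --- is at least consistent with how the paper intends the condition to be used (cf.\ the remark following \cref{4_prop:genplant} interpreting it as stabilizability/detectability of the velocity form).

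There is, however, a genuine gap at the pivotal step: the claim that pointwise stabilizability of $(\velA,\velB)$ and detectability of $(\velA,\velC)$ over $\stpSet\times\ippSet$ implies, ``by standard LPV output-feedback synthesis results,'' the existence of an LPV controller rendering the closed loop of the VPV embedding dissipative with $\supR\prec 0$. LPV synthesis results are LMI \emph{feasibility} conditions; frozen-parameter stabilizability and detectability are necessary but not sufficient for them, since the embedding must be stabilized for \emph{all} admissible scheduling trajectories (arbitrary rates, endogenous scheduling), and the scheduling set $\schMap(\stpSet,\ippSet)$ need not even be compact when you embed on the full region. The paper itself flags exactly this in the remark after \cref{4_eq:vpvgenplant}, calling stabilizability/detectability of the VPV pairs ``necessary'' for feasibility of the synthesis --- not sufficient. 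To close the gap you would either have to read ``stabilizable/detectable over $\stpSet\times\gdSet$'' in the uniform (Demidovich-type) sense of \cite{Pavlov2006}, i.e.\ existence of constant gains $K$, $L$ and $P\succ 0$ with $P(\velA+\velB K)+(\velA+\velB K)^{\!\top}P\prec 0$ uniformly on the region (in which case a stabilizing velocity-form controller exists by construction and the LPV detour is unnecessary), or add an explicit feasibility hypothesis on the synthesis LMIs. Two smaller points: your USAS claim via \cref{4_lem:velostab} also needs the observability hypothesis of that theorem (though only USS is required for \cref{4_def:genplant}, so this is harmless), and the lifting step you defer to the realization construction additionally requires well-posedness of the interconnection in the sense of \cref{4_thrm:veloic}, which should be stated rather than left implicit.
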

Note that \cref{4_prop:genplant} can be interpreted as the velocity form of \cref{4_eq:genplant} being stabilizable and detectable w.r.t. the input channel $\dot{\ipp}$ and output channel $\dot{\opp}$, respectively, along all trajectories of the nonlinear system given by \cref{4_eq:genplant}. This is similar to the condition that is required for standard LTI and LPV controller synthesis using the generalized plant concept.

\subsection{Universal shifted controller synthesis procedure}\label{4_sec:shiftcontr}
\subsubsection*{Overview}
We propose a novel procedure to synthesize a controller $\controller$ for our generalized plant $\genplant$ s.t. the closed-loop interconnection $\ic{\genplant}{\controller}$ is USAS with a bounded $\lstwo$-gain:
\begin{enumerate}
    \item \emph{VPV embedding step:}\label{4_step1} For a nonlinear generalized plant $\genplant$ given by \cref{4_eq:genplant}, compute its velocity form $\velogenplant$. Construct a VPV embedding $\velogenplantLPV$ of $\genplant$ based on $\velogenplant$.
    \item \emph{Velocity controller synthesis step:} For the VPV embedding $\velogenplantLPV$, an LPV controller $\velocontroller$ is synthesized, ensuring a minimal closed-loop \ltwo-gain $\perf$. For this, we use standard LPV controller synthesis methods.    
    \item \emph{Universal shifted controller realization step:} The synthesized controller $\velocontroller$, which is in the velocity domain, is realized as a nonlinear controller $\controller$ in the primal form \cref{4_eq:generalcontroller} to be used with the primal form of the generalized plant $\genplant$, to ensure the closed-loop \lstwo-gain $\perf$. 
\end{enumerate}

Note that in order to claim closed-loop performance guarantees in Step 3, we rely on \cref{4_prop:veloshifteddissip}, i.e., all typical $\qsr$ performance metrics can also be considered in the velocity controller synthesis in Step 2 to induce various quadratic USP notions of the closed-loop interconnection. While, under some technical assumptions, we presented a proof of \cref{4_prop:veloshifteddissip} for the \lstwo case, formal proof of the general $\qsr$-case is currently an open problem, although empirical evidence strongly suggests it holds in practice, as it is demonstrated in \cref{4_sec:examples}. Nonetheless, note that US(A)S \emph{is} guaranteed by \cref{4_thm:velotoshiftstab} and is therefore \emph{not} a conjecture.

\begin{remark}
	Note that the steps proposed above conceptually can be seen similar to those proposed in \cite{Koelewijn2020b}. However, in Step 1 and 2, we use a velocity form of the nonlinear system instead of a differential one, and hence, in Step 3, the controller realization procedure and the implied stability and performance guarantees are different and novel (constituting \cref{con:syn}). A serious advantage of the procedure proposed in this paper is that the resulting controller has a simpler structure and does not require explicit knowledge of the equilibrium points, which is an advantage compared to the controller realization in \cite{Koelewijn2020b}.
\end{remark}

Before detailing the individual steps of the above procedure, we will first show that the velocity form of the closed-loop interconnection is equal to the closed-loop interconnection of the velocity form of the plant and velocity form of the controller. This simplifies the controller design procedure, as it allows us to independently `transform' the plant and controller between their primal and velocity forms.
\begin{theorem}[Closed-loop velocity form] \label{4_thrm:veloic} The velocity form of the closed-loop system $\ic{\genplant}{\controller}$ is equal to the closed-loop interconnection of $\velogenplant$ and $\velocontroller$, i.e., $\ic{\velogenplant}{\velocontroller}$, if the interconnection of $\genplant$ and $\controller$ is well-posed, i.e., there exists a $\C{1}$ function $\breve{\opMap}$ such that $\ipp = \opkMap(\stk,\opopMap(\stp,\ipp))$ can be expressed as
		$\ipp = \breve{\opMap}(\stp,\stk)$.    
\end{theorem}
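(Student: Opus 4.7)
My strategy is to compute both sides of the claimed equality explicitly and match them term by term via the chain rule. The well-posedness hypothesis $\ipp = \breve{\opMap}(\stp,\stk)$ with $\breve\opMap\in\C{1}$ is exactly what is needed to guarantee that the relevant signals are differentiable, so that all subsequent time derivatives are well-defined.

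\textbf{Step 1: Explicit closed-loop form.} Using well-posedness, substitute $\ipp=\breve\opMap(\stp,\stk)$ into \cref{4_eq:genplant} and \cref{4_eq:generalcontroller} to obtain $\ic{\genplant}{\controller}$ in the standard form \cref{4_eq:nonlinsys} with augmented state $\stcl=\col(\stp,\stk)$, input $\gd$, and output $\gp$. Explicitly,
\begin{align*}
\dot\stp &= \stMap(\stp,\breve\opMap(\stp,\stk)) + \ltiBw\gd,\\
\dot\stk &= \stkMap\big(\stk,\opopMap(\stp,\breve\opMap(\stp,\stk))+\ltiDyw\gd\big),\\
\gp &= \opgpMap(\stp,\breve\opMap(\stp,\stk)) + \ltiDzw\gd.
\end{align*}

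\textbf{Step 2: Velocity form of the closed loop.} Differentiate the above in $t$ and collect terms using the chain rule. The result can be written in the velocity state-space form \cref{4_eq:veloform} with the block matrices $\velA_\mathrm{cl},\velB_\mathrm{cl},\velC_\mathrm{cl},\velD_\mathrm{cl}$ expressed in terms of the partial derivatives of $\stMap,\stkMap,\opgpMap,\opopMap,\breve\opMap$ evaluated along the trajectory.

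\textbf{Step 3: Interconnection of the velocity forms.} Apply \cref{4_def:veloform} to $\genplant$ alone (input $\ipp$, outputs $\gp,\opp$, exogenous input $\gd$) and to $\controller$ alone (input $\ipk$, output $\opk$), yielding $\velogenplant$ and $\velocontroller$. Differentiating the algebraic interconnection constraints $\ipk=\opp$ and $\ipp=\opk$ gives $\dot\ipk=\dot\opp$ and $\dot\ipp=\dot\opk$, which is the interconnection of $\velogenplant$ and $\velocontroller$. Well-posedness of $\ic{\genplant}{\controller}$ together with $\opkMap,\opopMap\in\C{1}$ ensures this algebraic loop in the velocity domain is also well-posed and shares the same solvability map (its Jacobian coincides with $\partial\breve\opMap/\partial(\stp,\stk)$), so $\ic{\velogenplant}{\velocontroller}$ is well-defined with state $(\dot\stp,\dot\stk)$, exogenous input $\dot\gd$, and output $\dot\gp$.

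\textbf{Step 4: Matching.} Expand $\dot\ipp$ in Step~3 using the chain rule applied to $\ipp=\breve\opMap(\stp,\stk)$, namely $\dot\ipp=\tPartial{\breve\opMap}{\stp}\dot\stp+\tPartial{\breve\opMap}{\stk}\dot\stk$, and similarly for $\dot\opp$. Substituting into the velocity plant equations and eliminating the internal velocity signals yields exactly the block matrices obtained in Step~2. Hence the two systems coincide as state-space representations, and therefore have identical behaviors.

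\textbf{Main obstacle.} The only nontrivial point is Step~3: one must argue that differentiating the interconnection preserves well-posedness and that the implicit-function derivatives of the velocity interconnection agree with the Jacobian of $\breve\opMap$. This follows from the implicit function theorem applied to $\ipp-\opkMap(\stk,\opopMap(\stp,\ipp)+\ltiDyw\gd)=0$: the invertibility condition that gives $\breve\opMap\in\C{1}$ is precisely the invertibility condition needed to solve the corresponding linear relation in the velocity variables uniquely. Once this is established, the remainder of the argument is a bookkeeping exercise with the chain rule.
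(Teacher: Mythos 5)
Your proposal is essentially correct and follows the route the paper itself intends: the paper gives no self-contained argument here, deferring to the analogous result for the differential form (Theorem~5 of \cite{Koelewijn2020b}), and that argument is precisely the substitute-then-differentiate chain-rule computation you carry out in Steps~1, 2, and~4. Computing the closed-loop primal form via $\ipp=\breve\opMap(\stp,\stk)$, differentiating it, and matching against the interconnection of the individually differentiated subsystems is the standard and correct strategy, and your Step~4 matching (expanding $\dot\ipp=\tPartial{\breve\opMap}{\stp}\dot\stp+\tPartial{\breve\opMap}{\stk}\dot\stk$ and eliminating the internal velocity signals) does reproduce the closed-loop velocity matrices.

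One point in Step~3 is slightly under-justified. The mere \emph{existence} of a $\C{1}$ solution $\ipp=\breve\opMap(\stp,\stk)$ to the primal algebraic loop does not by itself imply that the linear algebraic loop in the velocity variables, $\dot\ipp=\tPartial{\opkMap}{\stk}\dot\stk+\tPartial{\opkMap}{\ipk}\bigl(\velCy\dot\stp+\ltiDyw\dot\gd+\velDyu\dot\ipp\bigr)$, has a \emph{unique} solution: differentiating the defining identity of $\breve\opMap$ shows that the Jacobian of $\breve\opMap$ always furnishes \emph{a} solution (which suffices for the inclusion ``velocity form of the closed loop $\subseteq$ interconnection of the velocity forms''), but equality of the two representations additionally requires invertibility of $I-\tPartial{\opkMap}{\ipk}\velDyu$ along trajectories, which is the implicit-function-theorem hypothesis rather than its conclusion. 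You should either state this invertibility as part of what ``well-posed'' means, or note that it holds trivially in the relevant setting where the loop contains no direct feedthrough (e.g., $\velDyu=0$ or a strictly proper controller), after which the rest of your argument goes through. A second, cosmetic issue: when $\ltiDyw\neq 0$ the fixed-point equation for $\ipp$ involves $\gd$, so strictly speaking $\breve\opMap$ (and hence the closed-loop velocity matrices) should carry a $\gd$-dependence; this is an imprecision inherited from the theorem statement itself rather than an error of yours.
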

\begin{proof} 
The proof follows in a similar manner as the proof of Theorem 5 in \cite{Koelewijn2020b}.
\end{proof}

Now, we detail each step of the proposed procedure.

\subsubsection{VPV embedding step}
The first step of our US controller synthesis procedure consists of embedding the generalized plant $\genplant$ given by \cref{4_eq:genplant} in a VPV embedding. From here on, we will denote the behavior of \cref{4_eq:genplant} by
\begin{multline}\label{4_eq:nonlinsolgen}
\Bp = \Big\{ (\stp,\ipp,\gd,\gp,\opp)\in (\stpSet\times \ippSet\times\gdSet\times\gpSet\times\oppSet)^\nnreals \mid \\\stp\in\C{1}, (\stp,\ipp,\gd,\gp,\opp)\text{ satisfies \cref{4_eq:genplant}}\Big\},
\end{multline}
and the set of all differentiable solutions as 
\begin{equation}
	\Bc = \{(\stp,\ipp,\gd,\gp,\opp) \in\Bp \mid \stp \in\C{2},\, \ipp,\gd,\gp,\opp\in\C{1}\}.
\end{equation}

First, we compute the velocity form of $\genplant$, resulting in
\begin{subequations}\label{4_eq:velogenplant}
    \begin{align}
    \ddot\stp &= \velA(\stp,\ipp)\dot\stp +\ltiBw \dot\gd + \velBu(\stp,\ipp)\dot\ipp;\\
    \dot\gp &=  \velCz(\stp,\ipp)\dot\stp +\ltiDzw \dot\gd + \velDzu(\stp,\ipp)\dot\ipp ;\\
    \dot\opp &= \velCy(\stp,\ipp)\dot\stp+\ltiDyw \dot\gd + \velDyu(\stp,\ipp)\dot\ipp;
    \end{align} 
\end{subequations}
where $\velA = \Partial{\stMap}{\stp}$, $\velB = \Partial{\stMap}{\ipp}$, $\velCz = \Partial{\opgpMap}{\stp}$, $\velDzu = \Partial{\opgpMap}{\ipp}$, $\velCy = \Partial{\opopMap}{\stp}$, and $\velDyu = \Partial{\opopMap}{\ipp}$. The behavior of \cref{4_eq:velogenplant} is then $\Bv=\dotB\Bc$, see also \cref{4_sec:veloform}. 

We then embed $\velogenplant$, given by \cref{4_eq:velogenplant}, in an LPV representation in order to construct a VPV embedding of $\genplant$ \cref{4_eq:genplant}. Based on \cref{4_def:vpvembed} of the VPV embedding, we then construct a VPV embedding of $\genplant$ given by \cref{4_eq:genplant} on the compact region\footnote{Note that as $\gd$ enters into \cref{4_eq:genplant} linearly, the scheduling-map $\schMap$ of the VPV embedding will not depend on it. Hence, the embedding region w.r.t. $\gd$ can be taken equal to or as any subset of $\gdSet$, i.e., the complete value set of $\gd$. Therefore, we omit it when discussing the VPV embedding region of \cref{4_eq:genplant}.} $\stpSetLPV\times\ippSetLPV\subseteq\stpSet\times\ippSet$, which we denote by $\velogenplantLPV$ and is given by
\begin{subequations}\label{4_eq:vpvgenplant}
    \begin{align}
    \dot\stpdot &= \lpvA(\sch)\stpdot +\ltiBw \gddot + \lpvBu(\sch)\ippdot;\\
    \gpdot &=  \lpvCz(\sch)\stpdot +\ltiDzw \gddot + \lpvDzu(\sch)\ippdot ;\\
    \oppdot &= \lpvCy(\sch)\stpdot+\ltiDyw \gddot + \lpvDyu(\sch)\ippdot;
    \end{align} 
\end{subequations}
with scheduling-variable $\sch(t)\in\schSet\subset\reals^\schSize$, where $\schSet$ is assumed to be convex, and the accompanying scheduling-map $\schMap$ is s.t. $\sch(t) = \schMap(\stp(t),\ipp(t))$ and $\schSet\supseteq \schMap(\stpSetLPV,\ippSetLPV)$. We will also assume that $\schMap\in\C{1}$.  Moreover, $\stpdot(t)\in\reals^\stpSize$, $\gddot(t)\in\reals^\gdSize$, $\ippdot(t)\in\reals^\ippSize$, $\gpdot(t)\in\reals^\gpSize$, and $\oppdot(t)\in\reals^\oppSize$. The accompanying behavior of the VPV embedding \cref{4_eq:vpvgenplant} for a scheduling trajectory $\sch\in\schMap^\nnreals$ is 
\begin{multline}
	\Blpv = \{(\stpdot,\ippdot,\gddot,\gpdot,\oppdot)\in(\reals^\stpSize\times\reals^\ippSize\times\reals^\gdSize\times\\\reals^\gpSize\times\reals^\oppSize)^{\nnreals}\mid \stpdot\in\C{2},\, \ippdot,\gddot,\gpdot,\oppdot\in\C{1}\text{ and}\\ (\stpdot,\ippdot,\gddot,\gpdot,\oppdot,\sch)\text{ satisfy \cref{4_eq:vpvgenplant}}\},
\end{multline}
with $\Blpvfull = \bigcup\limits_{\sch\in\schSet^\nnreals} \Blpv$ being the full behavior of \cref{4_eq:vpvgenplant}.

Moreover, we will denote the restriction of the state and control input solutions of $\genplant$ to $\stSetLPV$ and $\ipSetLPV$, respectively, by $\BcXU= \{(\stp,\ipp,\gd,\gp,\opp) \in\Bc \mid (\stp(t),\ipp(t))\in\stpSetLPV\times\ippSetLPV\}$ and the corresponding solution set for the velocity form $\velogenplant$ by $\BvXU=\dotB\BcXU$. Through the VPV behavioral embedding principle, given by \cref{4_lem:vpvembed}, we have $\BvXU\subseteq\Blpvfull$. This means that through $\velogenplantLPV$ given by \cref{4_eq:vpvgenplant}, we can describe the behavior of $\velogenplant$ given by \cref{4_eq:velogenplant} for which $(\stp(t),\ipp(t))\in\stpSetLPV\times\ippSetLPV$.

\begin{remark}
For the controller synthesis problem to be feasible, one must make sure that for the VPV embedding given by \cref{4_eq:vpvgenplant}, the pairs $(\lpvA, \ltiBu)$ and $(\lpvA, \ltiCy)$ are stabilizable and detectable, respectively \cite{Wu2006}. This is necessary to preserve the stabilizability and detectability properties of the underlying velocity form according to \cref{4_prop:genplant}.
\end{remark}

\subsubsection{Velocity controller synthesis step}
Having constructed a VPV embedding $\velogenplantLPV$ for our generalized plant $\genplant$, we will use it to synthesize a controller $\velocontroller$ for $\velogenplant$ s.t. $\ic{\velogenplant}{\velocontroller}$ has a minimal \ltwo-gain. 
As $\velogenplantLPV$ is an LPV representation, we make use of the synthesis algorithms in the LPV framework to synthesize an LPV controller $\velocontroller$ s.t. $\ic{\velogenplantLPV}{\velocontroller}$ has a minimal \ltwo-gain. This will ensure that the \ltwo-gain of $\ic{\velogenplant}{\velocontroller}$ is minimized. Note that this is a standard LPV synthesis problem, hence, we can apply one of the various available controller synthesis techniques that ensure the closed-loop interconnection is classically dissipative with a minimal \ltwo-gain bound, see e.g. \cite{Apkarian1995,Packard1993,Scherer2001,Wu1995}.

 Concretely, we consider $\velocontroller$, which we will refer to as velocity controller, in the form
\begin{subequations}\label{4_eq:veloContr}
    \begin{align}
    \dot\stkdot(t) &= \lpvAk(\sch(t)) \stkdot(t) + \lpvBk(\sch(t)) \ipkdot(t);\label{4_eq:velContrSt}\\
    \opkdot(t) &= \lpvCk(\sch(t)) \stkdot(t) + \lpvDk(\sch(t)) \ipkdot(t);\label{4_eq:velContrOp}
    \end{align}
\end{subequations}
where $\stkdot(t) \in \reals^{\stkSize}$ is the state, $\ipkdot(t) \in \reals^{\ipkSize}$ is the input, and $\opkdot(t) \in \reals^{\opkSize}$ is the output of the controller, respectively, and $\lpvAk,\ldots,\lpvDk\in\lpvClass$ are matrix functions with appropriate dimensions. Note that when we connect $\velocontroller$ to $\velogenplant$ (to obtain $\ic{\velogenplant}{\velocontroller}$), we have that $\ipkdot = \dot\opp$ and $\opkdot = \dot\ipp$, and $\sch=\schMap(\stp,\ipp)$. Moreover, as $\stp,\ipp\in\C{1}$ and $\schMap\in\C{1}$, we have $\sch\in\C{1}$.

Based on this, we can formulate the following \lcnamecref{4_thrm:veloICL2}:
\begin{theorem}[Velocity closed-loop \ltwo-gain]\label{4_thrm:veloICL2}
	If controller $\velocontroller$ of the form \cref{4_eq:veloContr} ensures classical dissipativity and a bounded \ltwo-gain $\perf$ of the closed-loop interconnection $\ic{\velogenplantLPV}{\velocontroller}$ for all $(\stpdot,\ippdot)\in\proj_{\stpdot,\ippdot}\Blpvfull$, then, 
	$\ic{\velogenplant}{\velocontroller}$ with $p=\schMap(\stp,\ipp)$ is classically dissipative and has an \ltwo-gain bound 
	$\leq\perf$ for all $(\dot\stp,\dot\ipp)\in\proj_\mr{\dot\stp,\dot\ipp}\BvXU$.
\end{theorem}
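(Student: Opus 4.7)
The plan is to leverage the VPV behavioral embedding (\cref{4_lem:vpvembed}) together with the specific scheduling choice $\sch = \schMap(\stp,\ipp)$ to transfer the \ltwo-gain bound from the LPV closed-loop $\ic{\velogenplantLPV}{\velocontroller}$ down to the velocity closed-loop $\ic{\velogenplant}{\velocontroller}$ on the restricted behavior $\BvXU$.

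First, I would take an arbitrary trajectory $(\dot\stp,\dot\ipp,\dot\gd,\dot\gp,\dot\opp)\in\BvXU$ of $\ic{\velogenplant}{\velocontroller}$, arising from a differentiable primal trajectory $(\stp,\ipp,\gd,\gp,\opp)\in\BcXU$ with $(\stp(t),\ipp(t))\in\stpSetLPV\times\ippSetLPV$ for all $t\in\nnreals$. Define the candidate scheduling signal $\sch(t):=\schMap(\stp(t),\ipp(t))$. Since $\stp,\ipp\in\C{1}$ and $\schMap\in\C{1}$, this $\sch$ belongs to $\C{1}\subseteq\schSet^\nnreals$ and is therefore admissible. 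By the defining identity \cref{4_eq:matvpvembed} of the VPV embedding, $\lpvA(\sch(t))=\velA(\stp(t),\ipp(t))$ and analogously for $\lpvBu,\lpvCz,\lpvDzu,\lpvCy,\lpvDyu$. Reinterpreting $(\dot\stp,\dot\ipp,\dot\gd,\dot\gp,\dot\opp)$ as $(\stpdot,\ippdot,\gddot,\gpdot,\oppdot)$, the open-loop velocity dynamics \cref{4_eq:velogenplant} coincide with the open-loop LPV dynamics \cref{4_eq:vpvgenplant} evaluated along this $\sch$, yielding a trajectory in $\Blpv\subseteq\Blpvfull$.

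Next, I would close the loop with $\velocontroller$. Because $\velocontroller$ is a single LPV system driven by the same $\sch$ in both interpretations, with $\ipkdot=\dot\opp=\oppdot$ and $\opkdot=\dot\ipp=\ippdot$, the controller state and output trajectories coincide in the two views. Hence the lifted signal $(\stpdot,\ippdot,\gddot,\gpdot,\oppdot,\stkdot,\sch)$ is a valid closed-loop trajectory of $\ic{\velogenplantLPV}{\velocontroller}$. Invoking the hypothesis, this LPV closed-loop is classically dissipative with \ltwo-gain bounded by $\perf$ over its full behavior (uniformly in $\sch\in\schSet^\nnreals$), so a dissipation inequality of the form $\norm{\gpdot}_{2,T}\leq\perf\norm{\gddot}_{2,T}+\zeta(\stpdot(0),\stkdot(0))$ holds for our lifted trajectory, with $\zeta$ depending only on the initial state. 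Substituting back the velocity variables gives $\norm{\dot\gp}_{2,T}\leq\perf\norm{\dot\gd}_{2,T}+\zeta(\dot\stp(0),\dot\stk(0))$, which is precisely the classical \ltwo-gain / dissipativity bound claimed for $\ic{\velogenplant}{\velocontroller}$ on $\BvXU$.

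The main technical obstacle lies in the bookkeeping of this lifting: one must verify that $\sch=\schMap(\stp,\ipp)$ is indeed an admissible scheduling signal (which is where the $\C{1}$ regularity of $\schMap$ and of $\stp,\ipp$ is essential) and that the VPV behavioral inclusion $\BvXU\subseteq\Blpvfull$ extends from the open loop to the closed loop, so that interconnection signals, controller states, and initial conditions match across the two representations. Once this behavioral identification is rigorously established, the \ltwo-gain bound transfers automatically, and the conclusion follows from the arbitrary choice of the velocity closed-loop trajectory.
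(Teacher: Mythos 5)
Your proposal is correct and follows essentially the same route as the paper: the paper's proof simply invokes \cref{4_lem:vpvembed} to obtain $\BvXU\subseteq\Blpvfull$ and then \cref{4_thm:velodissiplpv} to transfer classical dissipativity and the \ltwo-gain bound from the LPV closed loop to the velocity closed loop. Your version unpacks that behavioral-inclusion argument trajectory by trajectory (admissibility of $\sch=\schMap(\stp,\ipp)$, matching of controller states and interconnection signals), which is exactly the bookkeeping the paper's two-line proof leaves implicit.
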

\begin{proof}
See \cref{4_pf:veloICL2}.
\end{proof}

\begin{remark}\label{4_rem:weightfilt}
By applying shaping filters on $\genplant$ that consequently appear in $\velogenplant$, we can shape the closed-loop performance of $\ic{\genplant}{\controller}$, see \cref{4_fig:shapefigprim}. If the weighting filters included in $\genplant$ are LTI, then the input-output behavior of $\weight_\mr{\gd}$ and $\weight_\mr{\gp}$ is equivalent to that of $\weight_\mr{v,\gd}$ and $\weight_\mr{v,\gp}$. This is because the transfer function representation of the velocity form of an LTI system is given by the same transfer function as its primal form. This results in a one-to-one correspondence between the performance shaping of the primal form $\ic{\genplant}{\controller}$ (see \cref{4_fig:shapefigprim}) and performance shaping of the velocity form $\ic{\velogenplant}{\velocontroller}$ (see \cref{4_fig:shapefigvelo}). This significantly simplifies the controller design, as shaping can be directly performed through the velocity form $\velogenplant$ and hence also through the VPV embedding $\velogenplantLPV$.
\end{remark}

\begin{figure}
	\centering
	\begin{subfigure}[b]{\columnwidth}
		\centering
		\includegraphics[scale=1]{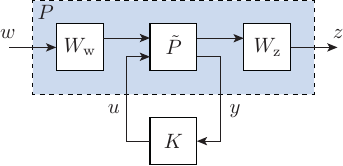}
		\caption{Primal form.}
		\label{4_fig:shapefigprim}
	\end{subfigure}\\[1em]
	\begin{subfigure}[b]{\columnwidth}
		\centering
		\includegraphics[scale=1]{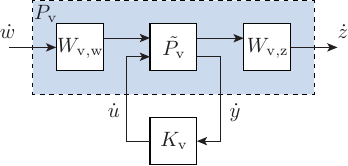}
		\caption{Velocity form.}
		\label{4_fig:shapefigvelo}
	\end{subfigure}
	\caption{{Shaping the closed-loop behavior of the primal and the velocity form by the use of weighting filters $W_\mr{w}$ and $W_\mr{z}$.}}
	\label{4_fig:shapefig} 
	\vspace{-1em}
\end{figure}

\subsubsection{Universal shifted controller realization step}
Finally, we will describe the last step of the proposed synthesis procedure. For the last step, we realize the controller $\controller$ to be used with the primal form of the generalized plant $\genplant$ based on the velocity controller $\velocontroller$ of the previous step, such that closed-loop US(A)S and USP are ensured.

To do this, we exploit the properties of the velocity form. As in the velocity form, the inputs and outputs of our system are time-differentiated versions of the inputs and outputs of the primal form of the system. This allows us to formulate the following \lcnamecref{4_thrm:timediff}:
\begin{theorem}[Velocity behavior equivalence]\label{4_thrm:timediff}
Consider a nonlinear system given by \cref{4_eq:nonlinsys} with behavior $\Bc$ (see \cref{4_eq:Bc}). The nonlinear system with its inputs (time) integrated and its outputs (time) differentiated is equal to its velocity form \cref{4_eq:veloform}.
\end{theorem}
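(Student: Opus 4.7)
The plan is to verify the equivalence by a direct chain-rule computation. First, I would make the \emph{integrate inputs, differentiate outputs} operation precise: prepending an integrator on the input side of \cref{4_eq:nonlinsys} introduces a new input $\tilde w$ with $\dot w = \tilde w$, while appending a differentiator on the output side introduces a new output $\tilde z = \dot z$. The resulting combined system has extended state $(x,w)$, input $\tilde w$, and output $\tilde z$, with dynamics $\dot w = \tilde w$, $\dot x = \stMap(x,w)$, and $\tilde z = \tfrac{d}{dt}\opMap(x,w)$.

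Next, I would differentiate both the primal state equation $\dot x = \stMap(x,w)$ and the primal output equation $z = \opMap(x,w)$ with respect to time. Any $(x,w,z)\in\Bc$ satisfies $x\in\C{2}$ and $w,z\in\C{1}$ by \cref{4_eq:Bc}, so the differentiation is well defined and, together with $\stMap,\opMap\in\C{1}$, the chain rule gives $\ddot x = \velA(x,w)\dot x + \velB(x,w)\,\tilde w$ and $\tilde z = \velC(x,w)\dot x + \velD(x,w)\,\tilde w$. After identifying $\tilde w\equiv\dot w$ and $\tilde z\equiv\dot z$, these are exactly the velocity form equations \cref{4_eq:veloform}. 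This establishes one direction: every trajectory of the combined system, viewed through the map $(x,w,z)\mapsto(\dot x,\dot w,\dot z)$, yields a velocity form solution in $\Bv$.

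For the converse direction, I would use that $\Bv = \dotB\Bc$ by construction, so any velocity form signal triple arises by differentiation from a primal trajectory $(x,w,z)\in\Bc$. Feeding $\dot w$ through the input integrator with initial condition $w(0)$ reconstructs $w$, which together with $x(0)$ uniquely determines $x$ via the primal dynamics (using the forward completeness and uniqueness of solutions assumed for \cref{4_eq:nonlinsys}) and $z=\opMap(x,w)$; the output differentiator then produces $\dot z$. Hence the two behaviors coincide.

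The argument is essentially just the chain rule, so I do not foresee any significant technical obstacle. The only care needed is in tracking the regularity requirements of $\Bc$ to justify the time differentiation of the primal equations, and in handling the initial conditions of the input integrator when inverting the transformation to recover $w$ (and hence $x$ and $z$) from the velocity-domain signals.
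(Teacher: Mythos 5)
Your proposal is correct and follows essentially the same route as the paper: the paper's proof also introduces the integration/differentiation operators ($\dt$ and $\dt^{-1}$), substitutes $\gd=\dt^{-1}\hat\gd$ and $\hat\gp=\dt\gp$ into \cref{4_eq:nonlinsys}, and applies the chain rule (justified by $\st\in\C{2}$, $\gd,\gp\in\C{1}$ and $\stMap,\opMap\in\C{1}$) to arrive at \cref{4_eq:veloform}. Your added remarks on the converse inclusion and on the integrator's initial condition are a slightly more careful bookkeeping of the behavioral equivalence than the paper's one-directional derivation, but the core argument is identical.
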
 
\begin{proof}
See \cref{4_pf:timediff}.
\end{proof}

Next, we exploit the result of \cref{4_thrm:timediff} for our controller realization. Namely, we concatenate an integrator to the output of the velocity controller $\velocontroller$ and concatenate a differentiator to its input. We then absorb the differentiator and integrator into the dynamics of the controller. Let us denote the $i$'th element of $\sch$ by $\sch_i$.
\begin{theorem}[Universal shifted controller realization]\label{4_thrm:controlrealiz}
Consider the velocity controller $\velocontroller$ given by \cref{4_eq:veloContr}. Furthermore, consider the nonlinear time-invariant controller $\controller$ given by
\begin{subequations}\label{4_eq:Contr}
\begin{align}
	\dotstkus(t) &= \lpvAkus(\sch(t))\stkus+\lpvBkus(\sch(t),\dot\sch(t)) \ipk(t);\\
	\opk(t) &= \lpvCkus\stkus+\lpvDkus(\sch(t)) \ipk(t);
\end{align}
\end{subequations}
where $\stkus(t)\in\reals^{\stkSize+\opkSize}$ is the state of the controller, and where 
\begin{equation}\label{4_eq:shiftcontrmat}
	\begin{alignedat}{2}
		\lpvAkus(\sch) \!&= \!\begin{bmatrix}
			\lpvAk(\sch) & \!\!0\\\lpvCk(\sch) & \!\!0
		\end{bmatrix}\!,\, &&\lpvBkus(\sch,\dot\sch) \!=\! \begin{bmatrix}
			\lpvAk(\sch)\lpvBk(\sch)-\dif\lpvBk(\sch,\dot\sch)\\
			\lpvCk(\sch)\lpvBk(\sch)-\dif\lpvDk(\sch,\dot\sch)
		\end{bmatrix}\!,\\
		\lpvCkus \!&=\! \begin{bmatrix}
			0 & I
		\end{bmatrix}, &&\lpvDkus(\sch) \!=\! \lpvDk(\sch),
	\end{alignedat}
\end{equation}
with (omitting time dependence) $\sch = \schMap(\stp,\ipp)$, $\dif\lpvBk(\sch,\dot\sch) = \sum_{i=1}^\schSize\Partial{\lpvBk(\sch)}{\sch_i}\dot\sch_i$, and $\dif\lpvDk(\sch,\dot\sch) = \sum_{i=1}^\schSize\Partial{\lpvDk(\sch)}{\sch_i}\dot\sch_i$. The controller $\controller$ in \cref{4_eq:Contr} is the primal form of $\velocontroller$ \cref{4_eq:veloContr} and the velocity form of $\controller$ is $\velocontroller$. Hence, $\controller$ is called the primal realization of $\velocontroller$.
\end{theorem}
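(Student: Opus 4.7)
The plan is to construct $\controller$ directly from $\velocontroller$ using the velocity behavior equivalence principle of \cref{4_thrm:timediff}, and then to verify the resulting realization matches \cref{4_eq:Contr,4_eq:shiftcontrmat}. By \cref{4_thrm:timediff}, a primal realization of $\velocontroller$ is obtained by treating $\ipkdot$ as $\dot\ipk$ for a new input $\ipk$ and by appending an integrator on the output side so that $\opk$ plays the role of the new output. Applying this viewpoint to \cref{4_eq:veloContr} yields the improper intermediate description
$$
\dot\stkdot = \lpvAk(\sch)\stkdot + \lpvBk(\sch)\dot\ipk,\quad \dot\xi = \lpvCk(\sch)\stkdot + \lpvDk(\sch)\dot\ipk,\quad \opk = \xi,
$$
with $\xi\in\reals^{\opkSize}$ the integrator state, so a state transformation is required to eliminate the $\dot\ipk$ feedthrough.

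Next, I would introduce the invertible change of coordinates $\tilde\stkdot = \stkdot - \lpvBk(\sch)\ipk$ and $\tilde\xi = \xi - \lpvDk(\sch)\ipk$, and differentiate in time using the chain-rule identity $\tfrac{d}{dt}[\lpvBk(\sch)\ipk] = \lpvBk(\sch)\dot\ipk + \dif\lpvBk(\sch,\dot\sch)\ipk$, with $\dif\lpvBk(\sch,\dot\sch) = \sum_{i=1}^{\schSize}\Partial{\lpvBk(\sch)}{\sch_i}\dot\sch_i$, and analogously for $\lpvDk$. By construction the $\dot\ipk$ terms cancel, and substituting $\stkdot = \tilde\stkdot + \lpvBk(\sch)\ipk$ in the remaining expressions yields exactly the matrices $\lpvAkus,\lpvBkus,\lpvCkus,\lpvDkus$ listed in \cref{4_eq:shiftcontrmat}, with the augmented state $\stkus = \col(\tilde\stkdot,\tilde\xi)$ and output relation $\opk = \tilde\xi + \lpvDk(\sch)\ipk$. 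This establishes that $\controller$ in \cref{4_eq:Contr} is a primal realization of $\velocontroller$.

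For the converse statement, that the velocity form of $\controller$ coincides with $\velocontroller$, I would invoke \cref{4_thrm:timediff} once more: differentiating the outputs and integrating the inputs of $\controller$ recovers, modulo the invertible state transformation above (which preserves input-output behavior), the behavior from which $\controller$ was synthesized, which is precisely that of $\velocontroller$. Alternatively, this can be verified directly by differentiating \cref{4_eq:Contr} in time and checking, after substituting back $\opkdot = \dot\opk$ and $\ipkdot = \dot\ipk$, that the resulting state-space equations reduce to \cref{4_eq:veloContr}. The main obstacle is notational rather than conceptual: one must carefully distinguish total and partial time-derivatives of the $\sch$-dependent matrices and track the cancellation of the auxiliary $\dif\lpvBk$ and $\dif\lpvDk$ terms. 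Once the state transformation is in place, both parts of the statement reduce to routine algebraic verifications.
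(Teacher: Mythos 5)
Your proposal is correct and follows essentially the same route as the paper's proof: the same state transformations $\stkA=\stkdot-\lpvBk(\sch)\ipk$ and $\stkB=\opk-\lpvDk(\sch)\ipk$ (your $\tilde\stkdot$ and $\tilde\xi$), the same chain-rule cancellation producing the $\dif\lpvBk$ and $\dif\lpvDk$ terms, and the same reverse substitution (via \cref{4_thrm:timediff}) to confirm that the velocity form of $\controller$ recovers $\velocontroller$. No gaps.
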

\begin{proof}
See \cref{4_pf:controlrealiz}.
\end{proof}
We will refer to the controller $\controller$ in \cref{4_eq:Contr} as the \emph{US controller}. For the realization of this controller, we require the derivative $\dot\sch$ of the scheduling-variable. Note that $\dot\sch$ exists, as $\sch\in\C{1}$, and is bounded as the VPV embedding region $\stpSetLPV\times\ippSetLPV$ is compact. Moreover, also note that the dependency on $\dot\sch$ in \cref{4_eq:Contr} drops out when $\lpvBk$ and $\lpvDk$ of the velocity controller \cref{4_eq:veloContr} are constant matrices. This then has to be ensured in Step 2 of the controller synthesis procedure. Note by incorporating a low-pass filter in $\velocontroller$ for roll-off in the high frequency range this can easily be ensured. Finally, note that the realization of the controller \cref{4_eq:Contr} is not necessarily state minimal. However, in the literature techniques exist which can be applied to construct a minimal state-space realization or perform state-reduction of an LPV system, see e.g. \cite{Petreczky2017}. 
An interpretation of this controller realization procedure is also depicted in \cref{4_fig:controlrel}.

\begin{figure}
	\centering
	\includegraphics{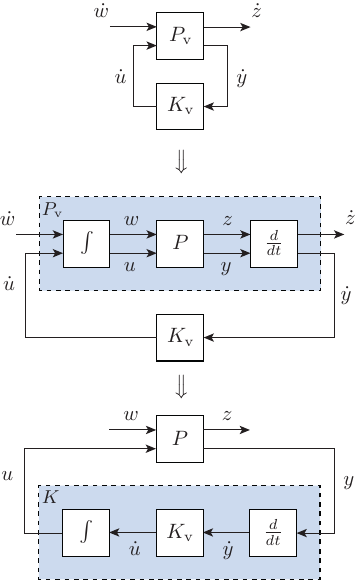}
	\caption{Universal shifted controller realization.}
	\label{4_fig:controlrel} 
	\vspace{-1em}
\end{figure}

\begin{remark}
Similar control structures as \cref{4_eq:Contr} have been proposed in literature, see e.g. \cite{Kaminer1995,Mehendale2006}. Compared to these works, we connect the proposed controller design to US(A)S and USP, which, to the authors' knowledge, has not been done so far in literature.
\end{remark}

\subsection{Closed-loop universal shifted stability and performance}\label{4_sec:clpperfstuffs}
Based on the proposed US controller realization of \cref{4_sec:shiftcontr}, we can show that the closed-loop interconnection $\ic{\genplant}{\controller}$ is USAS and has \lstwo-gain bounded by $\perf$ (under the consideration that \cref{4_prop:veloshifteddissip} is true). Before showing this result, we first introduce the following definition:
\begin{definition}[Invariance]\label{4_def:invariance}
	For a system\,\footnote{Note that $\ic{\genplant}{\controller}$ is of this form.} given by \cref{4_eq:nonlinsys} with behavior $\B$, we call $\tilde\stSet\subseteq\stSet$ to be invariant under a given $\tilde\gdSet\subseteq\gdSet$, if  $\st(t) = \sttran(t,0,\stIc,\gd) \in \tilde\stSet$ for all $t\in\nnreals$, $\stIc\in\tilde\stSet$ and $\gd\in\gdSet^\nnreals$. The corresponding behavior is denoted by $\Binv{\tilde\stSet\tilde\gdSet}=\B \cap \{(\st,\gd,\gp)\in\B \mid (\st(t),\gd(t))\in \tilde\stSet\times\tilde\gdSet,\,\forall\,t\in\nnreals\}$.
\end{definition}
\begin{theorem}[Closed-loop USAS and USP]\label{4_thrm:clpls2}
 Let $\velocontroller$, given in \cref{4_eq:veloContr}, be an LPV controller, synthesized for the velocity form $\velogenplant$ in \cref{4_eq:velogenplant} of a nonlinear system given by \cref{4_eq:genplant} with behavior $\Bc$, which ensures classical dissipativity and a bounded \ltwo-gain of $\perf$ of the closed-loop $\ic{\velogenplant}{\velocontroller}$ on $\stpSetLPV\times\ippSetLPV$.  Consider the set $\tilde\gdSet\subseteq \gdSet$, such that there is an open and bounded $\stkSet \subseteq \reals^\stkSize$ for which $\stclSet = \stSetLPV \times \stkSet $ is invariant in the sense of \cref{4_def:invariance}. Then, the controller $\controller$, given by \cref{4_eq:Contr}, ensures USAS and an \lstwo-gain of $\leq\perf$ of the closed-loop $\ic{\genplant}{\controller}$ for all $\gd\in{\tilde\gdSet}^\nnreals\cap\ltwoe$ and any $\gdEq\in\gdSetEq\cap\tilde\gdSet$.
 \end{theorem}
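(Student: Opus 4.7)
The overall plan is to bootstrap from the synthesis guarantees obtained in the velocity domain to the universal shifted properties of the primal closed-loop interconnection via the chain
\[
\text{$\ltwo$-gain of }\ic{\velogenplantLPV}{\velocontroller}\;\longrightarrow\; \text{VD of }\ic{\genplant}{\controller}\;\longrightarrow\; \text{USAS and USP of }\ic{\genplant}{\controller}.
\]
The key enabling ingredients are already available: Theorem~\ref{4_thrm:controlrealiz} gives that the velocity form of $\controller$ is exactly $\velocontroller$; Theorem~\ref{4_thrm:veloic} says that taking the velocity form commutes with the feedback interconnection; Theorem~\ref{4_thrm:veloICL2} transfers the LPV performance bound for the VPV embedding down to the velocity form of the nonlinear plant on the restricted region; and Theorems~\ref{4_lem:velostab} and~\ref{4_thm:veloshiftperf} (together with Conjecture~\ref{4_prop:veloshifteddissip}) translate velocity dissipativity back to USAS and to a closed-loop $\lstwo$-gain bound.

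Concretely, I would proceed as follows. First, verify well-posedness of $\ic{\genplant}{\controller}$ so that Theorem~\ref{4_thrm:veloic} applies and yields the identification
$
\veloform{(\ic{\genplant}{\controller})} = \ic{\velogenplant}{\velocontroller}.
$
Next, I would invoke the invariance assumption on $\stclSet = \stSetLPV\times\stkSet$ under $\tilde\gdSet$: for any $\stclIc\in\stclSet$ and any $\gd\in\tilde\gdSet^\nnreals$, the closed-loop state trajectory stays in $\stclSet$, so in particular $(\stp(t),\ipp(t))\in\stpSetLPV\times\ippSetLPV$ for all $t\geq 0$ and the scheduling $\sch(t)=\schMap(\stp(t),\ipp(t))$ remains in $\schSet$. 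This is exactly what is needed to invoke Theorem~\ref{4_thrm:veloICL2}, which then gives that $\ic{\velogenplant}{\velocontroller}$ is classically dissipative with an $\ltwo$-gain bound $\perf$ on the restricted behavior $\BvXU$. In other words, the closed-loop is $(\perf^2 I,0,-I)$-VD on the invariant set.

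From this $\qsr$-VD property, the USAS conclusion follows from Theorem~\ref{4_lem:velostab}: the supply $\supfunVelo(\gddot,\gpdot)=\perf^2\gddot^\top\gddot-\gpdot^\top\gpdot$ satisfies $\supfunVelo(0,\gpdot)=-\gpdot^\top\gpdot\leq 0$ with equality only at $\gpdot=0$, which, combined with the observability inherited from the generalized-plant assumption in Proposition~\ref{4_prop:genplant}, yields USAS of $\ic{\genplant}{\controller}$ (equivalently, a direct appeal to Theorem~\ref{4_thm:velotoshiftstab} using the velocity storage function $\storfunVelo(\dot\stcl)=\dot\stcl^\top\storquadClt\dot\stcl$ delivered by the synthesis also works). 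For the $\lstwo$-gain part, I would restrict attention to equilibria $\gdEq\in\gdSetEq\cap\tilde\gdSet$ and apply Conjecture~\ref{4_prop:veloshifteddissip} to the closed-loop, which upgrades $(\perf^2I,0,-I)$-VD to $(\perf^2I,0,-I)$-USD; Lemma~\ref{4_lem:ls2gaindissip} then gives the $\lstwo$-gain bound of $\perf$ from $\gd-\gdEq$ to $\gp-\gpEq$ along trajectories in $\Binv{\tilde\stSet\tilde\gdSet}$ for $\gd\in\tilde\gdSet^\nnreals\cap\ltwoe$.

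The main technical obstacle is managing the interplay between the invariance hypothesis, the admissible scheduling region, and the restriction $\gd\in\tilde\gdSet^\nnreals$: one must carefully argue that under these restrictions the trajectories used in the velocity-form inequality actually belong to $\BvXU$ so that the VPV embedding faithfully captures the closed-loop, and that the equilibrium $(\stEq,\gdEq,\gpEq)$ associated with $\gdEq\in\gdSetEq\cap\tilde\gdSet$ lies in the invariant set so that the shifted inequalities are meaningful. The other subtlety, explicitly flagged in the statement, is that the exact $\lstwo$-gain bound of $\perf$ (rather than the scaled bound $\tilde\perf=\sqrt{\alpha\beta^2}\perf$ from Corollary~\ref{4_cor:veloshiftl2}) is conditional on Conjecture~\ref{4_prop:veloshifteddissip}; the USAS part, however, is unconditional.
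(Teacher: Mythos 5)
Your proposal is correct and follows the same overall chain as the paper's proof: invariance of $\stclSet$ under $\tilde\gdSet$ keeps $(\stp,\ipp)$ in the embedding region so that \cref{4_thrm:veloICL2} applies; \cref{4_thrm:controlrealiz} and \cref{4_thrm:veloic} identify the velocity form of $\ic{\genplant}{\controller}$ with $\ic{\velogenplant}{\velocontroller}$, giving $(\perf^2I,0,-I)$-VD of the closed loop on the restricted behavior; \cref{4_prop:veloshifteddissip} upgrades this to USD; and \cref{4_lem:ls2gaindissip} yields the \lstwo-gain bound. The one genuine divergence is the stability part. The paper derives USAS from the USD property obtained \emph{after} invoking the conjecture, via \cref{4_thm:shiftdissipstab}, so in the written proof USAS is formally routed through \cref{4_prop:veloshifteddissip}; you instead derive USAS directly from the VD property via \cref{4_lem:velostab} (or, equivalently, \cref{4_thm:velotoshiftstab} with the quadratic velocity storage function), which keeps the stability conclusion unconditional on the conjecture. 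Your route is preferable and is in fact what the paper asserts in the main text (that US(A)S is guaranteed by \cref{4_thm:velotoshiftstab} and is therefore not conjectural), even though its appendix proof does not reflect this. One small imprecision on your side: the observability needed for the asymptotic part of \cref{4_lem:velostab} is not the same as the detectability in \cref{4_prop:genplant}, and neither you nor the paper explicitly discharges that hypothesis for the closed-loop system; it is best flagged as an additional standing assumption.
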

 \begin{proof}
 See \cref{4_pf:clpls2}.
\end{proof}
Note that considering $\gd\in{\tilde\gdSet}^\nnreals\cap\ltwoe$ in \cref{4_thrm:clpls2} ensures that the trajectories stay in the VPV embedding region considered during synthesis of the controller. However, computing $\tilde\gdSet$ is a difficult problem, which is related to reachability analysis or invariant set computation. Nonetheless, there are numerical tools that can be employed for this purpose, see e.g.~\cite{Althoff2013,Maidens2015}.

Therefore, we have shown that the proposed controller synthesis procedure allows us to realize a nonlinear US controller that ensures US(A)S and USP, which constitutes \cref{con:syn}.

\subsection{Reference tracking and disturbance rejection}\label{4_sec:refdistreal}
The previously presented US controller design makes use of the velocity form and VD to ensure US(A)S and USP of the closed-loop. This has the advantage that explicit knowledge of the (closed-loop) equilibrium points is not required, making the design procedure feasible. However, for reference tracking and disturbance rejection purposes, it is important that the controller is designed in such a way that the equilibrium points of the closed-loop system correspond to the to-be-followed (constant) reference trajectories.

To achieve this for the US controller design described in this paper, we propose the following solution. For reference tracking and disturbance rejection, the generalized disturbance channel $\gd$ of the generalized plant is often in the form $\gd = \col(\gd_1,\gd_2)$ with $\gd_1$ containing the reference signals and $\gd_2$ containing external disturbances. Furthermore, corresponding to this, the measured output of the plant is assumed to be of the form $\opp = \col(\opp_1,\opp_2)$, where $\opp_1$ contains signals to be tracked (such that $\gd_1$ and $\opp_1$ have the same dimension) and $\opp_2$ contains other to be controlled variables. As in controller design for LTI systems, an integrator is required to be included in the dynamics of the controller $\controller$ corresponding to the $\opp_1$ channel to track at least a constant reference.  A simple way to achieve explicit integral action is by including a bi-proper filter with integrator(s) in the loop, or approximate integral action can be achieved by appropriate choice of weighting filters, see \cite[Section 17.4]{Zhou1996}. Including an explicit integral filter results in the interconnection depicted in \cref{4_fig:contrAndInt}. See also \cref{4_fig:genplantintlooppp}, where a generalized plant $\genplant$ is depicted with an explicit integral filter in the loop (where $\opp_2$ is empty, i.e., $\opp_1=\opp$).

\begin{figure}
	\centering
	\includegraphics[scale=1]{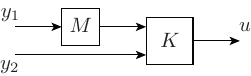}
	\caption{Controller interconnection with controller $\controller$ and integral filter $\weightint$ for reference and disturbance rejection.}
	\label{4_fig:contrAndInt}
	\vspace{-.5em}
\end{figure}

\begin{figure}
	\centering
	\includegraphics[scale=1]{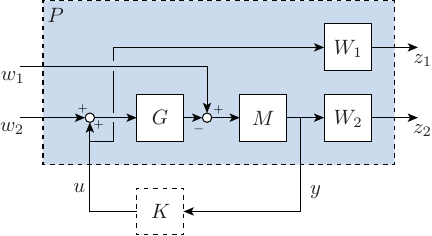}
	\caption{Example closed-loop control configuration with explicit integral filter $\weightint$ in the loop.}
	\label{4_fig:genplantintlooppp}
	\vspace{-1em}
\end{figure}

The inclusion of explicit integrators in the loop allows for state reduction of the interconnection of integrators and primal controller, as the differentiators used for realization of the controller (see \cref{4_sec:shiftcontr}) and the integrators in the loop partially cancel out. This yields the following \lcnamecref{4_cor:intrealiz}:
\begin{corollary}[Realization with integral action]\label{4_cor:intrealiz}
	Consider a generalized plant which includes an explicit integrator filter of the form $\weightint(s) = \frac{s+\alpha}{s}$ (where $s$ is the Laplace variable and $\alpha>0$) in the loop, such that the (to-be-designed) controller $\controller$ and $\weightint$ are connected as depicted in \cref{4_fig:contrAndInt} where $\opp_2$ is empty, i.e., $\opp_1=\opp$ (see also see \cref{4_fig:genplantintlooppp}). For $\velocontroller$ given by \cref{4_eq:veloContr}, the interconnection of the primal realization of the controller $\controller$ and $\weightint$ can be expressed as \cref{4_eq:Contr} where $\lpvAkus$, $\lpvCkus$, and $\lpvDkus$ are given as in \cref{4_eq:shiftcontrmat}, while $\lpvBkus$ is
	\begin{equation}\label{4_eq:Bcontr}
		\lpvBkus(t) = \begin{bmatrix}
			\lpvAk(\sch(t))\lpvBk(\sch(t))+\lpvBk(\sch)\alpha I -\dif\lpvBk(\sch(t),\dot\sch(t))\\
			\lpvCk(\sch(t))\lpvBk(\sch(t))+\lpvDk(\sch)\alpha I -\dif\lpvDk(\sch(t),\dot\sch(t))
		\end{bmatrix}.
	\end{equation}
\end{corollary}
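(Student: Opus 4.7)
The plan is to compute directly the state-space realization of the cascade $\weightint \to \controller$ using the primal realization of $\controller$ from \cref{4_thrm:controlrealiz}, and then reduce the state dimension via a similarity transformation that exposes an unobservable subsystem. First, I pick a natural state-space realization of the integral filter $\weightint(s) = (s+\alpha)/s$ as $\dot{x}_\mr{int} = e$ and $u = \alpha x_\mr{int} + e$, where $e$ denotes the external input of the combined controller (the tracking error) and $u$ is the input fed to $\controller$. Substituting $u = \alpha x_\mr{int} + e$ into \cref{4_eq:Contr}, I obtain an augmented system with state $\col(x_\mr{int}, \stkus_1, \stkus_2) \in \reals^{\opkSize + \stkSize + \opkSize}$, where $\stkus_1 \in \reals^{\stkSize}$ and $\stkus_2 \in \reals^{\opkSize}$ are the two blocks of the state $\stkus$ of $\controller$ induced by the partition in \cref{4_eq:shiftcontrmat}.

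The crucial insight is that the input matrix $\lpvBkus$ of \cref{4_eq:Contr} carries the derivative-reconstruction terms $-\dif \lpvBk$ and $-\dif \lpvDk$, which were introduced in the proof of \cref{4_thrm:controlrealiz} precisely to absorb an input differentiator in the primal realization. Precomposing that primal controller with an explicit integrator should thus yield a partial cancellation. I would make this explicit through the similarity transformation
\begin{equation*}
\tilde{z} = \stkus_1 + \alpha \lpvBk(\sch) x_\mr{int}, \qquad \tilde{y} = \stkus_2 + \alpha \lpvDk(\sch) x_\mr{int}.
\end{equation*}
Differentiating these two relations and using $\dot{x}_\mr{int} = e$ together with $\tfrac{d}{dt}\lpvBk(\sch) = \dif \lpvBk(\sch, \dot\sch)$ (and analogously for $\lpvDk$), a direct computation shows that in the new coordinates $(x_\mr{int}, \tilde{z}, \tilde{y})$ the dynamics of $(\tilde{z}, \tilde{y})$ no longer depend on $x_\mr{int}$, and the output $\opk$ depends only on $\tilde{y}$ and $e$. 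The algebra works because the $\alpha \dif \lpvBk \cdot x_\mr{int}$ and $\alpha \dif \lpvDk \cdot x_\mr{int}$ contributions produced by applying the product rule to $\alpha \lpvBk(\sch) x_\mr{int}$ and $\alpha \lpvDk(\sch) x_\mr{int}$ are exactly absorbed by the $-\dif \lpvBk$ and $-\dif \lpvDk$ terms inherent in $\lpvBkus$, while the $\alpha \dot{x}_\mr{int} = \alpha e$ contributions produce the new $\alpha \lpvBk$ and $\alpha \lpvDk$ summands appearing in \cref{4_eq:Bcontr}.

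Hence, in the transformed coordinates $x_\mr{int}$ is rendered unobservable, and a Kalman observability reduction removes it, leaving a reduced controller with state $\col(\tilde{z}, \tilde{y}) \in \reals^{\stkSize + \opkSize}$. What remains matches the form \cref{4_eq:Contr} with $\lpvAkus$, $\lpvCkus$, $\lpvDkus$ unchanged from \cref{4_eq:shiftcontrmat} and with $\lpvBkus$ replaced by the expression \cref{4_eq:Bcontr}. The main bookkeeping challenge is consistently tracking the scheduling-derivative terms through the similarity transformation; once the above choice of transformation is in place, the remainder is routine matrix algebra, and no fundamentally new idea beyond the exact cancellation between $\weightint$ and the implicit differentiator inside $\controller$ is required.
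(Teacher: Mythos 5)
Your proof is correct, and the algebra checks out: with $\tilde z = \stkA+\alpha \lpvBk(\sch)x_\mr{int}$ and $\tilde y = \stkB+\alpha \lpvDk(\sch)x_\mr{int}$ (using the paper's names $\stkA,\stkB$ for the two blocks of $\stkus$), the $\alpha\,\dif\lpvBk(\sch,\dot\sch)\,x_\mr{int}$ and $\alpha\,\dif\lpvDk(\sch,\dot\sch)\,x_\mr{int}$ terms produced by the product rule cancel exactly against the contributions of the $-\dif\lpvBk$ and $-\dif\lpvDk$ entries of $\lpvBkus$ acting on $\alpha x_\mr{int}$, while $\alpha\lpvBk\dot x_\mr{int}=\alpha\lpvBk e$ and $\alpha\lpvDk e$ supply precisely the new $\lpvBk\alpha I$ and $\lpvDk\alpha I$ summands in \cref{4_eq:Bcontr}, and $x_\mr{int}$ drops out of both the $(\tilde z,\tilde y)$ dynamics and the output. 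The route is genuinely different from the paper's: the paper never forms the cascade in state space. It observes that the differentiator appended to the controller input in the realization of \cref{4_thrm:controlrealiz} composes with the filter to give $s\cdot\weightint(s)=s+\alpha$, replaces the input relation $\dt\ipk=\ipkdot$ of that proof by $\dt\ipk+\alpha\ipk=\ipkdot$, and reruns the derivation, which directly yields a realization of dimension $\stkSize+\opkSize$. What your version buys is a rigorous justification of the state reduction that the paper handles by the informal remark that the differentiator and the integrator ``partially cancel out'': in your coordinates the integrator state is provably unobservable and its removal is a standard Kalman reduction (its initial condition being absorbed into those of $\tilde z,\tilde y$, so input--output behavior is preserved). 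What the paper's route buys is brevity, since it reuses the entire machinery of \cref{4_thrm:controlrealiz} verbatim. Two cosmetic points: the integrator state lives in $\reals^{\ipkSize}$ (the dimension of the controller input), not $\reals^{\opkSize}$; and your map is a scheduling-dependent (hence time-varying) change of coordinates rather than a similarity transformation in the strict LTI sense, though it is exactly the kind of substitution the paper itself employs in the proof of \cref{4_thrm:controlrealiz}.
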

\begin{proof}
See \cref{4_pf:intrealiz}.
\end{proof}

\section{Example\extver{s}{}}\label{4_sec:examples}
In this section, we  will demonstrate through \extver{examples}{an example}that the US controller design guarantees closed-loop USAS and \lstwo-gain performance. The results will be demonstrated through \extver{a simulation study and}{}experiments on a real system. \extver{}{A simulation study on a Duffing oscillator is also included in the extended version \cite{Koelewijn2023}.}Moreover, we compare the US controller design to a standard LPV controller design, which is only able to guarantee stability w.r.t. the origin and standard \ltwo-gain performance.

\extver{
\subsection{Duffing oscillator}\label{4_sec:duffspring}
First, constant reference tracking and disturbance rejection for a Duffing oscillator is investigated. The system is described by the following differential equations:
\begin{equation}\label{4_eq:duffNL}
\begin{aligned}
        \dot{q}(t) &= v(t);\\
        \dot{v}(t) &= -\frac{k_1}{m} q(t) -\frac{k_2}{m} \left(q(t)\right)^3 - \frac{d}{m} v(t) + \frac{1}{m} F(t);
       \end{aligned}
\end{equation}
where, $q$ $[\mr{m}]$ is the position, $v$ $[\mr{m\cdot s^{-1}}]$ the velocity and $F$ $[\mr{N}]$ is the (input) force acting on the mass. Furthermore, $m = 1\;[\mathrm{kg}]$, $k_1 = 0.5\;[\mathrm{N\cdot m^{-1}}]$, $k_2 = 5\;[\mathrm{N\cdot m^{-3}}]$ and $d = 0.2\;[\mathrm{N\cdot s\cdot m^{-1}}]$. We assume that only the position $q$ can be measured and hence it is considered to be the only output of the plant.

The generalized plant $\genplant$ considered for synthesis is depicted in \cref{4_fig:genplantscor}, where $\plant$ is the system given by \cref{4_eq:duffNL}, $\controller$ is the controller, and $\gd =\col\left( \rf ,\dist_\mathrm{i}\right)$ is the generalized disturbance, with $\rf$ the reference and $\dist_\mathrm{i}$ an input disturbance. The performance channel consists of $\gp_1$ (tracking error) and $\gp_2$ (control effort). The LTI weighting filters $\lbrace W_i \rbrace_{i = 1}^3$ used for performance shaping are $\weight_1(s) = \frac{0.501(s + 3)}{s + 2\pi}$, $\weight_2(s) = \frac{10(s+50)}{s + 5\cdot 10^4}$ and $\weight_3 = 1.5$. Furthermore, integral action is enforced by the filter $\weightint(s) = \frac{s+2\pi}{s}$. The resulting sensitivity weight $\weight_1(s)\weightint(s)$ has guaranteed 20 dB/dec roll-off at low frequencies to ensure good tracking performance, while $\weight_2(s)$ has high-pass characteristics to ensure roll-off at high frequencies.

\begin{figure}
    \centering
    \includegraphics[scale=1]{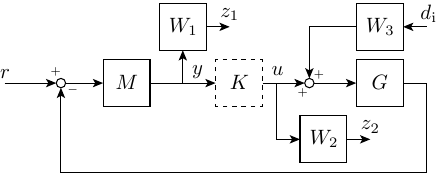}
    \caption{Generalized plant for the Duffing oscillator example.}
    \label{4_fig:genplantscor}
    \vspace{-1em}
\end{figure}

As the Duffing Oscillator is the only nonlinear system in the generalized plant, we only require the computation of the velocity form of \cref{4_eq:duffNL} for the VPV embedding, as all the LTI filters will have the same dynamics in their velocity forms. The following VPV embedding for \cref{4_eq:duffNL} is constructed\footnote{Note that variables with subscript $\mr{v}$ correspond to the time-derivatives, i.e., $\veloform{q}$ corresponds to $\dot{q}$ etc. See also \cref{4_def:vpvembed}.}, where dependence on time is omitted for brevity:
\begin{equation}\label{4_eq:duffvpv}
    \begin{aligned}
        \dot \veloform{q}   &= \veloform{v};\\
        \dot \veloform{v}  &=   \left( -\frac{k_1}{m}-3 \frac{k_2}{m} \sch \right)  \veloform{q} - \frac{d}{m} \veloform{v} + \frac{1}{m}\veloform{F};
    \end{aligned}
\end{equation}
where the scheduling $\sch(t) = q^2(t)$ and $\schSet=[0,\, 2]$ is chosen to allow for a relatively large operating range. Consequently, the resulting VPV embedding region is $\stSetLPV = [-\sqrt{2},\,\sqrt{2}]\times\reals$. 

The LPVcore Toolbox \cite{DenBoef2021} is used to synthesize a controller of the form \cref{4_eq:veloContr} for the velocity controller synthesis (i.e., second) step of the US controller design procedure \cref{4_sec:shiftcontr}. For this step, $\lpvBk$ and $\lpvDk$ are enforced to be constant, which gives us an \lstwo-gain of $\perf = 1.2$. Based on this structural restriction, the resulting US controller has affine dependence on $\sch(t)$ without dependence on $\dot \sch(t)$, see \cref{4_eq:Contr}. Moreover, as an integration filter $\weightint$ is included in the loop, we make use of the result of \cref{4_cor:intrealiz} for the primal realization to obtain the US controller. Based on this, the closed-loop is USAS and has an \lstwo-gain of $\perf \leq 1.2$ (under the consideration that \cref{4_prop:veloshifteddissip} is true) for $(q(t),v(t))\in [-\sqrt{2},\,\sqrt{2}]\times\reals$.

For comparison, a standard LPV controller design is also done to ensure \ltwo-gain performance and closed-loop stability (of the origin). For the design of the standard LPV controller, the primal form of the system given by \cref{4_eq:duffNL} is embedded in an LPV representation:
\begin{equation}\label{4_eq:duffLPV}
    \begin{aligned}
        \dot{q} &= v;\\
        \dot{v} &= \left(-\frac{k_1}{m}-\frac{k_2}{m}\sch_\mathrm{o}\right) q - \frac{d}{m} v + \frac{1}{m} F;
    \end{aligned}
\end{equation}
where $\sch_\mathrm{o}(t) = q^2(t)$ is the scheduling-variable. Here we will denote with subscript `o' the standard concept of LPV embedding and control design. Note that $\sch_\mr{o}$ is the same as $\sch$, i.e., we are able to create an LPV embedding with the same scheduling-map. Consequently, for comparison, we also consider $\sch_\mr{o}(t)\in\schSet_\mathrm{o}=[0,\, 2]$. The same generalized plant structure as for the US design is considered, see \cref{4_fig:genplantscor}, and we also use the LPVcore toolbox to synthesize the standard LPV controller. This then results in an \ltwo-gain for the standard LPV controller design of $\perf = 0.94$.

\begin{figure}
    \centering
    \includegraphics[scale=1]{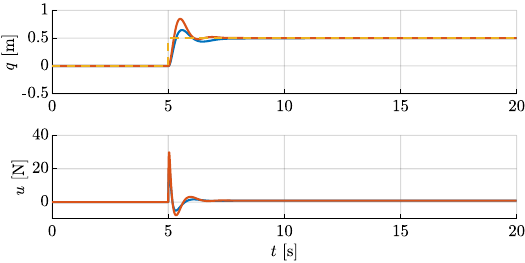}
    \caption{Position of the Duffing oscillator (top) in closed-loop with the standard LPV (\legendline{mblue}) and the US (\legendline{morange}) controllers under reference (\legendline{myellow,dashed}) and no input disturbance, together with the generated control inputs (bottom) by the controllers.}
    \label{4_fig:duffNoDist}
    \vspace{-1em}
\end{figure} 
\begin{figure}
    \centering
    \includegraphics[scale=1]{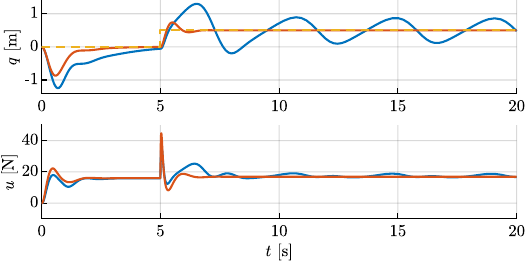}
    \caption{Position of the Duffing oscillator (top) in closed-loop with the standard LPV (\legendline{mblue}) and the US \mbox{(\legendline{morange})} controllers under reference (\legendline{myellow,dashed}) and input disturbance, together with the  generated control inputs (bottom) by the controllers.}
    \label{4_fig:duffDist}
    \vspace{-.5em}
\end{figure}
In simulation, the resulting outputs of the system using the standard LPV controller and the US controller in closed-loop are depicted without and with input disturbance in \cref{4_fig:duffNoDist,4_fig:duffDist}, respectively. In both cases, a step signal is taken as a reference trajectory which changes from zero to 0.5 at $t=5$ seconds. For the simulation results in \cref{4_fig:duffDist}, a constant input disturbance $\dist_\mr{i} \equiv -10\tfrac{2}{3}$ (corresponding to $-10\tfrac{2}{3}\cdot \weight_3=-16$ [N]) is applied. Note that this reference signal is not continuously differentiable, due to the change in value at 5 seconds. Nonetheless, as can be seen, the proposed US controller works for non-differentiable references as well.

 Comparing the results of the standard LPV controller and the US controller in \cref{4_fig:duffNoDist} shows that both controllers have similar performance when no input disturbance is present. The US controller has slightly more overshoot, but a lower settling time for this example. However, under constant input disturbance, which requires stabilization of a new set point, it can be seen in \cref{4_fig:duffDist} that the standard LPV controller has a significant performance loss with oscillatory behavior. This is because the LPV controller has no guarantees for stability and performance w.r.t. set points other than the origin. On the other hand, the proposed US controller ensures USAS and USP and therefore preserves the closed-loop performance properties under different setpoints, as visible in \cref{4_fig:duffDist}. Note, that in both cases, the scheduling-variable $\sch$ never leaves the set for which the controllers have been designed, i.e., $q(t) \in [-\sqrt{2},\,\sqrt{2}]$.
 }
 {}

\subsection{Unbalanced disk system}\label{4_sec:unbdisc}
\begin{figure}
	\centering
	\includegraphics[width=0.45\columnwidth]{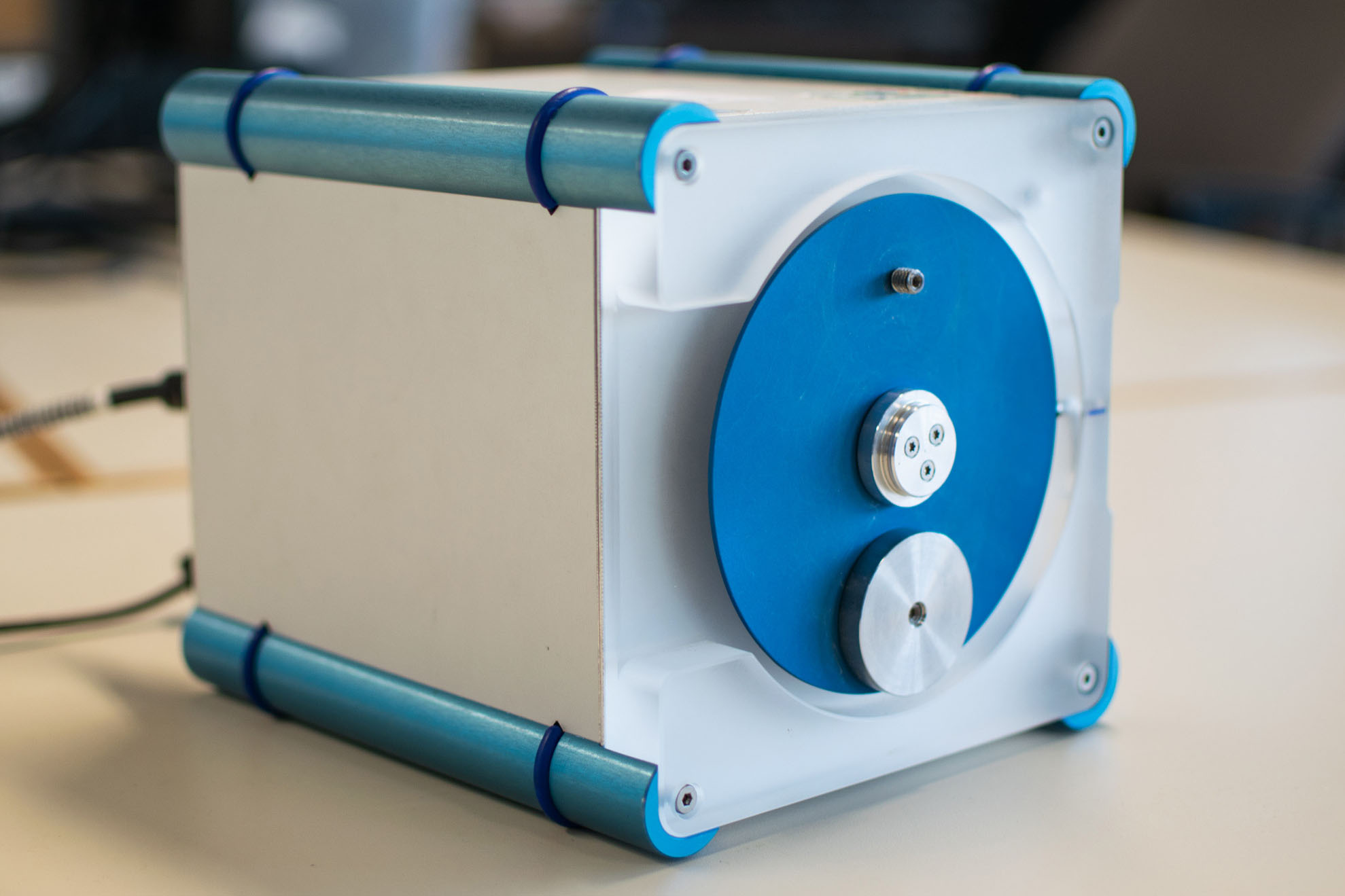}
	\caption{Unbalanced disk setup.}\label{4_fig:ubdisk}
	\vspace{-1em}
\end{figure}	
\extver{In the next example, we}{We}demonstrate the US controller design on an unbalanced disk experimental setup, see \cref{4_fig:ubdisk}. \extver{Like for the Duffing oscillator example, we}{We}consider a constant reference tracking disturbance rejection problem and \extver{also}{}compare the achieved performance to a standard LPV controller design. By neglecting the fast electrical dynamics of the motor, the motion of the unbalanced disk can be described as
\begin{subequations}\label{4_eq:disc}
\begin{align}
    \dot{\theta}(t) &= \omega(t);\\
    \dot{\omega}(t) &= \tfrac{M g l}{J}\sin(\theta(t)) -\tfrac{1}{\tau}\omega(t)+\tfrac{K_m}{\tau}V(t);
\end{align}
\end{subequations}
where $\theta$ $[\mr{rad}]$ is the angle of the disk, $\omega$ $[\mr{rad\cdot s^{-1}}]$ its angular velocity, $V$ $[\mr{V}]$ is the input voltage to the motor, $g$ is the gravitational acceleration, $l$ the length of the pendulum, $J$ the inertia of the disk, and $K_m$ and $\tau$ are the motor constant and time constant respectively. The angle of the disk $\theta$ is considered to be the output of the plant.
The physical parameters are given in \cref{4_table:disc}.
\begin{table}
\centering
\caption{Physical parameters of the unbalanced disk.}
\label{4_table:disc}
\begin{tabular}{*7l}\toprule
{Parameter:} & $g$ & $J$ & $K_m$ & $l$ & $M$ & $\tau$ \\
Value: & $9.8$ & $2.4\e{-4}$ & $11$ & $0.041$ & $0.076$ & $0.40$\\ \bottomrule
\end{tabular}
\end{table}

\begin{figure}
    \centering
    \includegraphics[scale=1]{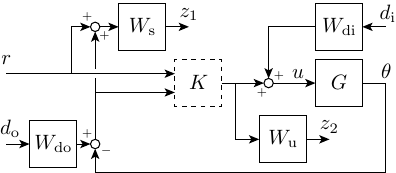}
    \caption{Generalized plant  considered for the unbalanced disk.}
    \label{4_fig:disgenplant3}
    \vspace{-1em}
\end{figure}

A generalized plant structure is used as depicted in \cref{4_fig:disgenplant3}, where $\plant$ is the system given by \cref{4_eq:disc}, $\controller$ is the to-be-designed controller, $\gd =\col\left( \rf ,\dist_\mathrm{i}, \dist_\mathrm{o}\right)$ is the generalized disturbance, with $\rf$ the reference, $\dist_\mathrm{i}$ being an input disturbance, and $\dist_\mr{o}$ an output disturbance. In this case, the controller $\controller$ has a two-degree-of-freedom structure, meaning the tracking error and reference trajectory are separate inputs to the controller. The weighting filters are chosen as
\begin{equation}
\begin{alignedat}{2}
    &W_\mathrm{s}(s)= \frac{0.5012 s + 2.005}{s+0.02005}, \qquad &&W_\mathrm{u}(s) = \frac{s+40}{s+4000},\\
    &W_\mathrm{di} = 0.5, \qquad &&W_\mathrm{do} = 0.1.
    \end{alignedat}
\end{equation}
Note that the integral action is approximate in this case, due to the choice of $W_\mr{s}$, as $W_\mr{s}$ includes a real pole close to the origin. While this means that we are not able to use \cref{4_cor:intrealiz} for the realization step of the US controller design, we will see that this choice will still result in good tracking and rejection behavior of the closed-loop.

As the unbalanced disk \cref{4_eq:disc} is the only nonlinear system in the generalized plant, only a VPV embedding of \cref{4_eq:disc} has to be constructed, which we take as:
\begin{equation}\label{4_eq:discincrLPV}
\begin{aligned}
    \dot\veloform{\theta}(t) &= \veloform{\omega}(t);\\
    \dot\veloform{\omega}(t) &= \left(\tfrac{M g l}{J}\sch(t))\right)\veloform{\theta}(t) -\tfrac{1}{\tau}\veloform{\omega}(t)+\tfrac{K_m}{\tau}\veloform{V}(t);
\end{aligned}
\end{equation}
where $\sch(t)=\schMap(\theta(t))=\cos(\theta(t))$ is the scheduling-variable which is assumed to be in $\schSet=[-1,\, 1]$, resulting in the VPV embedding region $\stSetLPV = \reals\times\reals$ for \cref{4_eq:disc}. Note that $\dot{\sch}(t) = -\sin(\theta(t))\omega(t)$ for which no bounds are explicitly assumed.

\extver{Like in the previous example, we}{We}use the LPVcore Toolbox\extver{}{\cite{DenBoef2021}}to synthesize the velocity controller, resulting in an \ltwo-gain of $\perf = 0.56$ for the velocity form of the closed-loop. This yields a US controller that ensures the closed-loop is USAS and has an \lstwo-gain of $\perf \leq 0.56$ (under \cref{4_prop:veloshifteddissip}).

For the unbalanced disk, a standard LPV controller is also designed for comparison. For this, the primal form of the nonlinear system  \cref{4_eq:disc} embedded in an LPV representation, which results in
\begin{equation}\label{4_eq:discLPV}
\begin{aligned}
    \dot{\theta}(t) &= \omega(t);\\
    \dot{\omega}(t) &= \left(\tfrac{M g l}{J}\sch_\mathrm{o}(t))\right) \theta(t) -\tfrac{1}{\tau}\omega(t)+\tfrac{K_m}{\tau}V(t);
\end{aligned}
\end{equation}
where $ \sch_\mathrm{o}(t) =\schMap_\mathrm{o}(\theta(t)) = \tfrac{\sin(\theta(t))}{\theta(t)} = \mr{sinc}(\theta(t))$. 
$\schSet_\mathrm{o}$ is chosen\footnote{Note that $\schMap_\mr{o}(0) = 1$ as $\lim_{x\rightarrow 0}\mr{sinc}(x)=1$.} as $[-0.22,\, 1]$. \extver{}{Here, we will denote with subscript `o' the standard concept of LPV embedding and control design.}The same generalized plant structure as for the US design is considered, see \cref{4_fig:disgenplant3}, and we also use the LPVcore Toolbox to synthesize the standard LPV controller. This then results in \ltwo-gain for the standard LPV controller design of $\perf = 0.56$.

\extver{The bode magnitude plot of the closed-loop with standard LPV controller and of the velocity form of the closed-loop with the US controller for frozen values of the scheduling-variables are plotted in \cref{4_fig:disc_bode}. From the figure, it can be seen the closed-loop frequency response behavior of the standard LPV controller and the velocity form of the US controller is almost identical. However, as we will see next, the closed-loop time-response behavior of the system with \emph{realized} US controller and the standard LPV controller is different.

\begin{figure}
    \centering
    \includegraphics[scale=1]{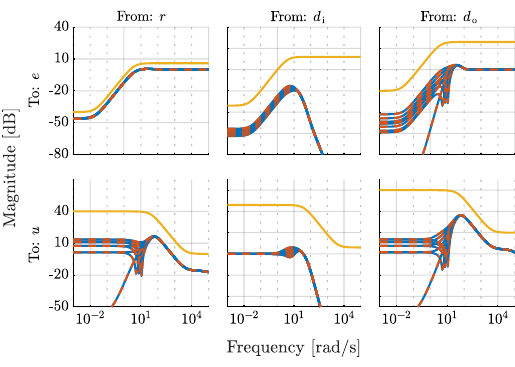}
    \caption{Bode magnitude plot of the closed-loop with standard LPV controller (\legendline{mblue}) and of the velocity form of the closed-loop with the US controller \mbox{(\legendline{morange,dashed})} for frozen values of the scheduling-variables, along with the corresponding inverse weightings \mbox{(\legendline{myellow})}.}    
   	\label{4_fig:disc_bode}
   	\vspace{-1em}
\end{figure}}{}

\begin{figure}
    \centering
    \includegraphics[scale=1]{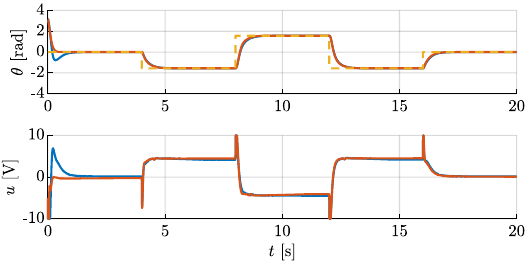}
    \caption{Measured angle of the unbalanced disk system (top)  in closed-loop with the standard LPV (\legendline{mblue}) and the US \mbox{(\legendline{morange})} controllers under reference \mbox{(\legendline{myellow,dashed})} and no input disturbance, together with inputs to the plant (bottom) generated by the controllers.}    
   	\label{4_fig:disc_exp_nodist}
   	\vspace{-1em}
\end{figure}
\begin{figure}
    \centering
    \includegraphics[scale=1]{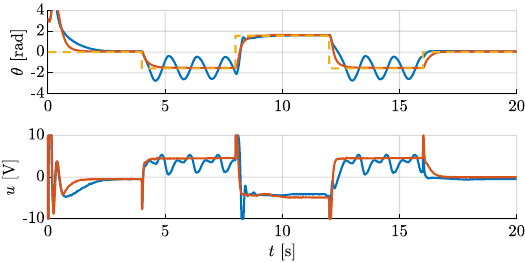}
    \caption{Measured angle of the unbalanced disk system (top) in closed-loop with the standard LPV  (\legendline{mblue}) and the US \mbox{(\legendline{morange})} controller under reference \mbox{(\legendline{myellow,dashed})} and input disturbance, together with corresponding inputs to the plant (bottom) generated by the controllers.}
    \label{4_fig:disc_exp_dist}
    \vspace{-1em}
\end{figure}

Both the US controller and LPV controller are then implemented on the experimental setup, whereby, for safety, the input voltage to the system was saturated between $\pm$ 10 V. In \cref{4_fig:disc_exp_nodist}, the angular position of the disk during the experiment is depicted along with the input to the plant (i.e., $u$) for a piecewise constant reference signal. Note that for the experiment, the disk starts in the downward position, which is why the initial angle is at $\pi$ radians.
In \cref{4_fig:disc_exp_dist}, the same reference trajectory is used, but a constant input disturbance of $d_\mathrm{i} = 60\,\textrm{V}$ is introduced (which is implemented by adding 60 V to the control input that is sent to the plant before saturation). For this input disturbance, the standard LPV controller performs much worse compared to the US controller design, which has similar performance to the case when no input disturbance is applied. Both the LPV controller and US controller are able to compensate the 60 V input disturbance, as visible in the control input that is sent to the plant (i.e., $\ipp$ in \cref{4_fig:disgenplant3}), see bottom graph in \cref{4_fig:disc_exp_dist}. However, while the control input that is sent to the plant is nearly identical for the US controller in both cases, see \cref{4_fig:disc_exp_nodist,4_fig:disc_exp_dist}, this is clearly not the case for the LPV controller, as oscillations in the signal are present when the input disturbance is applied, which causes unwanted oscillation of the disk angle. 
See also \url{https://youtu.be/B5453HP0APo} for a video of the experiment.

\section{Conclusions}\label{4_sec:conclusion}
In this paper, we have developed a novel systematic framework for the analysis and control of nonlinear systems to guarantee equilibrium free stability and performance. By using the LPV analysis and controller synthesis and by using the time-differentiated dynamics of a system, we can give a computationally attractive analysis and controller design approach. The resulting controller design is especially beneficial in applications considering (constant) reference tracking and disturbance rejection, which has also been demonstrated through \extver{a simulation and}{an}experimental study. For future research, we aim to extend the results to discrete-time nonlinear systems.

\appendices
\section{Proofs}
\extver{\proofsection{4_thm:shiftlyapstab}\label[appendix]{4_pf:shiftlyapstab}
For every $(\stEq,\gdEq)\in\proj_\mr{\stEq,\gdEq}\eqSet$, the function $\lyapfun:\st\mapsto\lyapfunShift(\st,\gdEq)$ satisfies the conditions for a Lyapunov function for the equilibrium point $\stEq$, as $\lyapfun = (\st\mapsto\lyapfunShift(\st,\gdEq))\in\posClass{\stEq}$ for every $(\stEq,\gdEq)\in\proj_\mr{\stEq,\gdEq}\eqSet$. Consequently, by \cref{4_eq:shiftedstability}, it holds for every $(\stEq,\gdEq)\in\proj_\mr{\stEq,\gdEq}\eqSet$ that
	\begin{equation}\label{4_eq:lyaptoshift}
		\frac{d}{dt}\lyapfun(\st(t))\leq 0,
	\end{equation}
	for all $t\in\nnreals$ and $\st\in\proj_\mr{\st}\Bw(\gd\equiv\gdEq)$. Hence, by Lyapunov theory, see e.g. \cite{Khalil2002}, the system is stable at each equilibrium point $(\stEq,\gdEq,\gpEq)\in\eqSet$, meaning, by definition, it is USS. Similarly, when \cref{4_eq:shiftedstability} holds, but with a strict inequality except when $\st(t)=\stEq$, this implies that \cref{4_eq:lyaptoshift} holds, but with a strict inequality except when $\st(t)=\stEq$. Therefore, from Lyapunov stability theory \cite{Khalil2002}, asymptotic stability of the nonlinear system follows at each equilibrium point $(\stEq,\gdEq,\gpEq)\in\eqSet$, meaning the system is USAS.}{}

\extver{\proofsection{4_lem:ls2gaindissip}\label[appendix]{4_pf:ls2gaindissip}
If the system is $\qsr$-USD with $\qsr = (\perf^2 I,0,-I)$, it holds that there exists a $\storfunShift$ such that for every $(\stEq,\gdEq,\gpEq)\in\eqSet$
\begin{multline}
	\storfunShift(\st(t_1),\gdEq)-\storfunShift(\st(t_0),\gdEq)\leq\\ \int_{t_0}^{t_1} \perf^2 (\star)^\top (\gd(t)-\gdEq) -  (\star)^\top (\gp(t)-\gpEq)\,dt,
\end{multline}
for all $t_0,t_1\in\nnreals$ with $t_1\geq t_0$ and $(\st,\gd,\gp)\in\B$. Consequently, it also holds for every $(\stEq,\gdEq,\gpEq)\in\eqSet$
 that 
\begin{multline}
	0\leq \storfunShift(\st(t),\gdEq)\leq\\ \int_{0}^{T} \perf^2 \norm{\gd(t)-\gdEq} -  \norm{\gp(t)-\gpEq}\,dt + \storfunShift(\stIc,\gdEq),
\end{multline}
for all $T\geq 0$ and $(\st,\gd,\gp)\in\B$. Using that $\gdEq=\eqMap(\stEq)$, this is equivalent to
\begin{equation}\label{4_eq:l2expressioninproof2}
\norm{\gp-\gdEq}_{2,T}^2 \le \perf^2 \norm{\gd-\gpEq}_{2,T}^2 + \storfunShift(\stIc,\eqMap(\stEq)),
\end{equation}
holding for all $T\ge 0$ and $(\st,\gd,\gp)\in\B$ with $\gd\in\ltwoe$. Next, we take the square root on both sides of \cref{4_eq:l2expressioninproof2}, which gives us
\begin{multline}
\norm{\gp-\gpo}_{2,T} \le \sqrt{\perf^2 \norm{\gd-\gdo}_{2,T}^2 + \storfunIncr(\stIc,\sto(0))}\leq \\\perf\norm{\gd-\gdo}_{2,T}+\sqrt{\storfunShift(\stIc,\eqMap(\stEq))},
\end{multline}
 which is equivalent to \cref{4_eq:shiftlplqgain} with $\icfunShift(\stIc,\stEq) = \sqrt{\storfunShift(\stIc,\eqMap(\stEq))}$.}{}
 
\extver{\proofsection{4_thm:shiftdissipstab}\label[appendix]{4_pf:shiftdissipstab}
If the system given by \cref{4_eq:nonlinsys} is USD w.r.t. a supply function $\supfunShift$, for which 
 \cref{4_eq:supfunshiftcond} holds for all $\gp\in\gpSet$ and every $(\stEq,\gdEq,\gpEq)\in\eqSet$, then \begin{multline}\label{4_eq:storfunshiftproof}
\storfunShift(\st(t_1),\gdEq)-\storfunShift(\st(t_0),\gdEq)\leq\\ \int_{t_0}^{t_1} \supfunShift(\gdEq,\gdEq,\gp(t),\gpEq)\,dt\leq 0,
\end{multline}
for all $t_0,t_1\in\nnreals$ with $t_0 \le t_1$, $(\st,\gp)\in\proj_\mr{\st,\gp} \Bw(\gd\equiv\gdEq)$, and for every $(\stEq,\gdEq,\gpEq)\in\eqSet$. As $\storfunShift(\cdot,\gdEq)\in\C{1}$ for all $\gdEq\in\gdSetEq$, this means that for every $(\stEq,\gdEq,\gpEq)\in\eqSet$ it holds that 
 \begin{equation}\label{4_eq:storfunshiftproof2}
 \frac{d}{dt}\storfunShift(\st(t),\gdEq)\leq 0,
 \end{equation}
 for all $t\in\nnreals$ and $\st\in\proj_\mr{\st} \Bw(\gd\equiv\gdEq)$. The storage function $\storfunShift$ satisfies the conditions for the universal shifted Lyapunov function $\lyapfunShift$ in \cref{4_thm:shiftlyapstab}. Hence, \cref{4_eq:storfunshiftproof2} implies \cref{4_eq:shiftedstability}, which by \cref{4_thm:shiftlyapstab} implies USS.

In case of USAS, the supply function satisfies \cref{4_eq:supfunshiftcond}, but with strict inequality for all $\gp\neq \gpEq$. Moreover, the system is assumed to be observable. Therefore, $\gp\in\proj_\mr{\gp}\Bw(\gd\equiv\gdEq)$ for which $\gp(t)=\gpEq\in\proj_{\mr{\gpEq}}\eqSet$ for all $t\in\nnreals$ implies that $\st(t)=\stEq\in\proj_{\mr{\stEq}}\eqSet$, where $(\stEq,\gdEq,\gpEq)\in\eqSet$. Consequently, we have that \cref{4_eq:storfunshiftproof2} holds, but with strict inequality except when $\st(t)=\stEq$, which by \cref{4_thm:shiftlyapstab} implies USAS.}{}

\extver{\proofsection{4_thm:veloqsrMI}\label[appendix]{4_pf:veloqsrMI}
If \cref{4_eq:veloMI} holds for all $(\st,\gd)\in\stSet\times\gdSet$, then, by pre- and post multiplication of \cref{4_eq:veloMI} with $\col(\stMap(\st,\gd),\gddot)^\top$ and $\col(\stMap(\st,\gd),\gddot)$, we get
\begin{multline}\label{4_eq:thmveloqsrmi}
	2(\stMap(\st,\gd)^\top \storquad (\stMap(\st,\gd))\left(\velA(\st,\gd)\stMap(\st,\gd) +\velB(\st,\gd)\gddot\right)-\\ \gddot^\top \supQ \gddot - 2 \gddot^\top \supS \big(\velC(\st,\gd)\stMap(\st,\gd) + \velD(\st,\gd)\gddot\big)-\\
(\star)^\top  \supR \big(\velC(\st,\gd)\stMap(\st,\gd) + \velD(\st,\gd)\gddot\big) \leq 0,
\end{multline}
all $\gddot\in\reals^\gdSize$ and $(\st,\gd)\in\stSet\times\gdSet$. In fact, \cref{4_eq:thmveloqsrmi} is
\begin{equation}\label{4_eq:velodissipproof}
	\frac{d}{dt}\storfunVelo(\dot{\st}(t))\leq \supfunVelo(\dot{\gd}(t),\dot{\gp}(t)),
\end{equation}
where $\storfunVelo$ is given by \cref{4_eq:velostorquad} and $\supfunVelo$ by \cref{4_eq:velosupply}. By integrating \cref{4_eq:velodissipproof} from $t_0$ to $t_1$ with $t_1\geq t_0\geq 0$, we obtain \cref{4_eq:velodissip}. Hence, the system given by \cref{4_eq:nonlinsys} is $\qsr$-VD.}{}

\proofsection{4_thm:velotoshiftstab}\label[appendix]{4_pf:velotoshiftstab}
For each equilibrium point $(\stEq,\gdEq,\gpEq)\in\eqSet$, consider
\begin{equation}
	\lyapfunShift(\st(t),\gdEq)=\lyapfunVelo(\stMap(\st(t),\gdEq))=\lyapfunVelo(\dot\st(t)). 
\end{equation}
This choice implies that $\storfunShift(\cdot,\gdEq)\in\posClass{\stEq}$ and $\storfunShift(\cdot,\gdEq)\in\C{1}$ as $\storfunVelo\in\posClass{0}$, $\storfunVelo\in\C{1}$, and $\stMap\in\C{1}$. Note that this requires uniqueness of the equilibrium points (see \cref{4_assum:uniqueEq}), as otherwise there exists multiple $\stEq$ for which $\lyapfunShift(\stEq,\gdEq)=0$. By this choice of $\lyapfunShift$, we have for each $(\stEq,\gdEq,\gpEq)\in\eqSet$, that
\begin{equation}\label{4_eq:shiftvelostab}
	\frac{d}{dt}\lyapfunShift(\st(t),\gdEq) = \frac{d}{dt}\lyapfunVelo(\dot{\st}(t))\leq 0,
\end{equation}
for all $t\in\nnreals$ and $\dot\st\in\proj_\mr{\dot{\st}}\Bvw(\gd\equiv \gdEq)$ and correspondingly for all $\st\in\proj_\mr{\st}\Bcw(\gd\equiv \gdEq)$. This implies that \cref{4_eq:shiftedstability} holds for all $\st\in\proj_\mr{\st}\Bcw(\gd\equiv\gdEq)$ and for all equilibrium points $(\stEq,\gdEq)\in\proj_\mr{\stEq,\gdEq}\eqSet$. Hence, by \cref{4_thm:shiftlyapstab}, the system is USS. The asymptotic stability version follows similarly by changing \cref{4_eq:shiftvelostab} to a strict inequality.

\proofsection{4_lem:velostab}\label[appendix]{4_pf:velostab}
If the system given by \cref{4_eq:nonlinsys} is VD w.r.t. a supply function $\supfunVelo$ which satisfies \cref{4_eq:supplystability} for all $\gpdot\in\reals^\gpSize$, then it holds that
\begin{equation}\label{4_eq:velodissippf1}
	\storfunVelo(\dot{\st}(t_1))-\storfunVelo(\dot{\st}(t_0))\leq \int_{t_0}^{t_1} \supfunVelo(0,\dot{\gp}(t))\,dt\leq 0,
\end{equation}
for all $t_0,t_1\in\nnreals$ with $t_1\geq t_0$ and $(\dot{\st},\dot{\gp})\in\proj_\mr{\dot\st,\dot\gp}\Bvset{\gdSetEq}$. As $\storfunVelo\in\C{1}$, this implies that 
\begin{equation}\label{4_eq:velodissippf2}
	\frac{d}{dt}\storfunVelo(\dot{\st}(t))\leq 0,
\end{equation}
for all $t\in\nnreals$ and $\dot{\st}\in\proj_\mr{\dot\st}\Bvset{\gdSetEq}$, which yields universal shifted stability through \cref{4_thm:velotoshiftstab}, with $\storfunVelo = \lyapfunVelo$.

For universal shifted asymptotic stability, $\supfunVelo$ satisfies \cref{4_eq:supplystability}, but with a strict inequality when $\gpdot\neq 0$. Note that $\dot\gp\in\Bv$ can be associated with any $\gp\in\Bc$ such that $\frac{d}{dt}\gp(t)=\dot\gp(t)$. As it is assumed that the system is observable, this means that $\dot\gp(t)=0$ for all $t\in\nnreals$ (corresponding to $\gp\in\proj_\mr{\gp}\Bcw(\gd\equiv\gdEq)$ for which $\gp(t)=\gpEq\in\proj_{\mr{\gpEq}}\eqSet$ for all $t\in\nnreals$) implies that $\st(t)=\stEq$, where $(\stEq,\gdEq,\gpEq)\in\eqSet$. Consequently, \cref{4_eq:velodissippf2} is satisfied with a strict inequality except when $\dot{\st}(t)=0$, which implies universal shifted asymptotic stability of the system according to \cref{4_thm:velotoshiftstab}.

\proofsection{4_thm:veloshiftperf}\label[appendix]{4_pf:veloshiftperf}
If the nonlinear system is VD w.r.t. the supply function $\supfunVelo(\dot\gd,\dot\gp) = \dot\gd ^\top \supQ \dot\gd+\dot\gp^\top \supR \dot\gp$, there exists a storage function $\storfunVelo$, such that for all $t_0,t_1\in\nnreals$ with $t_1\geq t_0$
\begin{equation}
	\storfunVelo(\dot\st(t_1))-\storfunVelo(\dot\st(t_0))\leq \int_{t_0}^{t_1} \dot\gd(t)\!^\top \supQ \dot\gd(t)+\dot\gp(t)\!^\top \supR \dot\gp(t)\,dt,
\end{equation}
for all $(\dot{\st},\dot{\gd},\dot{\gp})\in\Bv$, corresponding to $(\st,\gd,\gp)\in\Bc$. Hence, as $\storfunVelo(\dot \st(0))=\storfunVelo(0)=0$ and $\storfunVelo(\stdot)> 0,\,\forall\,\stdot\in\reals^{\stSize}\backslash\{0\}$ this implies that
\begin{equation}\label{4_eq:pf:qrvsp1}
	0\leq \int_{0}^{T} \dot\gd(t)\!^\top \supQ \dot\gd(t)+\dot\gp(t)\!^\top \supR \dot\gp(t)\,dt,
\end{equation}
for all $T>0$ and $(\dot{\st},\dot{\gd},\dot{\gp})\in\Bv$. 
Defining $\tilde \supQ=\frac{1}{\norm{\supQ}}\supQ$ and $\tilde \supR=\frac{1}{\norm{\supQ}}\supR$, it also holds that 
\begin{equation}\label{4_eq:pf:qrvsp1.1}
	0\leq \int_{0}^{T} \dot\gd(t)\!^\top \tilde \supQ \dot\gd(t)+\dot\gp(t)\!^\top \tilde \supR \dot\gp(t)\,dt,
\end{equation}
Next, using that $\stMap(\stEq)+\ltiB \gdEq = 0,\forall(\stEq,\gdEq,\gpEq)\in\eqSet$, and $\dot{\gp}(t) = \ltiC\dot{\st}(t)$, we have 
\begin{subequations}
\begin{align}
	\dot{z}^\top \tilde \supR \dot\gp &= \dot\st^\top \ltiC^\top \tilde \supR\, \ltiC\dot\st,\\
	&= (\star)\!^\top \tilde \supR\, \ltiC(\stMap(\st)+\ltiB \gd),\\
	&= (\star)\!^\top \tilde \supR\,\ltiC(\stMap(\st)+\ltiB \gd\underbrace{-(\stMap(\stEq)+\ltiB \gdEq)}_{=0}),\\
	&= (\star)\!^\top \tilde \supR\,\ltiC(\stMap(\st)-\stMap(\stEq)+\ltiB (\gd -\gdEq)).\label{4_eq:perfproofeq1}
\end{align}
\end{subequations}
Through, the fundamental theorem of calculus we have 
\begin{equation}
\begin{aligned}
	\stMap(\st)-\stMap(\stEq)&=\left(\int_0^1\Partial{\stMap}{\st}(\stEq+\var(\st-\stEq))\,d\var\right)(\st-\stEq), \\
	&= \underbrace{\left(\int_0^1 \velA(\stEq+\var(\st-\stEq))\,d\var\right)}_{\intA(\st,\stEq)}(\st-\stEq).
\end{aligned}
\end{equation}
Combining this with \cref{4_as:CB}, we can write \cref{4_eq:perfproofeq1} as
\begin{equation}
	\dot{z}^\top \tilde \supR \dot\gp = (\star)\!^\top \tilde \supR\,\ltiC\intA(\st,\stEq)(\st-\stEq).
\end{equation}
Next, by satisfying \cref{4_as:veloShiftBound2} for $T=\tilde\supR\preceq 0$, we have
\begin{multline}\label{4_eq:pf:qrvsp2}
	\dot\gp^\top \tilde \supR \dot\gp= (\star)^\top \tilde \supR\,\ltiC\intA(\st,\stEq)(\st-\stEq) \leq\\
	\alpha^{-1}(\star)^\top \tilde \supR\, \ltiC(\st-\stEq)= \alpha^{-1}(\star)^\top \tilde \supR(\gp-\gpEq),
\end{multline}
for every $\stEq\in\stSetEq$. Moreover, by \cref{4_as:wExoSys},
\begin{equation}\label{4_eq:pf:qrvsp3}
	\dot\gd^\top \tilde \supQ \dot\gd=(\star)\!^\top \tilde \supQ \exoA(\gd-\gdEq)\leq \beta^2 (\star)\!^\top \tilde \supQ (\gd-\gdEq),
\end{equation}
for a given $(\stEq,\gdEq,\gpEq)\in\eqSet$, where $\gd\in\exoBvr_{\gdEq}$ and $0\preceq\tilde \supQ \preceq I$.
Combining \cref{4_eq:pf:qrvsp1.1,4_eq:pf:qrvsp2,4_eq:pf:qrvsp3},
\begin{equation}
	\int_{0}^{T} \beta^2(\star)\!^\top \tilde \supQ (\gd(t)-\gdEq)+\alpha^{-1}(\star)\!^\top \tilde \supR (\gp(t)-\gpEq)\,dt\geq 0,
\end{equation}
holds for every $(\stEq,\gdEq,\gpEq)\in\eqSet$ and for all $T> 0$ and $(\gd,\gp)\in\proj_\mr{\gd,\gp}\Bc$ with $\gd\in\exoBvr_{\gdEq}$.
Hence, also 
\begin{equation}
	\int_{0}^{T} \beta^2(\star)\!^\top \supQ (\gd(t)-\gdEq)+\alpha^{-1}(\star)\!^\top  \supR (\gp(t)-\gpEq)\,dt \geq0,
\end{equation}
for all $T> 0$ and $(\gd,\gp)\in\proj_\mr{\gd,\gp}\Bc$ with $\gd\in\exoBvr_{\gdEq}$. 

\proofsection{4_cor:veloshiftl2}\label[appendix]{4_pf:veloshiftl2}
Based on \cref{4_thm:veloshiftperf} with $\qsr=(\perf^2I,0,-I)$, there exists a function $\icfunShift :\stSet\times\stSetEq \to\reals$, such that	\begin{multline}
	\int_{0}^{T} \perf^2\beta^2(\star)^\top (\gd(t)-\gdEq)-\alpha^{-1}(\star)^\top (\gp(t)-\gpEq)\,dt\geq0,
\end{multline}
for all $T>0$, $(\gd,\gp)\in\proj_\mr{\gd,\gp}\Bc$ with $\gd\in\exoBvr$, and for every $(\stEq,\gdEq,\gpEq)\in\eqSet$. This is equivalent to
\begin{equation}
\alpha^{-1} \int_{0}^{T}\norm{\gp(t)-\gpEq}^2 \,dt \leq \perf^2\beta^2 \int_{0}^{T}\norm{\gd(t)-\gdEq}^2\,dt
\end{equation}
\begin{equation}
\norm{\gp-\gpEq}_{2,T}^2 \leq \alpha\perf^2\beta^2 \norm{\gd-\gdEq}_{2,T}^2,
\end{equation}
for all $T>0$ and $(\gd,\gp)\in\proj_\mr{\gd,\gp}\Bc$ with $\gd\in\exoBvr$. Hence, this implies that for every $(\stEq,\gdEq,\gpEq)\in\eqSet$
\begin{equation}
	\norm{\gp-\gpEq}_{2,T} \leq \tilde\perf \norm{\gd-\gdEq}_{2,T},
\end{equation}
for all $T>0$ and $(\gd,\gp)\in\proj_\mr{\gd,\gp}\Bc$ with $\gd\in\exoBvr$ where $\tilde\perf = \alpha\perf^2\beta^2$.

\proofsection{4_lem:vpvembed}\label[appendix]{4_pf:vpvembed}
The LPV representation \cref{4_eq:vpv} is a VPV embedding of the system given by \cref{4_eq:nonlinsys} on the region $\stSetLPV\times\gdSetLPV= \stSet\times\gdSet$. Consequently, for any trajectory $(\dot\st,\dot\gd,\dot\gp)\in\Bv$, we also have that $(\dot\st,\dot\gd,\dot\gp)\in\Blpv[\schMap(\st,\gd)]$. Moreover, as $\stSetLPV\times\gdSetLPV= \stSet\times\gdSet$ and $\schMap(\stSetLPV,\gdSetLPV)\subseteq\schSet$, we obtain the following relation
\begin{multline}
\Bv = \bigcup_{(\st,\gd)\in\proj_{\mr{\st,\gd}}\B} \Blpv[\schMap(\st,\gd)]\subseteq\\ \bigcup_{(\st,\gd)\in(\stSet,\gdSet)^\nnreals} \Blpv[\schMap(\st,\gd)]\subseteq \bigcup_{\sch\in\schSet^\nnreals} \Blpv = \Blpvfull.
\end{multline}

\proofsection{4_thm:velodissiplpv}\label[appendix]{4_pf:velodissiplpv}
As the LPV representation \cref{4_eq:vpv} is a VPV embedding of the nonlinear system given by \cref{4_eq:nonlinsys} on the region $\stSetLPV\times\gdSetLPV= \stSet\times\gdSet$, we have by \cref{4_lem:vpvembed} that the LPV representation describes the full behavior of the velocity form \cref{4_eq:veloform}, i.e., $\Bv \subseteq \Blpvfull$. Consequently, if the LPV representation \cref{4_eq:vpv} is classically dissipative for all trajectories in $\Blpvfull$, we have that the velocity form is classically dissipative for all trajectories in $\Bv$, which corresponds to the nonlinear system being VD, see \cref{4_def:velodissip}.

\proofsection{4_thrm:veloICL2}\label[appendix]{4_pf:veloICL2}
As $\velogenplantLPV$ is a VPV embedding on the region $\stpSetLPV\times\ippSetLPV\subseteq\stpSet\times\ippSet$, we have through \cref{4_lem:vpvembed} that $\BvXU\subseteq\Blpvfull$. Consequently, through \cref{4_thm:velodissiplpv}, we have that $\ic{\velogenplant}{\velocontroller}$ with $\sch=\schMap(\stp,\ipp)$ for $\velocontroller$ is classically dissipative and has an \ltwo-gain bound 
	$\leq\perf$ for all $(\dot\stp,\dot\ipp)\in\proj_\mr{\dot\stp,\dot\ipp}\BvXU$.
	
\proofsection{4_thrm:timediff}\label[appendix]{4_pf:timediff}
We introduce the operators $\dt = \frac{d}{dt}$ and $\dt^{-1}\gd(t) = \int_{0}^t \gd(\tau)\,d\tau$. To integrate the inputs and differentiate the outputs of the nonlinear system given by \cref{4_eq:nonlinsys}, we introduce new input and output $\hat{\gd}$ and $\hat{\gp}$, respectively, with the following relations:
\extver{\vspace{-.5em}}{}\begin{equation}\label{4_eq:pfdifint}
    \gd(t) = \dt^{-1} \hat{\gd}(t), \qquad    \hat{\gp}(t) = \dt \gp(t),
\end{equation}
which exist as the solutions $(\st,\gd,\gp)\in\Bc$. Obviously, multiplication with $\dt$ or $\dt^{-1}$ is not commutative. We can then write \cref{4_eq:nonlinsys} as
\begin{subequations}
\begin{align}
    \dt \st(t) &= \stMap(\st(t),\dt^{-1} \hat{\gd}(t));\\
    \dt^{-1} \hat{\gp}(t) &= \opMap(\st(t),\dt^{-1} \hat{\gd}(t));
   \end{align}
\end{subequations}
which, by multiplication with $\dt$ to obtain the new output $\hat{\gp}$, results in
\begin{subequations}
\begin{align}
    \dt^2 \st(t) &= \dt \stMap(\st(t),\dt^{-1} \hat{\gd}(t));\\
    \hat{\gp}(t) &= \dt \opMap(\st(t),\dt^{-1} \hat{\gd}(t));
\end{align}
\end{subequations}
\begin{subequations}
\begin{align}
    \dt^2 \st(t) &= \frac{\partial \stMap(\st(t),\dt^{-1} \hat{\gd}(t))}{\partial x}\dt \st(t)+\notag\\&\phantom{bladie}\frac{\partial \stMap(\st(t),\dt^{-1} \hat{\gd}(t))}{\partial \gd}\dt \left(\dt^{-1}\hat{\gd}(t)\right);\\
    \hat{\gp}(t) &= \frac{\partial \opMap(\st(t),\dt^{-1}\hat{\gd}(t))}{\partial x}\dt \st(t)+\notag\\&\phantom{bladie}\frac{\partial \opMap(\st(t),\dt^{-1} \hat{\gd}(t))}{\partial \gd}\dt \left(\dt^{-1}\hat{\gd}(t)\right);
\end{align}
\end{subequations}
\begin{subequations}\label{4_eq:pflasteq}
\begin{align}
    \dt^2 \st(t) &= \frac{\partial \stMap(\st(t),\gd(t))}{\partial x}\dt \st(t)+\frac{\partial \stMap(\st(t),\gd(t))}{\partial \gd} \hat{\gd}(t);\\
    \hat{\gp}(t) &= \frac{\partial \opMap(\st(t),\gd(t))}{\partial x}\dt \st(t)+\frac{\partial \opMap(\st(t),\gd(t))}{\partial \gd} \hat{\gd}(t).
\end{align}
\end{subequations}
By using the definitions in \cref{4_eq:pfdifint}, we can express \cref{4_eq:pflasteq} as
\begin{subequations}\label{eq:velosys}
\begin{align}
    \ddot\st(t) &= \frac{\partial \stMap(\st(t),\gd(t))}{\partial x}\dot{\st}(t)+\frac{\partial \stMap(\st(t),\gd(t))}{\partial \gd} \dot{\gd}(t);\\
    \dot{\gp}(t) &= \frac{\partial \opMap(\st(t),\gd(t))}{\partial x}\dot{\st}(t)+\frac{\partial \opMap(\st(t),\gd(t))}{\partial \gd} \dot{\gd}(t).
\end{align}
\end{subequations}
From this, it is clear that we obtain the velocity form of \cref{4_eq:nonlinsys} given by \cref{4_eq:veloform}.

\vspace{-.5em}
\extver{\vspace{-.5em}}{}
\proofsection{4_thrm:controlrealiz}\label[appendix]{4_pf:controlrealiz}
First, the primal form  of the controller is realized. To realize the primal form we differentiate the input to the velocity controller $\ipkdot$ and integrate the output $\opkdot$ of the velocity controller $\velocontroller$ given by \cref{4_eq:veloContr}. Therefore, the following relations hold\vspace{-.5em}
\begin{equation}\label{4_eq:diffint}
\dt \opk = \opkdot, \qquad \dt \ipk = \ipkdot,
\end{equation}
where again $\dt = \frac{d}{dt}$ and $\dt^{-1}\gd(t) = \int_{0}^t \gd(\tau)\,d\tau$.
By simply rewriting \cref{4_eq:veloContr}, we get 
\begin{subequations}\label{4_eq:deltaK}
\begin{align}
    \dt \stkdot &= \lpvAk(\sch)\stkdot+\lpvBk(\sch) \dt \ipk;\label{4_eq:deltaKstk}\\
    \dt \opk &= \lpvCk(\sch)\stkdot+\lpvDk(\sch) \dt \ipk\label{4_eq:deltaKopk};
\end{align}
\end{subequations}
Based on \cref{4_eq:deltaKstk}, we can perform the following transformations
\begin{equation}
\dt \stkdot \!=\! \lpvAk(\sch)\stkdot+ \lpvBk(\sch) \dt \ipk + (\dt \lpvBk(\sch))\ipk-(\dt \lpvBk(\sch))\ipk,
\end{equation}
\begin{equation}
\dt \stkdot = \lpvAk(\sch)\stkdot+ \dt\left(\lpvBk(\sch)\ipk\right)-(\dt \lpvBk(\sch))\ipk,
\end{equation}
\begin{equation}
\dt \left(\stkdot-\lpvBk(\sch)\ipk\right) = \lpvAk(\sch)\stkdot - (\dt \lpvBk(\sch))\ipk, \label{4_eq:stkstuffpf}
\end{equation}
we then define $\stkA = \stkdot - \lpvBk(\sch)\ipk$, resulting in
\begin{equation}
    \dt \stkA = \lpvAk(\sch)\stkA+\lpvAk(\sch)\lpvBk(\sch)\ipk-(\dt \lpvBk(\sch))\ipk, 
\end{equation}
which can be rewritten to
\begin{equation}
    \dotstkA = \lpvAk(\sch)\stkA+\left(\lpvAk(\sch)\lpvBk(\sch)-\dif\lpvBk(\sch,\dot\sch)\right)\ipk, 
\end{equation}
where $\dif\lpvBk(\sch,\dot\sch) = \sum_{i=1}^\schSize\Partial{\lpvBk(\sch)}{\sch_i}\dot\sch_i$. 

A similar procedure can be used to rewrite \cref{4_eq:deltaKopk}, resulting in
\begin{equation}\label{4_eq:dotskb}
    \dotstkB = \lpvCk(\sch)\stkA+\left(\lpvCk(\sch)\lpvBk(\sch)-\dif\lpvDk(\sch,\dot\sch)\right)\ipk, 
\end{equation}
where $\stkB = \opk-\lpvDk(\sch)\ipk$ and $\dif\lpvDk(\sch,\dot\sch) = \sum_{i=1}^\schSize\Partial{\lpvDk(\sch)}{\sch_i}\dot\sch_i$. Due to the definition of $\stkB$, we then have that $\opk = \stkB+\lpvDk(\sch)\ipk$. Combining these results gives us the primal realization $\controller$:\begin{equation}\label{4_eq:controlller}
\left[\begin{NiceArray}{c}[margin]
   \dotstkA \\\dotstkB \\ \hdottedline \opk
    \end{NiceArray}\;\,\right]\!\!=\!\!
    \left[\begin{NiceArray}{cc:c}[margin]
    \lpvAk(\sch) & 0 & \lpvAk(\sch)\lpvBk(\sch)-\dif\lpvBk(\sch,\dot\sch)\\\lpvCk(\sch) & 0 &\lpvCk(\sch)\lpvBk(\sch)-\dif\lpvDk(\sch,\dot\sch)\\\hdottedline
    0 & I & \lpvDk(\sch)
    \end{NiceArray}\right]\!\!
    \left[\begin{NiceArray}{c}[margin]
    \stkA\\\stkB\\\hdottedline \ipk
    \end{NiceArray}\;\,\right]\!.
\end{equation}
From which \cref{4_eq:Contr} can be constructed by introducing $\stkus = \col(\stkA,\stkB)$ and using \cref{4_eq:shiftcontrmat}. 

Next, we show that the velocity form of \cref{4_eq:controlller} is equal to \cref{4_eq:veloContr}.
As $\ipkdot,\opkdot,\sch\in\C{1}$, we have by \cref{4_thrm:timediff} that the controller given by \cref{4_eq:Contr} with its inputs integrated and outputs differentiated is equal to its velocity form. 
Substituting the relations of \cref{4_eq:diffint} in the first row of \cref{4_eq:controlller} and rewriting it gives
\begin{equation}
	\dt \stkA = \lpvAk(\sch)\stkA+\lpvAk(\sch)\lpvBk(\sch)\dt^{-1}\ipkdot-\left(\dt \lpvBk(\sch)\right)\dt^{-1}\ipkdot,
\end{equation}
for which we can write
\begin{align}
&\begin{aligned}
\dt \stkA = &\lpvAk(\sch)\stkA+\lpvAk(\sch)\lpvBk(\sch)\dt^{-1}\ipkdot-\\&
\underbrace{\left(\dt \lpvBk(\sch)\right)\dt^{-1}\ipkdot-\lpvBk(\sch)\ipkdot}_{\dt\big(\lpvBk(\sch)\dt^{-1}\ipkdot\big)}
	+\lpvBk(\sch)\ipkdot,
\end{aligned}\end{align}\begin{align}
&\dt \big(\stkA+\lpvBk(\sch)\dt^{-1}\ipkdot\big)= \notag\\&\phantom{blablabla}\lpvAk(\sch)\left(\stkA+\lpvBk(\sch)\dt^{-1}\ipkdot\right)+\lpvBk(\sch)\ipkdot,\label{4_eq_pf:stk-rewr-rev}
\end{align}  
then, by defining $\stkdot = \stkA+\lpvBk(\sch)\dt^{-1}\ipkdot$, \cref{4_eq_pf:stk-rewr-rev} results in
\begin{align}\label{eq:diffKx}
	\dt \stkdot &= \lpvAk(\sch)\stkdot+\lpvBk(\sch)\ipkdot.
\end{align}
Similarly, based on the second row of \cref{4_eq:controlller}, we can find that
\begin{multline}\label{4_eq:KxhatRewr} 
\dt \big(\stkB+\lpvDk(\sch)\dt^{-1}\ipkdot\big)=\\\lpvCk(\sch)\left(\stkA+\lpvBk(\sch)\dt^{-1}\ipkdot\right)+\lpvDk(\sch)\ipkdot.
\end{multline}
Now, by using that $\stkB+\lpvDk(\sch)\dt^{-1}\ipkdot=\stkB+\lpvDk(\sch)u=\opk$ based on the third row of \cref{4_eq:controlller} we can employ $\stkdot = \stkA+\lpvBk(\sch)\dt^{-1}\ipkdot$ to rewrite \cref{4_eq:KxhatRewr} to
\begin{equation}\label{4_eq:diffKy}
	\dt \opk = \lpvCk(\sch)\stkdot+\lpvDk(\sch)\ipkdot. 
\end{equation}
Finally, by combining \cref{eq:diffKx} and \cref{4_eq:diffKy}, we arrive at the velocity form of the controller
\begin{subequations}\label{4_eq:diffcontrintrho}
	\begin{align}
		\dot\stkdot &= \lpvAk(\sch)\stkdot+\lpvBk(\sch)\ipkdot;\\
		\opkdot &= \lpvCk(\sch)\stkdot+\lpvDk(\sch)\ipkdot;
	\end{align}
\end{subequations}
which is equivalent with \cref{4_eq:veloContr}.

\vspace{-1em}\proofsection{4_thrm:clpls2}\label[appendix]{4_pf:clpls2}
For our generalized plant $\genplant$ given by \cref{4_eq:genplant} with behavior $\Bc$ and velocity form $\velogenplant$ given by \cref{4_eq:velogenplant}, we have by \cref{4_thrm:veloICL2} that $\velocontroller$ given in \cref{4_eq:veloContr} ensures classical dissipativity and a bounded \ltwo-gain of $\perf$ of the closed-loop $\ic{\velogenplant}{\velocontroller}$ on $\stpSetLPV\times\ippSetLPV$. Moreover, we consider the set $\tilde\gdSet\subseteq \gdSet$, for which $\stclSet = \stpSetLPV \times \stkSet $ is invariant, meaning  that for any  $\gd\in\tilde{\gdSet}^\nnreals$, the resulting $(\stp(t),\ipp(t))\in\stpSetLPV\times\ippSetLPV ,\, \forall\,t\in \nnreals$. Hence, we will remain in the design set on which classical dissipativity and a bounded \ltwo-gain of the velocity form are ensured. 

By \cref{4_thrm:controlrealiz}, we have that the velocity form of $\controller$ \cref{4_eq:Contr} is given by $\velocontroller$. Consequently, by \cref{4_thrm:veloic}, the velocity form of $\ic{\genplant}{\controller}$ is given $\ic{\velogenplant}{\velocontroller}$. Hence, $\ic{\genplant}{\controller}$ is $\qsr$-VD with $\qsr=(\perf^2I,0,-I)$ for $(\stp,\ipp)\in\BcXU$, which by \cref{4_prop:veloshifteddissip} implies that $\ic{\genplant}{\controller}$ is $\qsr$-USD for $\qsr=(\perf^2I,0,-I)$ for all $\gd\in{\tilde\gdSet}^\nnreals\cap\ltwoe$ and any $\gdEq\in\gdSetEq\cap\tilde\gdSet$. Hence, by \cref{4_lem:ls2gaindissip}, the (primal form of the) closed-loop system has a bounded \lstwo-gain of $\perf$ and, by \cref{4_thm:shiftdissipstab}, it is USAS for all $\gd\in{\tilde\gdSet}^\nnreals\cap\ltwoe$ and any $\gdEq\in\gdSetEq\cap\tilde\gdSet$.

\extver{\vspace{-1em}}{}
\proofsection{4_cor:intrealiz}\label[appendix]{4_pf:intrealiz}
For the realization of the controller in \cref{4_thrm:controlrealiz}, the input to the velocity controller is time-differentiated, which can be seen as appending a differentiator to the input of the velocity controller, see also \cref{4_fig:controlrel}. The integration filter $\weightint$, given by $\weightint(s) = \frac{s+\alpha}{s}$, is also connected to the input of the controller as depicted in \cref{4_fig:contrAndInt}. As differentiation in time can be expressed in the Laplace domain as $s$, we have that the interconnection of the weighting filter and differentiator is given by $s \cdot\weightint(s) = s \frac{s+\alpha}{s} = s+\alpha$. Hence, as $\opp = \ipk$, in the proof of \cref{4_thrm:controlrealiz}, $\dt \ipk = \ipkdot$ in \cref{4_eq:diffint} becomes $\dt\ipk + \alpha \ipk = \ipkdot$. Consequently, 
\cref{4_eq:deltaKstk} becomes
\begin{equation}
	\dt \stkdot = \lpvAk(\sch)\stkdot+\lpvBk(\sch) (\dt \ipk+\alpha \ipk),
\end{equation}
and we can write, similarly as \cref{4_eq:stkstuffpf}, that
\begin{equation}
	\dt \!\left(\stkdot-\lpvBk(\sch)\ipk\right) \!= \!\lpvAk(\sch)\stkdot - (\dt \lpvBk(\sch))\ipk+(\lpvBk(\sch)\alpha) \ipk.
\end{equation}
Defining again that $\stkA = \stkdot - \lpvBk(\sch)\ipk$, we obtain 
\begin{equation}
    \dotstkA = \lpvAk(\sch)\stkA+\left(\lpvAk(\sch)\lpvBk(\sch)+\lpvBk(\sch)\alpha I-\dif\lpvBk(\sch,\dot\sch)\right)\ipk, 
\end{equation}
Similarly, \cref{4_eq:dotskb} becomes
\begin{equation}
    \dotstkB = \lpvCk(\sch)\stkA+\left(\lpvCk(\sch)\lpvBk(\sch)+\lpvDk(\sch)\alpha I-\dif\lpvDk(\sch,\dot\sch)\right)\ipk.
\end{equation}
Then, along the same lines as in the proof of \cref{4_thrm:controlrealiz}, we obtain $\lpvAkus$, $\lpvCkus$, and $\lpvDkus$ as in \cref{4_eq:shiftcontrmat}, and $\lpvBkus$ given by \cref{4_eq:Bcontr}.
\extver{\vspace{-1em}}{}
\section{Conversion to a coarser structure}\label[appendix]{sec:app-coarse-sys}
Consider a system given by \cref{4_eq:nonlinsys} to which we serially connect (low-pass) filters, denoted by $\{\filter_i\}_{i=1}^2$, in the form of
\begin{subequations}\label{b_gpconv:lpfilt}
	\begin{align}
		\dot\stf{i}(t) &= -\Omega_i \,\stf{i}(t)+\Omega_i\,\ipf{i}(t);\\
		\opf{i}(t)&= \stf{i}(t);
	\end{align}
\end{subequations}
where $\Omega_i = \diag(\omega_{1,i},\dots,\omega_\stfSize{i})$ with $\omega_{i,j}>0$ for $j=1,\dots,\stfSize{i}$ and $\stf{i}(t)\in\reals^\stfSize{i}$.

Connecting $\filter_1$ and $\filter_2$ as given by \cref{b_gpconv:lpfilt} to \cref{4_eq:nonlinsys}, such that $\gd=\opf{1}$ and $\ipf{2}=\gp$, results in
\begin{subequations}\label{bconv_eq:nonlinSysFilt}
	\begin{align}
		\dot\stf{1}(t) &= -\Omega_1 \stf{1}(t)+\Omega_1 \hat \gd(t);\\
		\dot\stf{2} &= -\Omega_2 \stf{2}(t)+\Omega_2 \opMap(\st(t),\stf{1}(t));\\
		\dot\st(t) &= \stMap(\st(t),\stf{1}(t));\\
		\hat \gp(t) &= \stf{2}(t).
	\end{align}
\end{subequations}
The system represented by \cref{bconv_eq:nonlinSysFilt} with input $\hat \gd$, output $\hat \gp$, and state $\col(\stf{1},\stf{2},\st)$ is of the form 
\begin{subequations}\label{bconv_eq:nonlinsysState}
\begin{align}
	\dot \st(t) &= \stMap(\st(t))+\ltiB \gd(t);\\
	\op(t) &= \ltiC \st(t).
\end{align}
\end{subequations}
For \cref{bconv_eq:nonlinSysFilt} we have that $\ltiB=\begin{bmatrix}
	\Omega_1^\top & 0 & 0
\end{bmatrix}^\top $ and $\ltiC = \begin{bmatrix}
	0 & I & 0
\end{bmatrix}$, and consequently $CB=0$.

\section*{References}\vspace{-1.5em}
\bibliographystyle{IEEEtran}
\bibliography{bibFull,bibtex_db_auth}{}

\begin{IEEEbiography}[{\includegraphics[width=1in,height=1.25in,clip,keepaspectratio]{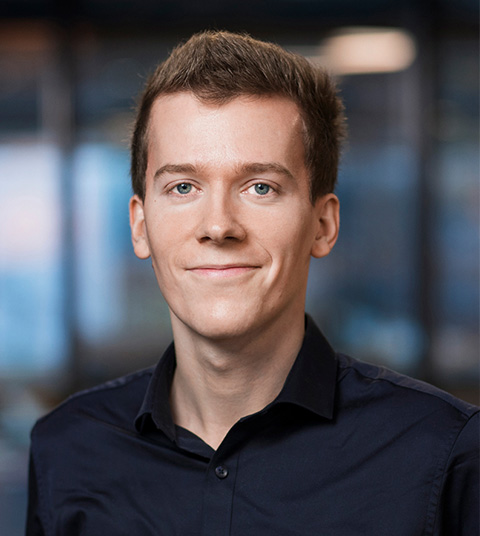}}]{Patrick J.W. Koelewijn} received his Master's and Ph.D. degrees at the Control Systems Group, Department of Electrical Engineering, Eindhoven University of Technology, both cum laude, in 2018 and 2023, respectively. During his Master's degree, he spent three months at the Institute of Control Systems at the Hamburg University of Technology (TUHH). He is currently a postdoctoral researcher at the Control Systems Group, Department of Electrical Engineering, Eindhoven University of Technology. His main research interests include the analysis and control of nonlinear and linear parameter-varying (LPV) systems, optimal and nonlinear control, and machine learning techniques.\vspace{-1em}
\end{IEEEbiography}

\begin{IEEEbiography}[{\includegraphics[width=1in,height=1.25in,clip,keepaspectratio]{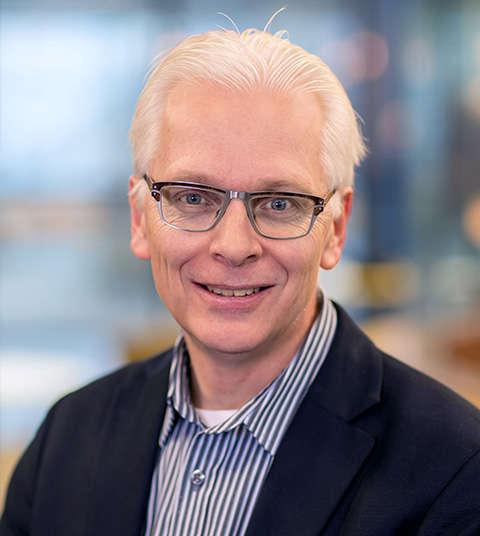}}]{Siep Weiland} received the M.Sc. (1986) and Ph.D. (1991) degrees in mathematics from the University of Groningen, The Netherlands. He was a Postdoctoral Research Associate at the Department of Electrical Engineering and Computer Engineering, Rice University, Houston, USA, from 1991 to 1992. Since 1992, he has been affiliated with Eindhoven University of Technology, Eindhoven, The Netherlands. He is a Full Professor at the same university with the Control Systems Group, Department of Electrical Engineering. His research interests are the general theory of systems and control, robust control, model approximation, modeling and control of spatial-temporal systems, identification, and model predictive control. 
\end{IEEEbiography}

\begin{IEEEbiography}[{\includegraphics[width=1in,height=1.25in,clip,keepaspectratio]{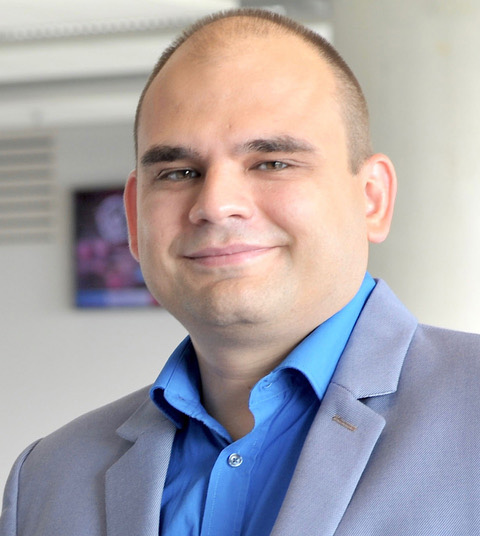}}]{Roland T\'oth} received his Ph.D. degree with cum laude distinction at the Delft Center for Systems and Control (DCSC), Delft University of Technology (TUDelft), Delft, The Netherlands in 2008.  He was a Post-Doctoral Research Fellow at TUDelft in 2009 and Berkeley in 2010. He held a position at DCSC, TUDelft in 2011-12. Currently, he is an Associate Professor at the Control Systems Group, Eindhoven University of Technology and a Senior Researcher at SZTAKI, Budapest, Hungary. His research interests are in identification and control of linear parameter-varying (LPV) and nonlinear systems, developing machine learning methods with performance and stability guarantees for modelling and control, model predictive control and behavioral system theory.\vspace{-1em}
\end{IEEEbiography}

\end{document}